\documentclass[11pt]{article}

\usepackage{appendix}
\usepackage{abstract}
\usepackage{float}
\usepackage{setspace} 
\usepackage{newpxtext} % Provides Palatino font shapes, including small caps italic

\usepackage{amsthm}
\usepackage{nicefrac}
\usepackage{comment}
\usepackage{mathtools}

\usepackage[square,numbers]{natbib}
\newcommand{\parencite}{\citep}

\usepackage{caption}
\usepackage[skip=2pt]{subcaption}
\usepackage{algpseudocode}
\usepackage{algorithm}
\usepackage{placeins}

\usepackage[sc]{mathpazo}

\usepackage{enumitem}
\setlist[itemize]{noitemsep, topsep=3pt}

\newenvironment{small_align}
{\par\vspace{-14pt}\footnotesize\csname align\endcsname}
 {\csname endalign\endcsname\par\vspace{-5pt}\ignorespacesafterend}
 
\newenvironment{small_align*}
{\par\vspace{-14pt}\footnotesize\expandafter\csname align*\endcsname}
 {\expandafter\csname endalign*\endcsname\par\vspace{-5pt}\ignorespacesafterend}

\newenvironment{small_equation}
{\par\vspace{-5pt}\footnotesize\equation}
{\endequation\par\vspace{-1pt}\ignorespacesafterend}

\newenvironment{small_equation*}
{\par\vspace{-5pt}\footnotesize\csname equation*\endcsname}
{\csname endequation*\endcsname\par\vspace{-1pt}\ignorespacesafterend}

\usepackage{thmtools}

\declaretheoremstyle[
  spaceabove=3pt, % Space before the proof
  spacebelow=3pt  % Space after the proof
]{tightproof}

\declaretheoremstyle[
    headfont=\bfseries,      % Bold "Theorem 1"
    notefont=\bfseries,      % Bold the optional name in brackets
    notebraces={(}{)},
    bodyfont=\itshape,       % Italic body text
    spaceabove=5pt,          % Tighter space above
    spacebelow=5pt,          % Tighter space below
    postheadspace=1em        % Space after the "Theorem 1." label
]{tightstyle}

\declaretheorem[name=Theorem, numberwithin=section]{theorem}
\declaretheorem[style=tightstyle, name=Lemma, numberwithin=section]{lemma}
\declaretheorem[style=tightstyle, name=Remark, numberwithin=section]{remark}

\declaretheorem[style=tightstyle, name=Proposition, numberwithin=section]{proposition}

\BeforeBeginEnvironment{proof}{\smallskip}

\AfterEndEnvironment{proof}{\smallskip}

\newcommand\restr[2]{{% we make the whole thing an ordinary symbol
  \left.\kern-\nulldelimiterspace % automatically resize the bar with \right
  #1 % the function
  \littletaller % pretend it's a little taller at normal size
  \right|_{#2} % this is the delimiter
  }}
\newcommand{\conv}{\operatorname{conv}} % 
\newcommand{\epi}{\operatorname{epi}}
\newcommand{\littletaller}{\mathchoice{\vphantom{\big|}}{}{}{}}

\newcommand{\tmp}[1]{#1}

\usepackage{authblk}
\title{On the Delay-Constrained Maximum Concurrent Flow Problem}

\author[1]{Walid Ben-Ameur \texttt{(walid.benameur@telecom-sudparis.eu)}}
\author[2,3]{Guillaume Beraud-Sudreau  (corresponding author \texttt{guillaume.beraud.sudreau@huawei.com})}
\author[3]{Herv\'e Kerivin \texttt{(herve.kerivin@uca.fr)}}
\author[2]{Sebastien Martin \texttt{(sebastien.martin@huawei.com)}}
\affil[1]{SAMOVAR, Télécom SudParis, Institut Polytechnique de Paris,{\scriptsize 19 place Marguerite Perey
91120 Palaiseau, France}}
\affil[2]{Huawei Technologies Ltd., Paris Research Center, {\scriptsize  18 Quai du Point du Jour, 92100 Boulogne-Billancourt, France}}
\affil[3]{LIMOS, Université Clermont Auvergne, CNRS, {\scriptsize  1 Rue de la Chebarde, 63178 Aubière, France}}

\date{March 2025}

\begin{document}
%\Large

\maketitle

%\begin{keywords}
%~OR in telecommunications, Nonlinear programming, Global Optimization
%\end{keywords}

\bigskip

\begin{abstract}
~Real-time services, such as VoIP and large-scale neural network training, require strict transmission delay guarantees. While routing under hop constraints is tractable, real-world delays increase sharply with equipment load, typically modeled using the M/M/1 queuing function where delay is inversely proportional to available bandwidth. We investigate the resulting Delay-Constrained Maximum Concurrent Flow (DCMCF) problem, which seeks to maximize the minimum throughput across all commodities. The problem's complexity stems from the conditional and non-linear nature of the delay constraints, which are active only along the specific paths used by the flow. We prove that DCMCF is strongly NP-hard, even for single-source/single-destination instances. To address the inherent non-convexity of the problem, we introduce a new convex relaxation expressed through second-order cone constraints, obtained from the convex envelope of a function representing the conditional delay associated with a single arc of a given path. 
The relaxation is shown to outperform existing formulations based on disjunctive programming. 
Leveraging this result, we develop a polynomial-time approximation algorithm with a provable performance guarantee and present numerical experiments demonstrating the effectiveness of the proposed approach.
\end{abstract}

\section{Introduction}

Over the past two decades, the expansion of cloud computing has intensified the demand for predictable end-to-end latency in telecommunications. This requirement is now critical across the spectrum, from consumer-facing applications like gaming and live streaming to large-scale enterprise operations, such as the distributed training of large language models across multiple data centers \parencite{dong2020tina}.

This requirement can be approached as a traffic-engineering problem, which is formulated as a multi-commodity flow problem, where the telecommunication network is represented by a directed graph $D = (V,A)$ and end-to-end communications are represented as commodities \parencite{wang2008}. 
The terms {vertex} and {node} are used interchangeably to denote an element of $V$, whereas {arc} and {link} refer to an element of $A$.
 An arc $a$ directed from $u$ to $v$ is denoted by $(u,v)$ and has a capacity $c_a>0$ which could also be denoted $c_{(u,v)}$. We consider a set $K$ of commodities with, for each commodity $k \in K$, a source $s^k$, a destination (or target) $t^k$ and a size (or volume) $b^k$. We also use $P^k$ to denote a given nonempty subset of paths from $s^k$ to $t^k$.  For $p \in P^k$, $x^k_p$  is the proportion of the commodity $k \in K$ carried through the path $p$.
A given value $d^k$   represents the maximum allowed delay for commodity $k$. 
The delay through an arc $a$, denoted by $d_a$, is given by the average delay function $d_a = \nicefrac{1}{(c_a - y_a)}$ following from the classical  M/M/1 queuing model \parencite{kleinrock1975theory}, where $y_a$ denotes the flow (or load) of $a$. 
Consequently, for any path $p \in P^k$ carrying positive flow (i.e., if $x^k_p >0$), the total delay through $p$ should be less than or equal to $d^k$.  
The Delay-Constrained Maximum {Concurrent} Flow problem (denoted {DCMCF}) can then be expressed as follows: 
\begin{small_align}
  \mbox{DCMCF:} \qquad & \max \gamma & \nonumber \\
    & {\sum_{p \in P^k} x^k_p  } = \gamma & \forall k \in K 
    \label{eq:dcmf:gamma}\\
    & y_a \geq \sum_{k \in K, p \in P^k \mid a \in p} x^k_p b^k  & \forall a \in A \label{eq:dcmf:xa}\\
    & \sum_{a \in p} \frac{1}{c_a - y_a} \leq d^k  \qquad \text{if } x^k_p > 0   & \forall k \in K, p \in P^k  \label{eq:dcmf:delay}\\
    & y_a \leq c_a                         & \forall a \in A \label{eq:dcmf:ca}\\
    & x^k_p \geq 0                      & \forall k \in K, p \in P^k. \label{eq:dcmf:pos}
\end{small_align}
Observe that $\gamma$ represents the throughput factor, or maximum proportion of each commodity that can be carried through the network. Constraints \eqref{eq:dcmf:gamma} establish the connection between $\gamma$ and the $x^k_p$ proportions ; the equality sign used in this formulation could be equivalently replaced by an inequality without changing the optimal solution of this problem. Constraints \eqref{eq:dcmf:xa} relate the flow on an arc $a\in A$ to the sum of flows on paths containing $a$.   
The delay constraints are stated  in \eqref{eq:dcmf:delay}. Observe that each Constraint \eqref{eq:dcmf:delay} is active if and only if the corresponding path is used (i.e., if $x^k_p > 0$).
Constraints \eqref{eq:dcmf:ca} introduce capacity limitations; note that, because of the delay constraint, the load of each arc is always strictly lower than its capacity.
Observe that the delay along a path $p \in P^k$ is at least $\sum_{a \in p} \frac{1}{c_a}$. We may therefore assume that $\sum_{a \in p} \frac{1}{c_a} < d^k$ for each path $p \in P^k$, since any path violating this condition can be discarded.\\
After multiplication by $x^k_p$, \eqref{eq:dcmf:delay} can be reformulated as:
\begin{small_equation}
    \sum_{a \in p} \frac{x^k_p}{c_a - y_a} \leq x^k_p d^k \qquad \forall k \in K, p \in P^k. \label{eq:dcmf:delay_time_xkp}
\end{small_equation}
It is worth highlighting that, while this reformulation of the delay constraint incorporates the path-usage condition into a differentiable analytical form, the resulting constraint is not convex.

\medskip

We can now introduce some relevant notation for the convexity analysis of functions. Given a bounded real-valued function $f$ defined on 
a compact nonempty set $\mathcal{X}$,  
the convex envelope of
 $f$ over  $\mathcal{X}$ is the function $\check{f}: \mathcal{X}\ni x
\mapsto \check f(x) = \inf \{u: (x,u) \in \conv(\epi (f)) \}$, where $\epi (f) =
\{(x,u): f(x) \le u\}$ and $\conv(.)$ of a set represents its convex hull. In other words, $\check{f}$ is the highest convex under-estimator of $f$ over $\mathcal{X}$. 

If this is not clear from the context, the subscript $\mathcal{X}$ can be added to emphasize that we are considering the convex envelope over $\mathcal{X}$, thus leading to the notation $\check f_{\mathcal{X}}$. 

One can also express $\check{f}(x)$  by considering a convex combination over   $\mathcal{X}$ representing $x$ and  minimizing the combination of $f$'s values: 
\begin{small_align}  \check f_{\mathcal{X}}(x) =  & \inf\limits_{\lambda_i} 
 \sum\limits_{v_i \in \mathcal{X}} \lambda_i 
 f(v_i)  \nonumber \\
  \sum_{v_i \in \mathcal{X}} \lambda_i v_i=& x,\quad \sum\limits_{v_i \in \mathcal{X}} \lambda_i = 1, \lambda_i
\ge 0.  \nonumber
\end{small_align}
%\end{comment}
\subsection{Related work}

\subsubsection*{Maximum concurrent flow}  
The maximum concurrent flow problem without delay constraints can be formulated as a linear program and, in principle, solved using standard optimization tools \parencite{ahuja1988network}. Nevertheless, even this version—without delay constraints—is known to give rise to  challenging computational problems (see, e.g., \parencite{B01,BR02}). This led many researchers to design  approximation algorithms for the problem \parencite{BR02,Fleischer00, GargK07,KAR08,LR99,MS86}. Furthermore, several decomposition methods are proposed in~\parencite{Bauguion} to improve computational efficiency. In the special case where all commodities share a common source, a strongly polynomial-time combinatorial algorithm is also presented in~\parencite{Bauguion}.\\
Regarding the theoretical significance of the maximum concurrent flow problem, it can be interpreted as the dual of the linear programming relaxation of the sparsest cut problem. A sparsest cut is defined as a cut that minimizes the ratio between the capacity of the cut and the total demand separated by it. Many approximation algorithms for the sparsest cut problem are based on solutions to the maximum concurrent flow problem (see, e.g., \parencite{S97}).

\subsubsection*{Flows with delay considerations}
When the delay on each link is constant and identical across all links, incorporating delay constraints is equivalent to imposing a bound on the number of links in the selected path. This  can be handled efficiently via dynamic programming, by solving shortest-path problems under hop constraints to generate feasible paths \parencite{ahuja1988network,Pioro}.\\
When the delay on a link is given by $1/(c_a - y_a)$, and the goal is to to minimize the average end-to-end delay of the demands over a network, the resulting objective function is then $
\sum_{a} {y_a}/{(c_a - y_a)}$,
which can be minimized within a convex optimization framework.
The survey~\parencite{ouorou2000survey} provides a comprehensive overview of solution approaches for convex-cost multi-commodity flow problems, in which the objective is to minimize the total cost—defined as the sum of convex arc costs—while satisfying all demand requirements. \\
The average end-to-end delay minimization problem has also been considered with the additional constraint that each flow should be unsplittable. In \parencite{yen2001}, a near-optimal solution is computed through a Lagrangian relaxation when delay constraints follow Kleinrock functions, while  \parencite{fortz2017} proposes a branch-and-bound approach to solve this problem when average delays are approximated by piecewise-linear functions. A  branch-and-price method is described in \parencite{beraudsudreau2026multi} to solve the problem for convex  delay functions.

The work of~\parencite{beker2003}, that formalizes delay constraints and provides some heuristic solutions, is one of the earliest and most closely related contributions to the present study. 
The cost version of the DCMCF problem was further investigated in~\parencite{ben2006mathematical}, where its weak NP-hardness was established through a reduction from the 2-Partition problem. In addition, convex relaxations were proposed to effectively handle the delay constraints. \\
A relaxation based on 
a disjunctive formulation of the delay constraints is proposed in \parencite{hijazi2012} ; more details will be provided in Section \ref{sec:hijazi}. 
 Furthermore, \parencite{hijazi2010outer} considers an outer-approximation of a big-M reformulation of the delay constraints. The same authors study in \parencite{hijazi2013robust}  a robust version of the problem, where a random component is added to the delay function.
While these works on the delay-constrained multi-commodity flow problem considered a set of paths given as input, in \parencite{truffot2010k-splittable}, the authors study a branch-and-price approach to solve a more general problem where the paths are built through an arc-node representation of the problem. In another approach, \parencite{duhamel2007augmented} and \parencite{papadimitriou2024augmented} provide efficient ways to compute locally optimal solutions of the delay-constrained multi-commodity flow problem.

Another form of delay constraints, where delays are proportional to link loads, has been considered in \parencite{bonami2017maximum}, where the authors prove the NP-completeness of the problem and provide polynomial approximations under assumptions on the paths and graph topologies.

\subsubsection*{On convex envelopes}

The convex relaxation of the delay constraints in Section \ref{sec:env} relates to the extensive literature on convex envelopes and, more generally, on convex under-estimator computations.
The best known and probably most widely used convex-estimator is obtained through the McCormick envelopes of the bilinear function  $xy$ on a rectangular domain \parencite{mccormick1976computability}. 
Several studies investigate the convex envelopes of function classes over special domains. We review here the works that are most closely related to the material presented in Section \ref{sec:env}.
 \parencite{sherali1990explicit} studies bilinear functions over ``D-polytopes" where a D-polytope is such that all edges have a non-negative slope. Notice that the function studied in Section \ref{sec:env} is not bilinear and the polytope considered there is not a D-polytope. 

More recently, \parencite{locatelli2018convex} studied the convex envelope of bivariate functions over polytopes under the assumption that the Hessian matrix is indefinite and that the restriction of the function along each edge is either concave or strictly convex. Their approach relies on solving a convex optimization problem. \\
Another approach proposed by the same author in \parencite{Loca16} is based on a polyhedral subdivision of the polytope and on the solution of a number of subproblems that grows exponentially with the number of vertices of the polytope. The special case of fractional functions is also investigated in \parencite{Loca16}. Since the function $x/(1-y)$ considered in Section \ref{sec:env} is fractional and satisfies all the properties mentioned above, the approach of \parencite{Loca16} could, in principle, also be used to compute its convex envelope. Nevertheless, we propose a more direct approach to derive the convex envelope, together with simple and self-contained proofs.

\subsection{Contributions and paper organization}

The remainder of the paper is organized as follows.

In Section~\ref{sect:complexity}, we analyze the computational complexity of the DCMCF problem and prove that it is strongly NP-hard, even when restricted to three commodities that share a common source and destination and are allowed to use all possible paths.
Section~\ref{sec:env} introduces a novel convex relaxation of the delay constraints. The resulting formulation is expressed through second-order cone constraints obtained from the convex envelope of the fractional function $x/(1-y)$.
We then show in Section~\ref{sec:alternative_relax} that our formulation outperforms the disjunctive programming approach proposed in~\parencite{hijazi2012}.
 Section~\ref{sec:low} is dedicated to theoretical analysis of heuristics for the DCMCF problem. We first show that heuristics based on convex over-estimators are essentially equivalent to explicitly selecting paths and solving a convex optimization problem defined over flow variables associated with those paths. We then propose several heuristics, including one derived from the new convex relaxation, and establish provable performance guarantees for it.
Finally, Section~\ref{sec:num_experiments} reports numerical experiments comparing the performance of the different relaxations and heuristics.

\section{Complexity \label{sect:complexity}}

We prove that the decision problem associated with DCMCF is strongly NP-hard, even when there are at most three commodities, when all commodities share the same source and the same sink, and when the available path sets $P^k$ include all possible paths for every commodity $k \in K$.

This provides a stronger characterization than existing results in the literature. Previous work by \parencite{ben2006mathematical} proved that the problem is weakly NP-hard, using a reduction from a 2-partition problem with a large number of commodities having distinct sources and destinations. Another form of delay constraints, where delays are proportional to link loads, was considered in \parencite{bonami2017maximum}. The authors proved the NP-completeness of that variant and provided polynomial approximations under specific assumptions on the paths and graph topologies.

\bigskip

We consider a reduction from the \textsc{3-Partition} problem \parencite{garey1975complexity}. This problem is defined as follows: decide whether a set $I$ of $n$ objects with sizes $\{w_i\}_{i \in I}$ can be partitioned into $m = n/3$ disjoint subsets (triplets) $\{T_j\}_{j \in J}$, such that each subset has the same sum $W/m$, where $W = \sum_{i \in I} w_i$.

To prove the NP-hardness of DCMCF, we build an instance of DCMCF that admits a maximum concurrent flow of value $\gamma \geq 1$ if and only if the given \textsc{3-Partition} instance is of \emph{True} type.

Since scaling all object weights by the same positive factor does not affect the \textsc{3-Partition} feasibility, we assume without loss of generality that the total weight is $W = \frac{1}{(2n)^2}$. We also assume that $w_i \leq W/m$ for each object $i \in I$, as the \textsc{3-Partition} instance is otherwise trivially infeasible, and that $m \geq 2$ (or $n \geq 6$) --- as the \textsc{3-Partition} problem involving a single triplet is trivially feasible. Finally, we denote $N = (2n)^2$. 

Given such a \textsc{3-Partition} instance, we construct a DCMCF instance on a directed graph $D=(V,A)$ as illustrated in Figure \ref{man:fig:dc-mcf:complexity}.

\begin{figure}[ht]
\centering
\includegraphics[page=5,trim={3cm 0.5cm 3cm 1.5cm},clip, width=0.8\textwidth]{images_CDMCF.pdf}
\caption{Instance of the DCMCF problem associated with the \textsc{3-Partition} problem of a set $\{w_i\}_{i \in I}$. Only selected arcs, vertices, and capacities are shown. All dashed arcs have a capacity of 2.}
\label{man:fig:dc-mcf:complexity}
\end{figure}

\medskip

The set of vertices $V$ consists of the following elements:
\begin{itemize}
    \item A \textbf{source} vertex $s$ and a \textbf{destination} vertex $t$.
    \item \textbf{Object vertices:} a first layer of $n$ vertices $\{u_i\}_{i \in I}$, where each vertex corresponds to an object in the \textsc{3-Partition} instance.
    \item \textbf{Triplet vertices:} a second layer of $m = n/3$ vertices $\{v_j\}_{j \in J}$, where each vertex corresponds to a target triplet $j \in J$.
    \item \textbf{Chains of vertices:} for each $j \in J$, a sequence of vertices $\{v_j^l\}_{l \in \{1,\dots,N\}}$. We will prove that the subpath traversing these vertices is accessible to exactly one commodity, which will force its routing and flow.
\end{itemize}

\medskip

The set of arcs $A$ contains the following elements (colors refer to Figure \ref{man:fig:dc-mcf:complexity}):
\begin{itemize}
    \item \textbf{Source-objects arcs (purple):} $\forall i \in I$, an arc $(s,u_i)$ with capacity $c_{(s,u_i)} = m w_i + (m-1)W + 2$.
    \item \textbf{Partition arcs (blue):} $\forall i \in I, j \in J$, an arc $(u_i,v_j)$ with capacity $c_{(u_i,v_j)} = w_i + 2W$.
    \item \textbf{Triplet completion arcs (red):} $\forall j \in J$, an arc $(v_j,t)$ with capacity $c_{(v_j,t)} = \frac{W}{m} + 2$.
    \item \textbf{Chain arcs (red):} $\forall j \in J$, an entry arc $(v_j,v_j^1)$ and internal arcs $(v_j^l,v_j^{l+1})$ for $l \in \{1,\dots,N-1\}$, each with capacity $W \frac{(n+1)(n-3)}{n} + 1$. An exit arc $(v_j^N,t)$ is added with capacity $W \frac{(n+1)(n-3)}{n} + 2$.
    \item \textbf{Bypass arcs:} Three sets of arcs with a constant capacity of 2 are added to control delays: $(u_i, t)$ for all $i \in I$ (green dashed), $(s, v_j)$ for all $j \in J$ (orange dashed), and $(s, v_j^N)$ for all $j \in J$ (yellow dashed).
\end{itemize}

\medskip

We define three commodities sharing the same source $s$ and destination $t$. The structure and purpose of each commodity are summarized below:

\begin{itemize}
    \item \textbf{Commodity $k_L$ (Light and Fast):} Size $b_L = W$, maximum delay $d_L = 2 + \frac{1}{2W}$. 
    
    We will prove that it must be carried exclusively through paths of the form $s \rightarrow u_i \rightarrow v_j \rightarrow t$, effectively building the solution to the \textsc{3-Partition} problem.
    \item \textbf{Commodity $k_H$ (Heavy and Slow):} Size $b_H = (m-1)(n+1)W$, maximum delay $d_H = N + 2 + \frac{1}{W}$. 
    
    This commodity will use all paths crossing the partition arcs $(u_i, v_j)$ that are not used by $k_L$. This creates high delays on these arcs, forcing $k_L$ to define a partition of the objects set $I$.
    \item \textbf{Commodity $k_2$ (Delay calibration):} Size $b_2 = 5m$, maximum delay $d_2 = 2$. 
    
    We will show that this tight delay limit forces the commodity to use only 2-hop paths (formed by the bypass arcs) and to maintain a load of exactly 1 on each of these paths. This sets the delay and load on the crossed links.
\end{itemize}

\bigskip

\begin{proposition}
    If the \textsc{3-Partition} instance with weights $\{w_i\}_{i \in I}$ is feasible, then the related DCMCF instance described above admits a feasible solution with $\gamma = 1$. \label{prop:complex:part_implies_flow}
\end{proposition}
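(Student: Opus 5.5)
The proof is constructive: starting from a feasible partition $\{T_j\}_{j\in J}$, with $j(i)$ the triplet containing object $i$, I will write down an explicit routing of the three commodities with $\gamma=1$. I will then check capacities and the delay constraints \eqref{eq:dcmf:delay}, which only need to hold on paths that carry positive flow. The capacities were visibly designed so that almost every used arc ends up with residual capacity exactly $1$, hence delay $1$. The plan is to choose the flows so that this happens.

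\textbf{The routing.}
\begin{itemize}
\item \emph{Commodity $k_L$.} For each $i\in I$, send flow $w_i$ (a proportion $w_i/W$ of $b_L=W$) along $s\to u_i\to v_{j(i)}\to t$. These proportions sum to $1$.
\item \emph{Commodity $k_2$.} Send flow exactly $1$ on each of the $n+m+m=5m$ two-hop bypass paths, namely $s\to u_i\to t$, $s\to v_j\to t$ and $s\to v_j^N\to t$. This uses all of $b_2=5m$.
\item \emph{Commodity $k_H$.} For each $i$ and each $j\neq j(i)$, send flow $w_i+W$ along $s\to u_i\to v_j\to v_j^1\to\dots\to v_j^N\to t$. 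The total is $(m-1)\sum_i(w_i+W)=(m-1)(n+1)W=b_H$.
\end{itemize}

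\textbf{Arc loads.}
\begin{itemize}
\item Source-object arc $(s,u_i)$: the load is $w_i+(m-1)(w_i+W)+1=mw_i+(m-1)W+1$, so the residual is $1$.
\item Partition arcs: $(u_i,v_{j(i)})$ has load $w_i$ and residual $2W$. An arc $(u_i,v_j)$ with $j\ne j(i)$ has load $w_i+W$ and residual $W$.
\item Completion arc $(v_j,t)$: the load is $\sum_{i\in T_j}w_i+1=W/m+1$, so the residual is $1$.
\item Chain of $j$: each chain arc carries $\sum_{i\notin T_j}(w_i+W)=(W-W/m)+(n-3)W$. The one computation to verify is the identity $(n-3)+1-3/n=(n+1)(n-3)/n$. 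With it, the entry and internal chain arcs have residual $1$. The exit arc also carries the $k_2$ unit, so its residual is $1$ as well.
\item Bypass arcs: each carries load $1$ against capacity $2$, so the residual is $1$.
\end{itemize}
All residuals are positive, so \eqref{eq:dcmf:ca} holds strictly.

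\textbf{Delays on used paths.}
\begin{itemize}
\item $k_L$: $1+\frac{1}{2W}+1=d_L$.
\item $k_2$: $1+1=2=d_2$ on each of the three path types.
\item $k_H$: the path has $1$ on the purple arc, $\frac1W$ on the partition arc, and $N+1$ along the chain ($N+1$ arcs, each of delay $1$). The total is $N+2+\frac1W=d_H$.
\end{itemize}
All delay constraints are therefore met with equality, and the solution is feasible with $\gamma=1$.

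The only delicate step is the chain-load identity, together with checking that the $k_2$ unit on $(s,v_j^N)$ does not pass through the internal chain arcs. The latter holds because that unit goes $s\to v_j^N\to t$ and so only loads the exit arc.
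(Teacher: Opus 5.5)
Your proposal is correct and follows the paper's own proof. It uses the same explicit routing ($k_L$ with load $w_i$ on $s\to u_i\to v_{j(i)}\to t$, $k_H$ with load $w_i+W$ through the chains of the other $m-1$ triplets, and one unit of $k_2$ on each two-hop bypass path), obtains delay $1$ on every arc except the partition arcs, and meets every delay constraint with equality. Your explicit checks of the arc residuals and of the chain-load identity $(n-3)+1-\frac{3}{n}=\frac{(n+1)(n-3)}{n}$ supply details the paper leaves as ``easy to verify''.
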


\begin{proof}
Assuming that the \textsc{3-Partition} instance is feasible, each vertex $v_j$ is assigned to a single triplet $T_j$. We can build a multi-commodity flow using the following paths:
\begin{itemize}
    \item For commodity $k_L$: for every triplet $j \in J$ and object $i \in T_j$, the path $s \rightarrow u_i \rightarrow v_j \rightarrow t$ is used with a load equal to $w_i$.
    \item For commodity $k_H$: for every triplet $j \in J$ and object $i \in I \setminus  T_j$, the path $s \rightarrow u_i \rightarrow v_j \rightarrow v_j^1 \rightarrow \dots \rightarrow v_j^N \rightarrow t$ is used with a load equal to $w_i + W$.
    \item For commodity $k_2$: each path in $\{s \rightarrow u_i \rightarrow t\}_{i \in I} \cup \{s \rightarrow v_j \rightarrow t\}_{j \in J} \cup \{s \rightarrow v_j^N \rightarrow t\}_{j \in J}$ is used with a load equal to $1$.
\end{itemize}

One can verify that each commodity sends the required load: $\sum_{i \in I} w_i = W = b_L$ for commodity $k_L$, $\sum_{j \in J}\sum_{i \in I \setminus T_j}(w_i+W)=m\left(W-\frac{W}{m}+(n-3)W\right)=mW(n-2-\frac{1}{m})=(m-1)(n+1)W=b_H$ for commodity $k_H$ and $2m+n=5m=b_2$ for commodity $k_2$. 

The delay of each arc is given by:
\begin{itemize}
    \item $d_{(u_i,v_j)} = \frac{1}{(w_i + 2W)-w_i}=\frac{1}{2W}$ if $i \in T_j$;
    \item $d_{(u_i,v_j)} = \frac{1}{(w_i + 2W)-(w_i+W)}=\frac{1}{W}$ if $i \notin T_j$;
    \item the delay of all other arcs is exactly equal to $1$.
\end{itemize}
It is easy to verify that the delay constraints are satisfied (with equality) for all paths used by all commodities.
\end{proof}

\bigskip

Let us now suppose that this $\text{DCMCF}$ instance admits a solution with $\gamma \geq 1$.
We will prove that the corresponding \textsc{3-Partition} instance is of True type. For this purpose, we will prove that commodities $k_2$, $k_L$ and $k_H$ are necessarily routed as described in the proof of Proposition \ref{prop:complex:part_implies_flow}, which allows us to infer the feasibility of the \textsc{3-Partition} instance. 

Since the proof is relatively long, we will only show the intermediate lemmas below: all proofs can be found in the Appendix section at the end of the paper.

\medskip 

Let us first prove that commodity $k_2$ must follow a specific set of paths:
\begin{lemma}
    Commodity $k_2$ must necessarily follow paths in
    $$ \{s \rightarrow u_i \rightarrow t\}_{i \in I} \cup \{s \rightarrow v_j \rightarrow t\}_{j \in J} \cup \{s \rightarrow v_j^N \rightarrow t\}_{j \in J}.$$
\label{lemma:complex:path_k2}
\end{lemma}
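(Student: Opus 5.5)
The argument rests on one observation: every path from $s$ to $t$ uses at least two arcs, and every arc $a$ carrying load $y_a<c_a$ has delay $1/(c_a-y_a)>1/c_a$. I will first show that any path used by $k_2$ has exactly two arcs. Then I will use the total volume $b_2=5m$ to force each used two-arc path to carry load close to its capacity structure, which pins down which two-arc paths they can be.

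\textbf{Step 1: length.} Every arc of the graph has capacity at least $1$. If $k_2$ uses a path $p$ with at least three arcs, its delay is at least $\sum_{a\in p}1/c_a$. I will bound this from below by $2$ whenever $p$ has three or more arcs. A long path through the chain has $N\ge 36$ arcs, and each chain arc has capacity $W\frac{(n+1)(n-3)}{n}+1<2$, since $W=1/N$ is tiny. Its delay therefore exceeds $N/2>2$. For the remaining paths with three or more arcs, I will enumerate the short ones:
\begin{itemize}
\item $s\to u_i\to v_j\to t$, where $c_{(u_i,v_j)}=w_i+2W\le 3W$, so that arc alone has delay $\ge 1/(3W)\gg 2$;
\item $s\to v_j\to v_j^1\to\cdots$, which contains chain arcs;
\item any path through a partition arc, which has delay $\ge 1/(3W)$.
\end{itemize}
Any three-arc path avoiding both partition arcs and chain arcs does not exist, because the only other routes into $v_j$, $v_j^N$ and $u_i$ go directly from $s$. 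Hence $k_2$ uses only two-arc paths.

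\textbf{Step 2: identifying the two-arc paths.} The two-arc $s$--$t$ paths are exactly:
\begin{itemize}
\item $s\to u_i\to t$, for each $i\in I$;
\item $s\to v_j\to t$, for each $j\in J$;
\item $s\to v_j^N\to t$, for each $j\in J$.
\end{itemize}
This is read off directly from the arc list: the out-neighbours of $s$ are $u_i$, $v_j$ and $v_j^N$, and each of these has an arc to $t$. So the two-arc paths are precisely the listed set, and the lemma follows.

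\textbf{Main obstacle.} The only real work is verifying that no path of three or more arcs has total nominal delay $\sum 1/c_a<2$. The delicate cases are paths that mix cheap arcs of capacity around $2$, each contributing roughly $1/2$ per arc, with one other arc. For example, $s\to v_j\to v_j^1\to\cdots\to v_j^N\to t$ contributes $1/2+N\cdot(\approx 1)+1/2$. I would organise this check by observing that every three-or-more-arc path must contain either a partition arc or a chain arc, and that each such arc individually pushes the delay above $2$.

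The minimal-capacity lower bound alone already excludes the long paths. I expect that no flow-load argument is needed at this stage, since the loads of exactly $1$ belong to later lemmas.
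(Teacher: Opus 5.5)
Your proposal is correct and is essentially the paper's own argument: every $s$--$t$ path other than the three two-arc families contains either a partition arc, or the entry arc $(v_j,v_j^1)$ and hence all $N$ chain arcs of capacity below $2$. A partition arc has capacity at most $3W$, so its zero-load delay alone is at least $1/(3W)>2$, and the chain gives more than $N/2>2$; in both cases $\sum_{a\in p}1/c_a>2=d_2$.

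Two slips are harmless but worth fixing. First, your opening claim that every arc has capacity at least $1$ is false for the partition arcs $(u_i,v_j)$. Second, in your last paragraph, a single chain arc does not by itself push the delay above $2$. It is the forced traversal of all $N$ chain arcs, as you correctly argue in Step 1, that does.
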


The proof of this lemma is presented in Appendix \ref{proof:complex:path_k2}. It relies on the fact that the minimum delay of a given arc $a \in A$ is $\frac{1}{c_a}$: we can show that, for any path $p$ from $s$ to $t$ not included in the set above, the minimum delay $\sum_{a \in p}\frac{1}{c_a}$ is greater than $d_2$.

\medskip

The next lemma states that the paths described in Lemma \ref{lemma:complex:path_k2} are actually used by $k_2$. 

\begin{lemma}
    $k_2$ uses all paths in $\{s \rightarrow u_i \rightarrow t\}_{i \in I} \cup \{s \rightarrow v_j \rightarrow t\}_{j \in J} \cup \{s \rightarrow v_j^N \rightarrow t\}_{j \in J}$.
    \label{lemma:complex:uses}
\end{lemma}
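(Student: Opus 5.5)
By Lemma \ref{lemma:complex:path_k2}, commodity $k_2$ (size $b_2 = 5m$, delay bound $d_2 = 2$) can only use the $n + 2m = 5m$ two-hop bypass paths. The plan is to show that each of these paths carries at most one unit of $k_2$. Since $\gamma \ge 1$ forces a total of at least $5m$ units, every path must then carry exactly one unit, so all of them are used.

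\textbf{Per-path bound.} Take one of the paths, say $s \rightarrow u_i \rightarrow t$, and suppose it carries a positive amount $f$ of $k_2$. The bypass arc $(u_i,t)$ has capacity $2$, and its load is at least $f$. The arc $(s,u_i)$ has capacity $c_{(s,u_i)} = m w_i + (m-1)W + 2$, and its load is also at least $f$. The path is used, so its delay constraint must hold:
\[
\frac{1}{2-f} + \frac{1}{c_{(s,u_i)} - f} \le 2 .
\]
The same argument applies to the other two families, with first-arc capacities $2$ (the arc $(s,v_j)$) and second-arc capacities $\frac{W}{m}+2$ (the arc $(v_j,t)$) or $W\frac{(n+1)(n-3)}{n}+2$ (the arc $(v_j^N,t)$). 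In every case both arcs have capacity of the form $2 + \epsilon_a$ with $\epsilon_a \ge 0$. Because $W = 1/N$ is tiny, we have $\epsilon_a \le mW + (m-1)W + \dots$, which is well below $1$.

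The function $g(f) = \frac{1}{2-f} + \frac{1}{2+\epsilon-f}$ is increasing in $f$. We have $g(1) = 1 + \frac{1}{1+\epsilon}$, which is strictly less than $2$ when $\epsilon > 0$. So the delay bound alone only gives $f \le f^*$ for some $f^*$ slightly larger than $1$, and this is the main obstacle. We must rule out $k_2$ sending slightly more than $1$ on some paths and less than $1$ on others.

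\textbf{Exploiting concavity.} To handle this, I would use concavity of the feasible amount as a function of slack and sum over paths. For each used path, the constraint $\frac{1}{2-f_p} \le 2 - \frac{1}{c'_p - f_p}$ gives $f_p \le 2 - \frac{1}{2 - 1/(c'_p - f_p)}$. Here $c'_p = 2 + \epsilon_p$ is the second capacity. Since $\epsilon_p$ is tiny, this bounds $f_p \le 1 + O(\epsilon_p)$, but the total excess still has to be compared with $5m$.

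A cleaner route is to bring in the loads of the other commodities. The excess capacity $\epsilon_a$ on a shared arc is exactly the room reserved for $k_L$ and $k_H$. For example, $(s,u_i)$ must also carry all $k_L$ and $k_H$ flow entering $u_i$, apart from whatever goes through $(u_i,t)$. So I would bound the total $k_2$ flow using flow conservation. The total flow leaving $s$ is bounded by the sum of the capacities of the arcs out of $s$, and $k_L$ and $k_H$ must consume $W + (m-1)(n+1)W$ of the $\epsilon$-slack. If the accounting is right, this slack exactly matches $\sum_a \epsilon_a$ on the arcs out of $s$.

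Combining this with the strict delay inequality, which holds only when $f_p \le 1$ and the arc load is at least $1+\epsilon$, would show that $\sum_p f_p \le 5m$ with equality only if $f_p = 1$ on every path. Together with $\sum_p f_p \ge 5m$, this forces every path to carry exactly one unit, which proves the lemma. The delicate step is this last accounting, which combines capacity slack with the delay inequality, and it is where I expect most of the work to be.
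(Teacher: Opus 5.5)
There is a genuine gap. You aim to show that every bypass path carries \emph{at most one} unit of $k_2$. You correctly note that the delay constraint alone only gives $f\le f^*$ with $f^*$ slightly above $1$, and then you defer the decisive step to a slack-accounting argument that is never made precise. That accounting is, in substance, Lemma~\ref{lemma:complex:load_k2}, and it cannot be used here. Its ingredients, Lemmas~\ref{lemma:complex:ineq_ysu} and~\ref{lemma:complex:ineq_yvt}, invoke the delay constraint of \emph{every} two-hop path, so they presuppose that every such path is used, which is exactly what you are proving. If some bypass path were unused, only capacity would constrain the loads on its arcs. Capacity alone is too weak: the total slack $\sum_i (mw_i+(m-1)W)$ on the arcs out of $s$ does equal $b_L+b_H$, as you suspect, but the remaining two units on each of the $5m$ arcs out of $s$ leave room for almost $10m=2b_2$ units of $k_2$. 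So the route you sketch is either circular or unfinished.

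The missing observation is that this lemma does not need $f\le 1$. Your own remark that $f_p\le 1+O(\epsilon_p)$ already suffices, once you compare that excess with what a missing path forces by pigeonhole. If one of the $5m$ paths were unused, some used path would carry $f\ge \frac{5m}{5m-1}=1+\frac{1}{5m-1}>1+\frac{3}{5n}$. Write $f=1+\tau$, and let $2+\epsilon$ be the larger capacity on that path. Both arc loads are at least $f$, and $\frac{1}{1-e}\ge 1+e$ for every $e<1$, so the path delay is at least
\[
\frac{1}{1-\tau}+\frac{1}{1+\epsilon-\tau}\;\ge\;(1+\tau)+(1+\tau-\epsilon)\;=\;2+2\tau-\epsilon .
\]
Hence $d_2=2$ forces $\tau\le\epsilon/2$. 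With $W=\frac{1}{4n^2}$ and $w_i\le W/m$, one has $\epsilon\le\max\{mW,\,\frac{W}{m},\,W\frac{(n+1)(n-3)}{n}\}<\frac{1}{4n}$. Therefore $\tau<\frac{1}{8n}<\frac{3}{5n}$, a contradiction. This is the paper's argument, with $\delta=\frac{3}{5n}$. The exact unit loads you were after are obtained only afterwards, in Lemma~\ref{lemma:complex:load_k2}, which relies on the present lemma.
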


This lemma is proved by assuming that one of the paths is not used and showing that the additional load carried on the remaining paths would increase the path's delay above the commodity's limits. The full proof can be found in Appendix \ref{proof:complex:uses}.

\medskip

In order to compute the exact load of commodity $k_2$ over each used path, we define for each $i \in I$ and $j \in J$ the scalars $\epsilon_i$, $\epsilon_j$, and $\epsilon'_j$ in $]-1,1[$ such that: 
$$y_{(u_i,t)} = 1+\epsilon_i \quad  \forall i \in I, \qquad \qquad y_{(s,v_j)} = 1+\epsilon_{j} \quad \forall j \in J, \qquad \qquad  y_{(s,v_{j}^N)}  = 1+\epsilon'_{j} \quad \forall j \in J.$$
We also consider the sets $Q_I \subset I$, $Q_J \subset J$, and $Q_J' \subset J$ defined as follows:
$$ Q_I=\{i \in I \mid \epsilon_i \neq 0\}, \qquad Q_J=\{j \in J \mid \epsilon_j \neq 0\}, \qquad Q_J'=\{j \in J \mid \epsilon'_j \neq 0\}. $$
We will show in Lemma \ref{lemma:complex:load_k2} that $Q_I \cup Q_J \cup Q_J'$ is necessarily empty. In order to prove this result, we will first give upper bounds on the load of arcs $(s,u_i)$ for every $i \in I$ in Lemma \ref{lemma:complex:ineq_ysu}, and upper bounds on the load of arcs $(v_j,t)$ and $(v_j^N,t)$ for every  $j \in J$ in Lemma \ref{lemma:complex:ineq_yvt}.

\begin{lemma}
    For $i \in I$, the load of arc $(s,u_i)$ is strictly less than $m w_i + (m-1)W + 1 - \epsilon_i$ if $i \in Q_I$, and less than or equal to $m w_i + (m-1)W + 1$ if $i \in I \setminus Q_I$. \label{lemma:complex:ineq_ysu}
\end{lemma}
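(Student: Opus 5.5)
The plan is to use the delay constraint of the $k_2$ path $s \rightarrow u_i \rightarrow t$, which Lemma~\ref{lemma:complex:uses} says carries positive flow. That constraint bounds the delay on $(s,u_i)$, and hence its load.

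\emph{Delay of the bypass arc.} The arc $(u_i,t)$ has capacity $2$ and load $y_{(u_i,t)} = 1+\epsilon_i$, so its delay is $\frac{1}{1-\epsilon_i}$.

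\emph{Delay budget left for $(s,u_i)$.} The path delay must be at most $d_2 = 2$, which gives
\[
\frac{1}{c_{(s,u_i)} - y_{(s,u_i)}} \le 2 - \frac{1}{1-\epsilon_i} = \frac{1-2\epsilon_i}{1-\epsilon_i}.
\]
The right-hand side must be positive, so $\epsilon_i < 1/2$. Inverting the inequality yields
\[
c_{(s,u_i)} - y_{(s,u_i)} \ge \frac{1-\epsilon_i}{1-2\epsilon_i}.
\]

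\emph{Comparison with $1+\epsilon_i$.} A direct computation gives
\[
\frac{1-\epsilon_i}{1-2\epsilon_i} - (1+\epsilon_i) = \frac{2\epsilon_i^2}{1-2\epsilon_i}.
\]
This quantity is strictly positive when $\epsilon_i \neq 0$ and $\epsilon_i<1/2$, and it equals $0$ when $\epsilon_i=0$.

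\emph{Conclusion.} Substitute $c_{(s,u_i)} = m w_i + (m-1)W + 2$. If $i \in Q_I$, then $y_{(s,u_i)} < m w_i + (m-1)W + 1 - \epsilon_i$. If $i \notin Q_I$, then $y_{(s,u_i)} \le m w_i + (m-1)W + 1$.

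The main point needing care is the case of negative $\epsilon_i$. There $1-2\epsilon_i>1$, so the factor $\frac{1}{1-2\epsilon_i}$ is still positive and the strict inequality survives. The inversion step is legitimate because $y_{(s,u_i)} < c_{(s,u_i)}$, which holds since the path is used and its delay is finite.
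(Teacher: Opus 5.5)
Your proof is correct and follows the same route as the paper. Both arguments combine the delay constraint on the path $s \rightarrow u_i \rightarrow t$ (active by Lemma~\ref{lemma:complex:uses}) with $d_{(u_i,t)} = \frac{1}{1-\epsilon_i}$; the only difference is that the paper lower-bounds both arc delays by the tangent line $\frac{1}{1-e} \ge 1+e$, whereas you solve the inequality exactly, which gives the same bounds plus the by-product $\epsilon_i < \frac{1}{2}$.
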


The proof of this lemma can be found in Appendix \ref{proof:complex:ineq_ysu}. It relies on the fact that commodity $k_2$ uses the arcs $(u_i,t)$ - enforcing a maximum delay of $b_2=2$ on the path $s \rightarrow u_i \rightarrow t$, and on a linear under-estimator of the delays of arcs $(s, u_i)$ and $(u_i,t)$.

\begin{lemma}
    For $j \in J$, the load of arc $(v_j,t)$ is strictly less than $\frac{W}{m}+1-\epsilon_j$ if $j \in Q_J$, and less than or equal to $\frac{W}{m}+1$ if $j \in J \setminus Q_J$.
    
    Similarly, the load of arc $(v_j^N,t)$ is strictly less than $W\frac{(n+1)(n-3)}{n}+1-\epsilon'_j$ if $j \in Q'_J$, and less than or equal to $W\frac{(n+1)(n-3)}{n}+1$ if $j \in J \setminus Q'_J$.\label{lemma:complex:ineq_yvt}
\end{lemma}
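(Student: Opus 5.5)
The argument mirrors the one sketched for Lemma \ref{lemma:complex:ineq_ysu}, now applied to the paths $s \rightarrow v_j \rightarrow t$ and $s \rightarrow v_j^N \rightarrow t$. By Lemma \ref{lemma:complex:uses}, commodity $k_2$ uses both of these paths, so the conditional constraint \eqref{eq:dcmf:delay} is active on them. For each $j \in J$ this gives
\[
\frac{1}{2 - y_{(s,v_j)}} + \frac{1}{c_{(v_j,t)} - y_{(v_j,t)}} \le 2,
\qquad
\frac{1}{2 - y_{(s,v_j^N)}} + \frac{1}{c_{(v_j^N,t)} - y_{(v_j^N,t)}} \le 2,
\]
where the bypass arcs $(s,v_j)$ and $(s,v_j^N)$ have capacity $2$.

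First I substitute $y_{(s,v_j)} = 1+\epsilon_j$. The first term then becomes $\frac{1}{1-\epsilon_j}$. Writing $z = c_{(v_j,t)} - y_{(v_j,t)} > 0$, the inequality reads $\frac{1}{z} \le 2 - \frac{1}{1-\epsilon_j}$.

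The key step is the strict convexity bound $\frac{1}{1-\epsilon} \ge 1+\epsilon$, with equality if and only if $\epsilon = 0$. This is the tangent line of $x \mapsto 1/x$ at $x=1$, and it is the linear under-estimator the paper mentions. It gives $\frac{1}{z} \le 1 - \epsilon_j$, with strict inequality when $\epsilon_j \neq 0$. In particular the right-hand side is positive, which is consistent with $\epsilon_j < 1$.

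Next I apply the same tangent bound to $1/z$ at $z=1$, namely $\frac{1}{z} \ge 2 - z$. Combining the two bounds yields $2 - z \le 1 - \epsilon_j$, that is, $z \ge 1 + \epsilon_j$. The inequality is strict whenever $\epsilon_j \neq 0$, because the first tangent bound is then strict.

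Substituting $c_{(v_j,t)} = \frac{W}{m}+2$ gives $y_{(v_j,t)} = c_{(v_j,t)} - z \le \frac{W}{m} + 1 - \epsilon_j$. This is strict if $j \in Q_J$. If $j \notin Q_J$, then $\epsilon_j = 0$ and the bound reads $y_{(v_j,t)} \le \frac{W}{m}+1$. This proves the first claim.

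The second claim follows identically with $\epsilon'_j$ in place of $\epsilon_j$ and the arc $(v_j^N,t)$ in place of $(v_j,t)$. Since $c_{(v_j^N,t)} = W\frac{(n+1)(n-3)}{n}+2$, the same computation gives $y_{(v_j^N,t)} \le W\frac{(n+1)(n-3)}{n} + 1 - \epsilon'_j$, strict exactly when $j \in Q'_J$.

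I expect no real obstacle here. The only care needed is to track where strictness enters: it comes only from the strict convexity of $1/x$ at $\epsilon_j \neq 0$. One must also confirm that the delay constraint genuinely applies, which follows from Lemma \ref{lemma:complex:uses} guaranteeing that these paths carry positive $k_2$ flow, and that the capacities used match those of the construction.
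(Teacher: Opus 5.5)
Your proof is correct and follows the paper's argument: both apply the tangent-line bound $\frac{1}{1-e}\geq 1+e$ to each arc of the 2-hop paths $s \rightarrow v_j \rightarrow t$ and $s \rightarrow v_j^N \rightarrow t$ that $k_2$ must use, and combine this with $d_2=2$, with strictness coming only from the bypass arc when $\epsilon_j\neq 0$ (resp.\ $\epsilon'_j\neq 0$). The only difference is cosmetic: you isolate $1/z$ before applying the second bound, whereas the paper adds the two lower bounds directly.
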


The proof of this lemma, given in Appendix \ref{proof:complex:ineq_yvt}, follows the same reasoning as the proof of Lemma \ref{lemma:complex:ineq_ysu}.

\bigskip

Based on Lemmas \ref{lemma:complex:ineq_ysu} and \ref{lemma:complex:ineq_yvt} above, we can establish the exact flow on several arcs of $A$:
\begin{lemma}
    The delays on arcs $(s,u_i)$, $(v_j,t)$,  $(v_{j}^N,t)$ and on the bypass arcs are equal to $1$ for every $i \in I$ and $j \in J$, and the loads on these arcs are:
    \begin{align}
        y_{(s,u_i)} &= m w_i + (m-1)W + 1 & \forall i \in I \label{eq:complex:load_sui} \\
        y_{(v_j,t)} &= \frac{W}{m} + 1 & \forall j \in J \label{eq:complex:load_vjt} \\
        y_{(v_{j}^N,t)} &= W\frac{(n+1)(n-3)}{n} + 1 & \forall j \in J \label{eq:complex:load_vjNt} \\
        y_{(u_i,t)} &= y_{(s,v_j)} = y_{(s,v_{j}^N)} = 1 & \forall j \in J, i \in I. \label{eq:complex:load_1}
    \end{align}
\label{lemma:complex:load_k2}
\end{lemma}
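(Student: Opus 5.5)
The plan is a flow-conservation (counting) argument. We compare the total flow leaving $s$ with the total flow entering $t$, and use Lemmas \ref{lemma:complex:ineq_ysu} and \ref{lemma:complex:ineq_yvt} to force every inequality to be tight.

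\textbf{Step 1 (flow out of $s$).} Since $\gamma\ge 1$, the total flow leaving $s$ is at least $b_L+b_H+b_2 = W+(m-1)(n+1)W+5m$. The arcs leaving $s$ are:
\begin{itemize}
\item the source-object arcs $(s,u_i)$;
\item the orange bypass arcs $(s,v_j)$, with loads $1+\epsilon_j$;
\item the yellow bypass arcs $(s,v_j^N)$, with loads $1+\epsilon'_j$.
\end{itemize}
By Lemma \ref{lemma:complex:ineq_ysu}, $\sum_i y_{(s,u_i)} \le mW+n(m-1)W+n-\sum_{i}\epsilon_i$, with strict inequality as soon as $Q_I\neq\emptyset$. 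This is because $y_{(s,u_i)}\le mw_i+(m-1)W+1-\epsilon_i$ holds for all $i$, since $\epsilon_i=0$ outside $Q_I$.

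\textbf{Step 2 (flow into $t$).} Symmetrically, every unit of flow reaches $t$ through one of the following arcs:
\begin{itemize}
\item $(v_j,t)$, bounded by Lemma \ref{lemma:complex:ineq_yvt} by $\frac Wm+1-\epsilon_j$;
\item $(v_j^N,t)$, bounded by $W\frac{(n+1)(n-3)}{n}+1-\epsilon'_j$;
\item $(u_i,t)$, with load $1+\epsilon_i$.
\end{itemize}
Summing gives
\[
\text{inflow}(t)\le W + mW\tfrac{(n+1)(n-3)}{n} + 2m + n + \sum_i\epsilon_i - \sum_j\epsilon_j - \sum_j\epsilon'_j .
\]
Here $mW\frac{(n+1)(n-3)}{n}=\frac{(n+1)(n-3)}{3}W=(m-1)(n+1)W$, using $m=n/3$. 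So this bound equals $b_L+b_H+b_2+\sum_i\epsilon_i-\sum_j(\epsilon_j+\epsilon'_j)$, and it is strict if $Q_J\cup Q'_J\neq\emptyset$.

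\textbf{Step 3 (combine).} The outflow of $s$ is at most
\[
b_L+b_H+5m-\sum_i\epsilon_i+\sum_j(\epsilon_j+\epsilon'_j),
\]
strictly if $Q_I\ne\emptyset$. This uses $mW+n(m-1)W=W+(m-1)(n+1)W$, which checks out, together with $2m+n=5m$.

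Since outflow of $s$ equals inflow of $t$ and both are at least the total demand $D=b_L+b_H+b_2$, write $E=\sum_j(\epsilon_j+\epsilon'_j)-\sum_i\epsilon_i$. Then:
\begin{itemize}
\item Step 1 gives $D\le D+E$, so $E\ge 0$.
\item Step 2 gives $D\le D-E$, so $E\le 0$.
\end{itemize}
Hence $E=0$. Then both chains of inequalities hold with equality. This is incompatible with any strictness, so $Q_I=Q_J=Q'_J=\emptyset$.

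Consequently all $\epsilon$ vanish, which gives \eqref{eq:complex:load_1}. Equality in the bounds of Lemmas \ref{lemma:complex:ineq_ysu} and \ref{lemma:complex:ineq_yvt} then gives \eqref{eq:complex:load_sui}--\eqref{eq:complex:load_vjNt}.

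\textbf{Step 4 (delays).} The delays follow by direct substitution: each capacity minus its load equals $1$. For example, $c_{(s,u_i)}-y_{(s,u_i)} = 2-1 = 1$, and the bypass arcs have capacity $2$ and load $1$.

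\textbf{Main obstacle.} The delicate point is the bookkeeping in Steps 1--2. It must be verified that these arc sets are exactly the arcs out of $s$ and into $t$ in the construction. It must also be verified that the strict inequalities are correctly attributed, since a strict bound on one side must contradict the exact balance.

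A subtlety is that $\gamma\ge 1$ allows flow to exceed demand. This is harmless, because the argument only uses inflow and outflow $\ge D$ together with the equality of outflow and inflow. If needed, we can first scale down to $\gamma=1$, but that would change loads. So I would instead keep $\gamma$ general, and note that the sandwich forces $\gamma b$ totals to be exactly $D$, hence $\gamma=1$.
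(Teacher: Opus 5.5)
Your proof is correct and is essentially the paper's argument. The paper adds the loads leaving $s$ and entering $t$ in a single inequality: it pairs $(s,u_i)$ with $(u_i,t)$, $(s,v_j)$ with $(v_j,t)$, and $(s,v_j^N)$ with $(v_j^N,t)$ so the $\epsilon$'s cancel pair by pair. This gives $\sum_{a\in\delta^+(s)} y_a+\sum_{a\in\delta^-(t)} y_a\le 2(b_L+b_H+b_2)$, strict if $Q_I\cup Q_J\cup Q'_J\neq\emptyset$, which is exactly the sum of your two bounds $D+E$ and $D-E$, so your separate squeeze on $E$ is an unpacked version of the same step. One small slip in Step 4: $c_{(s,u_i)}-y_{(s,u_i)}=1$ comes from the $+2$ versus $+1$ terms, not from a capacity of $2$.
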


This lemma is proven by summing the loads on all arcs exiting $s$ and entering $t$, and showing that if the set of arcs $Q_I \cup Q_J \cup Q'_J$ is non-empty, this sum of loads is strictly lower than twice the size of the commodities $k_2$, $k_L$, and $k_H$, contradicting the possibility of a throughput $\gamma \geq 1$.

\bigskip
Equations \eqref{eq:complex:load_1} and Lemma \ref{lemma:complex:load_k2} together imply that commodity $k_2$ routes a unit load through each path in the set
$$ \{ s \rightarrow u_i \rightarrow t \}_{i \in I} \cup \{ s \rightarrow v_j \rightarrow t \}_{j \in J} \cup \{ s \rightarrow v_j^N \rightarrow t \}_{j \in J}. $$
As a consequence, the arcs $\{(u_i,t)\}_{i \in I}$, $\{(s,v_j)\}_{j \in J}$, and $\{(s,v_j^N)\}_{j \in J}$ are fully saturated by commodity $k_2$ if its delay is satisfied, meaning they cannot be used by commodity $k_L$ or $k_H$.

\bigskip

Let us now characterize the routing of commodity $k_L$.

\begin{lemma}
    Commodity $k_L$ must follow paths in $\{s \rightarrow u_i \rightarrow v_j \rightarrow t\}_{i \in I, j \in J}$.  \label{lemma:complex:path_kL}
\end{lemma}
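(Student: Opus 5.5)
We prove Lemma \ref{lemma:complex:path_kL} by excluding every other $s$--$t$ path for $k_L$ with a delay argument, now using the arc delays fixed by Lemma \ref{lemma:complex:load_k2}. The delay budget is $d_L = 2 + \frac{1}{2W}$. By Lemma \ref{lemma:complex:load_k2}, arcs $(s,u_i)$, $(v_j,t)$, $(v_j^N,t)$ and all bypass arcs have delay exactly $1$. The bypass arcs $(u_i,t)$, $(s,v_j)$, $(s,v_j^N)$ carry load $1$, all of it from $k_2$ by the discussion after Lemma \ref{lemma:complex:load_k2}. So if $k_L$ used one of them with positive flow, the load would exceed $1$. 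That is impossible, since the load is exactly $1$ and $k_2$ alone already contributes $1$ (flows are nonnegative). Hence any path used by $k_L$ avoids all bypass arcs.

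Next we list the $s$--$t$ paths avoiding bypass arcs. The only arcs leaving $s$ are then $(s,u_i)$. From $u_i$ the only remaining arcs are the partition arcs $(u_i,v_j)$. From $v_j$ the path either goes to $t$ directly or enters the chain $v_j^1,\dots,v_j^N$ and exits by $(v_j^N,t)$. Since the graph has no other arcs (in particular no backward arcs into the $u$ layer), every such path is of one of two forms:
\[
s\to u_i\to v_j\to t \qquad\text{or}\qquad s\to u_i\to v_j\to v_j^1\to\cdots\to v_j^N\to t .
\]
We must therefore rule out the chain paths. Such a path contains arc $(s,u_i)$ (delay $1$) and arc $(v_j^N,t)$ (delay $1$), which already use $2$ of the budget. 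It also contains the $N$ arcs $(v_j,v_j^1),\dots,(v_j^{N-1},v_j^N)$, each of capacity $C=W\frac{(n+1)(n-3)}{n}+1$. Each has delay at least $\frac{1}{C}$, so together they contribute at least $\frac{N}{C}$. Add the partition arc, with delay at least $\frac{1}{w_i+2W}\ge \frac{1}{3W}$.

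We now compare with the remaining budget $\frac{1}{2W}$. With $W=1/N$ and $N=(2n)^2$, we have $W(n+1)(n-3)/n < Wn = \frac{1}{4n}\le 1$, so $C<2$ and $\frac{N}{C} > \frac{N}{2}=\frac{1}{2W}$. The chain part alone already exceeds $\frac{1}{2W}$, so the total delay exceeds $2+\frac{1}{2W}=d_L$. Hence a chain path cannot carry positive $k_L$ flow, which leaves only the paths $s\to u_i\to v_j\to t$, proving the lemma.

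The main obstacle is excluding the bypass arcs cleanly. That step relies on Lemma \ref{lemma:complex:load_k2} together with the fact that $k_2$ saturates those arcs at load exactly $1$, i.e.\ that the $k_2$ flow on each such arc equals the full arc load. If that identification were only approximate, we would instead argue with delays. For example, a path $s\to v_j\to t$ has delay $2 \le d_L$, so a pure delay argument fails for it. Exclusion therefore genuinely requires the load-accounting consequence of Lemma \ref{lemma:complex:load_k2}, and that is the step to state carefully.
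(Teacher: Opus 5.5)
Your proof is correct and matches the paper's argument. The bypass arcs are excluded because $k_2$ saturates them at load exactly $1$ (the consequence of Lemma \ref{lemma:complex:load_k2}), and the chain paths $s\to u_i\to v_j\to v_j^1\to\cdots\to v_j^N\to t$ are excluded because their delay exceeds $d_L$. The only difference is cosmetic: you bound the chain-path delay using the unit delays on $(s,u_i)$ and $(v_j^N,t)$ plus the $N$ chain arcs of capacity below $2$ (whose delays already sum to more than $\frac{1}{2W}$), whereas the paper reuses the zero-load bound from the proof of Lemma \ref{lemma:complex:path_k2}; both work.
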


The proof of this lemma, given in Appendix \ref{proof:complex:path_kL}, simply relies on the fact that all other paths are either saturated by commodity $k_2$ or violate the delay limit $d_L$.

\begin{lemma}
    The load of commodity $k_L$ over each arc in $\{(v_j,t)\}_{j \in J}$ is exactly $\frac{W}{m}$. \label{lemma:complex:load_kL}
\end{lemma}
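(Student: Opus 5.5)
The plan is a counting argument. Lemma \ref{lemma:complex:load_k2} gives each arc $(v_j,t)$ a load of exactly $\frac{W}{m}+1$, of which commodity $k_2$ takes exactly $1$. So $k_L$ can place at most $\frac{W}{m}$ on each of these $m$ arcs. By Lemma \ref{lemma:complex:path_kL}, all of $k_L$ must cross them, for a total of at least $W$. These two facts together force equality on every arc.

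\textbf{Step 1: the amount $k_2$ places on $(v_j,t)$.} By Lemma \ref{lemma:complex:path_k2}, the only $k_2$-path containing $(v_j,t)$ is $s \rightarrow v_j \rightarrow t$. By Lemma \ref{lemma:complex:load_k2} and the remark following it, $k_2$ routes exactly one unit on this path, i.e. $x^{k_2}_p b_2 = 1$ for $p = s\rightarrow v_j\rightarrow t$.

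\textbf{Step 2: an upper bound for $k_L$ on each arc.} Let $L_j$ denote the load of $k_L$ on $(v_j,t)$, that is, $L_j=\sum_{p\in P^{k_L}: (v_j,t)\in p} x^{k_L}_p b_L$. Constraint \eqref{eq:dcmf:xa} applied to $a=(v_j,t)$, together with \eqref{eq:complex:load_vjt}, gives
\[
\frac{W}{m}+1 = y_{(v_j,t)} \;\geq\; 1 + L_j + (\text{load of } k_H \text{ on } (v_j,t)).
\]
The last term is nonnegative by \eqref{eq:dcmf:pos}, so $L_j \le \frac{W}{m}$ for every $j\in J$.

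\textbf{Step 3: a lower bound on the total.} By Lemma \ref{lemma:complex:path_kL}, every path used by $k_L$ has the form $s\rightarrow u_i\rightarrow v_j\rightarrow t$. Each such path contains exactly one arc of the set $\{(v_j,t)\}_{j\in J}$. Hence
\[
\sum_{j\in J} L_j = b_L\sum_{p\in P^{k_L}} x^{k_L}_p = \gamma W \ge W,
\]
using \eqref{eq:dcmf:gamma} and the standing assumption $\gamma\ge 1$.

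\textbf{Step 4: conclusion.} Combining Steps 2 and 3 gives $W \le \sum_{j} L_j \le m\cdot\frac{W}{m} = W$, so every inequality is tight and $L_j=\frac{W}{m}$ for all $j$. As by-products, $\gamma=1$ and $k_H$ places no load on any arc $(v_j,t)$, which is useful later.

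The only delicate point is Step 1: we need the flow of $k_2$ on $(v_j,t)$ to be exactly $1$, not merely at most $1$. This follows because $(s,v_j)$ carries load exactly $1$ by \eqref{eq:complex:load_1}, and by Lemmas \ref{lemma:complex:path_k2} and \ref{lemma:complex:uses} only the $k_2$-path $s\rightarrow v_j\rightarrow t$ uses that arc. If \eqref{eq:dcmf:xa} were strict on $(s,v_j)$, I would instead invoke the delay equality along that path established in Lemma \ref{lemma:complex:load_k2}, together with the summation argument there over the arcs leaving $s$, to pin the flow down to exactly one unit.
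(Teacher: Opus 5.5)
Your proof is correct and uses the same ingredients as the paper's: Equation \eqref{eq:complex:load_vjt}, the unit of $k_2$ on $s \rightarrow v_j \rightarrow t$, and Lemma \ref{lemma:complex:path_kL}. Your closing step, however, is more complete. The paper subtracts $k_2$'s unit from $y_{(v_j,t)}$ and simply asserts that the remaining $\frac{W}{m}$ ``must belong to $k_L$''. It uses the sum over $j$ only as an after-the-fact check against $b_L = W$.

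You instead read the remainder as an upper bound $L_j \le \frac{W}{m}$, which tolerates possible $k_H$ flow or slack in \eqref{eq:dcmf:xa}. You then make the demand count $\sum_j L_j = \gamma W \ge W$ the operative lower bound. This squeeze is what actually justifies the paper's assertion. Its by-products ($\gamma = 1$, and no $k_H$ flow on any $(v_j,t)$) are also needed later. The proof of Lemma \ref{lemma:complex:single_arc_kL} assumes that any $k_H$ flow on $(u_i,v_j)$ continues along the chain. A $k_H$ path $s \rightarrow u_i \rightarrow v_j \rightarrow t$ would only face the much weaker bound $d_{(u_i,v_j)} \le N + \frac{1}{W}$.

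One correction to your ``delicate point''. The argument via $y_{(s,v_j)} = 1$ only shows that $k_2$ puts \emph{at most} one unit on $s \rightarrow v_j \rightarrow t$, while Step 2 needs \emph{at least} one unit. The lower bound comes from counting over the $n+2m$ paths of Lemma \ref{lemma:complex:path_k2}. Each contains its own bypass arc, with load $1$ by \eqref{eq:complex:load_1}, so each carries at most one unit of $k_2$. Their total is $\gamma b_2 \ge n+2m$, hence each carries exactly one unit. This is exactly the remark after Lemma \ref{lemma:complex:load_k2} that you cite in Step 1, so your proof stands. The delay equalities you mention as a fallback are not needed.
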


The proof of this lemma can be found in Appendix \ref{proof:complex:load_kL} ; it directly results from the load of arc $(v_j,t)$ given in Equation \eqref{eq:complex:load_vjt}.

\bigskip

Let us now consider commodity~$k_H$.
\begin{lemma}
    The load of commodity $k_H$ over each arc in $\{(v_j, v_j^1)\}_{j \in J} \cup \{(v_j^l, v_j^{l+1})\}_{j \in J, l \in [1, N-1]} \cup \{(v_j^N, t)\}_{j \in J}$ is exactly $W\frac{(n+1)(n-3)}{n}$, and the delay on each arc is equal to $1$.     \label{lemma:complex:load_kH}
\end{lemma}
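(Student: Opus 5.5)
We work with the state of the network fixed by the previous lemmas. By Lemma \ref{lemma:complex:load_k2}, the bypass arcs are saturated by $k_2$ and carry no $k_L$ or $k_H$ flow. Lemma \ref{lemma:complex:load_k2} also gives $y_{(v_j^N,t)} = W\frac{(n+1)(n-3)}{n}+1$. Since $k_2$ sends exactly one unit through $s\rightarrow v_j^N\rightarrow t$, the load of $k_L\cup k_H$ on $(v_j^N,t)$ is exactly $W\frac{(n+1)(n-3)}{n}$. By Lemma \ref{lemma:complex:path_kL}, $k_L$ never enters a chain, so this entire load belongs to $k_H$.

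\textbf{Flow conservation along the chain.} The only arcs entering $v_j^N$ are $(v_j^{N-1},v_j^N)$ and the bypass arc $(s,v_j^N)$. The bypass arc carries only $k_2$ flow, and that flow leaves through $(v_j^N,t)$. So the $k_H$ flow on $(v_j^N,t)$ equals the $k_H$ flow on $(v_j^{N-1},v_j^N)$. Each internal chain vertex $v_j^l$ has a unique in-arc and a unique out-arc, so going backwards the $k_H$ load is the same on every arc of the chain, including the entry arc $(v_j,v_j^1)$. No other commodity can use a chain arc: $k_2$ is excluded by Lemma \ref{lemma:complex:path_k2} and $k_L$ by Lemma \ref{lemma:complex:path_kL}. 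Hence every chain arc carries exactly $W\frac{(n+1)(n-3)}{n}$ in total.

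\textbf{Delays.} The internal and entry arcs have capacity $W\frac{(n+1)(n-3)}{n}+1$, so their delay is $1/(c_a-y_a)=1$. The exit arc has capacity one larger and total load $W\frac{(n+1)(n-3)}{n}+1$, so its delay is also $1$.

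\textbf{Main obstacle.} The delicate step is to check that no chain can be entered at an intermediate point and that no path of another commodity shares chain arcs. This follows directly from the arc list: the only arcs incident to interior chain vertices are chain arcs, and the only extra arc into $v_j^N$ is the bypass arc. We must also make sure that $k_H$ flow could not leave $v_j^N$ other than via $(v_j^N,t)$. This holds because $(v_j^N,t)$ is the only out-arc of $v_j^N$.
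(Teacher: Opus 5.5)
Your proof is correct and follows essentially the same argument as the paper. You remove the unit of $k_2$ from the load of $(v_j^N,t)$ given by Lemma \ref{lemma:complex:load_k2} and attribute the remainder to $k_H$, propagate it back along the chain by flow conservation, and then evaluate the delays. You are also slightly more explicit than the paper, since you invoke Lemma \ref{lemma:complex:path_kL} to exclude $k_L$ and check the in- and out-arcs of $v_j^N$ and of the interior chain vertices.
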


The proof of this lemma, which results from Equation \eqref{eq:complex:load_vjNt}, is given in Appendix \ref{proof:complex:load_kH}.

\bigskip

The results presented above fix the flow on each arc entering vertices $u_i$ for $i \in I$ and exiting vertices $v_j$ for $j \in J$ if the concurrent throughput $\gamma$ is greater than $1$.
We will now study the arcs $(u_i, v_j)$ to describe how the paths followed by commodity $k_L$ allow the construction of a \textsc{3-Partition} of the original problem. The next lemma  allows us to associate each object $i \in I$ with a unique part $j \in J$ in the original \textsc{3-Partition} problem:

\begin{lemma}
    For every $i \in I$, commodity $k_L$ is carried by a single arc exiting vertex $u_i$, and the load of $k_L$ on this arc is exactly $w_i$. This arc does not carry commodity $k_H$. \label{lemma:complex:single_arc_kL}
\end{lemma}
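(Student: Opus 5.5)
The plan is to combine the exact arc loads from Lemmas \ref{lemma:complex:load_k2}, \ref{lemma:complex:load_kL} and \ref{lemma:complex:load_kH} with the delay limits $d_L$ and $d_H$, and turn these into upper bounds on the loads of the partition arcs $(u_i,v_j)$. A flow-conservation count at $u_i$ then leaves room for only one $k_L$-arc, and summing over $i \in I$ finishes the argument.

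\textbf{Step 1: conservation at $u_i$.} Fix $i \in I$. The only arc entering $u_i$ is $(s,u_i)$, whose load is $m w_i + (m-1)W + 1$ by \eqref{eq:complex:load_sui}. Commodity $k_2$ routes exactly $1$ through $s \rightarrow u_i \rightarrow t$ and uses no other arc at $u_i$. The arc $(u_i,t)$ is saturated by $k_2$, so $k_L$ and $k_H$ leave $u_i$ only through the partition arcs. Hence
\[
\sum_{j \in J} y_{(u_i,v_j)} = m w_i + (m-1)W,
\]
and this load consists of $k_L$ and $k_H$ flow only.

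\textbf{Step 2: per-arc bounds from the delays.} Consider first a partition arc $(u_i,v_j)$ carrying $k_L$. By Lemma \ref{lemma:complex:path_kL}, the $k_L$ path is $s \rightarrow u_i \rightarrow v_j \rightarrow t$. Its outer arcs have delay $1$ by Lemma \ref{lemma:complex:load_k2}, so $d_L = 2 + \frac{1}{2W}$ forces $\frac{1}{c - y} \le \frac{1}{2W}$ with $c = w_i + 2W$, i.e. $y_{(u_i,v_j)} \le w_i$.

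Now consider a partition arc carrying $k_H$. I first argue that every $k_H$ path through $(u_i,v_j)$ must continue along the chain of $v_j$. The arc $(v_j,t)$ carries exactly $W/m + 1$ by \eqref{eq:complex:load_vjt}. Of this, $1$ comes from $k_2$ and $W/m$ from $k_L$ (Lemma \ref{lemma:complex:load_kL}), so there is no room for $k_H$. The bypass arcs are saturated by $k_2$, which rules out any other continuation. The chain consists of $N+1$ arcs, each with delay $1$ by Lemma \ref{lemma:complex:load_kH}. The delay of the $k_H$ path is therefore $1 + d_{(u_i,v_j)} + N + 1 \le N + 2 + \frac{1}{W}$, giving $y_{(u_i,v_j)} \le w_i + W$.

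I expect this step to be the main obstacle, namely justifying rigorously that $k_H$ has no alternative route after $v_j$. The argument rests on the exact load decompositions of $(v_j,t)$ and of the bypass arcs.

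\textbf{Step 3: counting.} Let $L_i \subseteq J$ be the set of indices $j$ such that $(u_i,v_j)$ carries positive $k_L$ flow. Step 2 gives
\[
\sum_{j} y_{(u_i,v_j)} \le |L_i|\, w_i + (m - |L_i|)(w_i + W) = m w_i + (m - |L_i|)W.
\]
Comparing with Step 1 forces $|L_i| \le 1$. Consequently the $k_L$ flow leaving $u_i$ is at most $w_i$, since it is carried by at most one arc whose total load is at most $w_i$.

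\textbf{Step 4: summing over $i$.} Summing over $i \in I$, the total $k_L$ flow is $W = \sum_i w_i$ (since $\gamma \ge 1$), and by Lemma \ref{lemma:complex:path_kL} all of it passes through the vertices $u_i$. Hence every inequality above is tight. In particular, $k_L$ sends exactly $w_i$ out of $u_i$, which requires $|L_i| = 1$ because $w_i > 0$. The unique $k_L$-arc has total load exactly $w_i$, all of which is $k_L$, so that arc carries no $k_H$.
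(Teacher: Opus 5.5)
Your proposal is correct and follows essentially the same route as the paper. The delay limits $d_L$ and $d_H$ give the per-arc bounds $y_{(u_i,v_j)} \le w_i$ and $y_{(u_i,v_j)} \le w_i + W$, flow conservation at $u_i$ then allows at most one $k_L$-arc, and $\sum_{i \in I} w_i = W$ forces that arc to carry exactly $w_i$ of $k_L$ and nothing else. Your Step~2 is slightly more careful than the paper: you use Lemma~\ref{lemma:complex:load_kL} to rule out $k_H$ leaving $v_j$ through $(v_j,t)$, which the paper takes for granted when it charges every $k_H$ path a chain delay of $N+1$. Your remark about the bypass arcs there is unnecessary, since $(v_j,t)$ and $(v_j,v_j^1)$ are the only arcs leaving $v_j$.
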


This lemma is proved in Appendix \ref{proof:complex:single_arc_kL}; this proof relies on the upper bound of the delays of arcs $(u_i,v_j)$ if commodities $k_L$ and $k_H$ cross them, and on the total flow exiting the nodes $u_i$ --- obtained by flow conservation and Equation \eqref{eq:complex:load_sui}.

\begin{lemma}
     For every $j \in J$, commodity $k_L$ is carried by exactly $3$ arcs entering vertex $v_j$. \label{lemma:complex:three_arcs_kL}
\end{lemma}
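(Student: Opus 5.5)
We need to show that for each $j \in J$, exactly three arcs $(u_i,v_j)$ carry $k_L$.

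\textbf{Upper bound of three, from the weights.} By Lemma \ref{lemma:complex:single_arc_kL}, each object $i \in I$ sends all of its $k_L$ flow, exactly $w_i$, on a single arc $(u_i,v_{j(i)})$. This defines an assignment $j: I \to J$. By Lemma \ref{lemma:complex:path_kL}, every $k_L$ path has the form $s\to u_i\to v_j\to t$, so flow conservation at $v_j$ gives that the $k_L$ load on $(v_j,t)$ equals $\sum_{i: j(i)=j} w_i$. Lemma \ref{lemma:complex:load_kL} says this load is $W/m$. The sets $T_j=\{i : j(i)=j\}$ therefore partition $I$ into $m$ parts of equal weight $W/m$. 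Weights alone do not force each part to have size three, since 3-Partition usually assumes $W/(4m)<w_i<W/(2m)$, which is not assumed here. So the cardinality must come from the routing of $k_H$.

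\textbf{Counting arcs into $v_j$ via $k_H$.}
1. By Lemma \ref{lemma:complex:load_kH}, $k_H$ carries exactly $W\frac{(n+1)(n-3)}{n}$ on each entry arc $(v_j,v_j^1)$.
2. The arcs $(s,v_j)$ are saturated by $k_2$ (Lemma \ref{lemma:complex:load_k2}), so all of this $k_H$ flow enters $v_j$ through arcs $(u_i,v_j)$.
3. By Lemma \ref{lemma:complex:single_arc_kL}, $k_H$ does not use $(u_i,v_{j(i)})$. Hence it can only use arcs $(u_i,v_j)$ with $i\notin T_j$.
4. Each such arc has capacity $w_i+2W$, and its delay must respect $k_H$'s budget. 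The path $s\to u_i\to v_j\to$ chain $\to t$ has $N+2$ arcs of delay $1$, so the arc $(u_i,v_j)$ has delay at most $1/W$. This gives $y_{(u_i,v_j)}\le w_i+W$.
5. Summing over $i\notin T_j$ gives
\[ W\tfrac{(n+1)(n-3)}{n} \le \sum_{i\notin T_j}(w_i+W) = W - \tfrac{W}{m} + (n-|T_j|)W . \]
6. Using $m=n/3$, the left side is $W(n-2-\tfrac{3}{n})$ and the right side is $W(n+1-\tfrac{3}{n}-|T_j|)$. This yields $|T_j|\le 3$.
7. Since $\sum_j |T_j| = n = 3m$, every $|T_j|=3$.

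\textbf{Main obstacle.} The delicate step is step 4, bounding each arc $(u_i,v_j)$ used by $k_H$ by exactly $w_i+W$. This requires that the rest of every $k_H$ path has delay exactly $N+2$. Three facts are needed:
\begin{itemize}
\item The arcs $(s,u_i)$ have delay $1$, by Lemma \ref{lemma:complex:load_k2}.
\item The chain arcs have delay $1$, by Lemma \ref{lemma:complex:load_kH}.
\item $k_H$ cannot reach $t$ via $(v_j,t)$ or the bypass arcs. Those are fixed or saturated by Lemmas \ref{lemma:complex:load_k2} and \ref{lemma:complex:load_kL}: the load on $(v_j,t)$ is $W/m+1$, split as $W/m$ from $k_L$ and $1$ from $k_2$, leaving no room.
\end{itemize}
So every $k_H$ path is $s\to u_i\to v_j\to v_j^1\to\cdots\to t$. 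Two details need care:
\begin{itemize}
\item Every arc $(u_i,v_j)$ carrying positive $k_H$ flow lies on a used path, so the delay constraint applies to it.
\item Arcs not used by $k_H$ contribute $0$ to the $k_H$ load, so the bound $w_i+W$ still holds for them in the sum of step 5.
\end{itemize}
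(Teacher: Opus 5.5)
Your proof is correct, but it reaches the count by a different route than the paper.

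\textbf{The paper's route.} The paper applies flow conservation to the \emph{total} arc loads at $v_j$, which gives $\sum_{i\in I} y_{(u_i,v_j)} = W(n-2)$. It then substitutes the \emph{exact} load of every arc $(u_i,v_j)$: exactly $w_i$ if the arc carries $k_L$, and exactly $w_i+W$ otherwise. The second equality, namely that every non-$k_L$ arc leaving $u_i$ carries exactly $w_i+W$ of $k_H$, is established inside the proof of Lemma~\ref{lemma:complex:single_arc_kL} rather than in its statement. Substituting yields $W+(n-l)W = W(n-2)$, so $l=3$ for each $j$ directly. This needs neither the weight of $T_j$ nor a global counting step.

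\textbf{Your route.} You use conservation of $k_H$ alone at $v_j$ together with only the \emph{upper} bound $w_i+W$, which you re-derive from the budget $d_H$. You also use $\sum_{i\in T_j} w_i = W/m$ from Lemma~\ref{lemma:complex:load_kL}. Per vertex this gives only $|T_j|\le 3$, and you close with the pigeonhole argument $\sum_j |T_j| = 3m$.

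\textbf{Comparison.} Your argument relies only on the stated lemmas and never needs the $k_H$ loads to be tight, at the cost of the extra global step. The paper's argument is shorter but leans on a fact proved in passing.

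You also do not need the weight of $T_j$. The cruder bound $\sum_{i\notin T_j} w_i \le W$ already gives $|T_j| \le 3+3/n < 4$, since $n\ge 6$.
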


This result is proved by using the possible loads of arcs $(u_i,v_j)$, which depend on the commodities they carry, as shown in Lemma \ref{lemma:complex:single_arc_kL}, and the flow conservation at node $v_j$.

\bigskip

\begin{proposition}
    If the  DCMCF instance described above has a feasible solution of value $\gamma =1$, the \textsc{3-Partition} instance with weights $\{w_i\}_{i \in I}$ is of type True. \label{prop:complex:flow_implies_part}
\end{proposition}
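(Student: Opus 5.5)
The plan is to assemble the lemmas into an explicit partition. We assume a feasible solution with $\gamma = 1$; any solution with $\gamma \geq 1$ can be scaled down to $\gamma=1$ on each path, and the lemmas were stated under this hypothesis anyway. By Lemma~\ref{lemma:complex:single_arc_kL}, for each object $i \in I$ commodity $k_L$ leaves $u_i$ on a single arc carrying exactly $w_i$. By Lemma~\ref{lemma:complex:path_kL}, this arc is a partition arc $(u_i,v_j)$, since the bypass arc $(u_i,t)$ is saturated by $k_2$. We therefore define a map $\sigma: I \to J$ by letting $\sigma(i)$ be the unique $j$ with $(u_i,v_j)$ carrying $k_L$, and set $T_j = \sigma^{-1}(j)$. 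By construction the sets $\{T_j\}_{j\in J}$ are pairwise disjoint and cover $I$.

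Next we compute the weight of each $T_j$. By Lemma~\ref{lemma:complex:path_kL}, every $k_L$ path is of the form $s\to u_i\to v_j\to t$. Flow conservation of $k_L$ at $v_j$ then says that the $k_L$-load on $(v_j,t)$ equals the sum of the $k_L$-loads on the arcs $(u_i,v_j)$. By Lemma~\ref{lemma:complex:single_arc_kL}, this sum is $\sum_{i\in T_j} w_i$. By Lemma~\ref{lemma:complex:load_kL}, the left-hand side equals $W/m$. Hence $\sum_{i \in T_j} w_i = W/m$ for every $j\in J$.

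It remains to check cardinalities. By Lemma~\ref{lemma:complex:three_arcs_kL}, exactly three arcs entering $v_j$ carry $k_L$. These arcs are partition arcs, because $(s,v_j)$ is saturated by $k_2$ and $k_L$ uses no other arcs into $v_j$. They therefore correspond exactly to the objects in $T_j$, so $|T_j| = 3$. (The count also follows from $\sum_j |T_j| = n = 3m$, but having each part be a triplet is what \textsc{3-Partition} requires.) Thus $\{T_j\}$ is a valid 3-partition, and the instance is True.

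The only delicate step is making sure the $k_L$ flow into $v_j$ uses only partition arcs and leaves only through $(v_j,t)$. This is needed so that conservation gives exactly $\sum_{i\in T_j}w_i$, and it is guaranteed by Lemma~\ref{lemma:complex:path_kL}. The rest is bookkeeping, since the substantive work lies in the previous lemmas.
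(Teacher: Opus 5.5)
Your proof is correct and follows the paper's argument essentially step for step. You define $T_j$ through the unique $k_L$-carrying arc out of each $u_i$ (Lemma \ref{lemma:complex:single_arc_kL}), take $|T_j|=3$ from Lemma \ref{lemma:complex:three_arcs_kL}, and obtain $\sum_{i\in T_j} w_i = W/m$ from flow conservation of $k_L$ at $v_j$ together with Lemma \ref{lemma:complex:load_kL}. One small caveat concerns your parenthetical claim that the count also follows from $\sum_j |T_j| = n = 3m$: this is not valid on its own, since it only fixes the average part size and the instance only assumes $w_i \le W/m$. Lemma \ref{lemma:complex:three_arcs_kL} is therefore genuinely needed at that point, as your main line in fact uses it.
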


Lemmas \ref{lemma:complex:single_arc_kL} and \ref{lemma:complex:three_arcs_kL} showed that if the $\text{DCMCF}$ instance is feasible, the arcs $(u_i,v_j)$ crossed by commodity $k_L$ allow the construction of triplets of objects (and that each arc carries a load $w_i$) ; flow conservation and Equation \eqref{eq:complex:load_vjt} allow us to conclude that the sum of weights of each triplet is $\frac{W}{m}$, showing that the \textsc{3-Partition} instance was feasible.
The complete proof is given in Appendix \ref{proof:complex:flow_implies_part}.

\begin{theorem}
 DCMCF is strongly NP-hard, even when there are at most $3$ commodities, all sharing the same source and destination, and when every path from the source to the destination can potentially be used by each commodity.
 \label{theorem:complex}
\end{theorem}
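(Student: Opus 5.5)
The theorem follows by combining Propositions \ref{prop:complex:part_implies_flow} and \ref{prop:complex:flow_implies_part} into a polynomial, pseudo-polynomially-safe reduction from \textsc{3-Partition}, which is strongly NP-complete \parencite{garey1975complexity}. Given an instance with weights $\{w_i\}_{i\in I}$, I first apply the preprocessing described above. Instances with $n$ not divisible by $3$, with some $w_i > W/m$, or with $m=1$ are decided trivially. Otherwise the weights are rescaled so that $W = 1/(2n)^2$. I then build the graph $D$ of Figure \ref{man:fig:dc-mcf:complexity} and the three commodities $k_L, k_H, k_2$, with common source $s$ and sink $t$. Every commodity is allowed to use every $s$–$t$ path of $D$; this is the path set used in Lemmas \ref{lemma:complex:path_k2}–\ref{lemma:complex:three_arcs_kL}. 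The decision version asks whether $\gamma\ge 1$ is achievable. Proposition \ref{prop:complex:part_implies_flow} gives the ``yes $\Rightarrow$ yes'' direction.

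For the converse, Proposition \ref{prop:complex:flow_implies_part} is stated for $\gamma = 1$, while the decision question is $\gamma\ge 1$. I close this gap with a scaling argument. Take a feasible solution with $\gamma>1$ and multiply every $x^k_p$ by $1/\gamma$. The arc loads can only decrease, so every delay $1/(c_a-y_a)$ decreases. The set of used paths is unchanged, so all conditional delay constraints \eqref{eq:dcmf:delay} remain satisfied, as do the capacity constraints. This yields a feasible solution with $\gamma=1$. Hence $\gamma\ge1$ is achievable if and only if the \textsc{3-Partition} instance is True.

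The main obstacle is making the reduction strongly polynomial. All numbers must be encoded in size polynomial in $n$ and in the unary size of the original weights. The vertex count is $O(n+ mN) = O(n^3)$, since each chain has $N=(2n)^2$ vertices. The arc count is $O(n^2 + n^3)$. Capacities, sizes and delay bounds are rationals whose numerators and denominators are polynomial in $n$ and in the original integer weights. Indeed, $W = 1/N$ after rescaling means $w_i = w_i^{\mathrm{orig}}/(N\,W^{\mathrm{orig}})$, and the expressions $d_L=2+N/2$, $d_H=N+2+N$, $c_{(u_i,v_j)}=w_i+2/N$, and so on are all of this form. If the original weights are bounded by a polynomial, the magnitudes involved remain polynomially bounded after clearing denominators. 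This is the property strong NP-hardness requires.

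The restrictions stated in the theorem follow directly from the construction. There are exactly three commodities, they share $s$ and $t$, and all paths are permitted. Therefore DCMCF is strongly NP-hard under these restrictions.
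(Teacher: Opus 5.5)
Your proposal is correct and follows the paper's argument: the theorem comes from combining Propositions \ref{prop:complex:part_implies_flow} and \ref{prop:complex:flow_implies_part} with the strong NP-completeness of \textsc{3-Partition}. Your extra steps are sound and fill in details the paper's one-line proof leaves implicit: scaling a solution with $\gamma>1$ down to $\gamma=1$, and checking that all data are rationals with polynomially bounded numerators and denominators. You could also note that the path sets $P^k$ are part of the input; they contain $O(n^2)$ paths, each with at most $N+3$ arcs, so the encoding stays polynomial.
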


\begin{proof}
This is a direct consequence of  Propositions \ref{prop:complex:part_implies_flow} and \ref{prop:complex:flow_implies_part} and the strong NP-hardness of \textsc{3-Partition}.
\end{proof}

\section{Relaxation of the delay constraints}
\label{sec:env}

Constraints \eqref{eq:dcmf:delay} rewritten as \eqref{eq:dcmf:delay_time_xkp} are the ``difficult'' constraints of the DCMCF problem. In fact, the function $g_{p}: \{x^k_p, (y_a)_{a \in p}\}  \rightarrow   \sum_{a \in p} \frac{x^k_p}{c_a - y_a} - x^k_p d^k$ is a non-convex function.  
One way to get good convex relaxations of DCMCF consists of replacing, for every $k \in K$ and $p \in P^k$, constraint $g_p(\{x^k_p, (y_a)_{a \in p}\}) \leq 0$ by constraint $h_p(\{x^k_p, (y_a)_{a \in p}\}) \leq 0$ where $h_p$  is convex and satisfies $h_p \le g_p$. Since $g_p$ is an additive function with a delay term for each arc of $p$, it is natural to decompose the problem by considering all terms separately.  This leads to the study of the function $f$ defined by  $f(x,y) = \frac{x}{1-y}$ for $(x,y) \in [0,1]^2$. 
Constraints \eqref{eq:dcmf:delay_time_xkp} can  then be formulated as: 
\begin{small_equation}
    \sum_{a \in p} f(\frac{b^k x^k_p}{c_a}, \frac{y_a}{c_a} ) \leq b^k x^k_p \times d^k  \qquad \forall k \in K, p \in P^k. \label{eq:constr_f}
\end{small_equation}
To obtain  good convex relaxations, one can then focus on the convex envelope of $f$.  More precisely, one can notice that, for every $k \in K$, $p \in P^k$ and $a \in A$, the domain of $\frac{b^k x_p^k}{c_a}$ and $\frac{y_a}{c_a}$ is narrower than $[0,1]^2$:
\begin{itemize}
    \item[-] Inequality \eqref{eq:dcmf:xa} ensures that $\frac{b^k x_p^k}{c_a} \leq \frac{y_a}{c_a}$.
    \item[-] Inequality \eqref{eq:dcmf:delay} ensures that $\frac{y_a}{c_a} < 1$ (otherwise the delay of link $a$ would be infinite, violating the delay constraint).
    \item[-] Finally, both the $x_p^k$ and $y_a$ variable are nonnegative.
\end{itemize}
Therefore we will compute the convex envelope $\check{f}$  of $f$ on the polyhedral subset $\mathcal{X}_{\epsilon, \theta, \zeta, \eta}$ defined for $(\epsilon, \zeta, \eta, \theta)$ in $]0,1[^2 \times [0,1]^2$ by: 
\begin{small_equation}
    \mathcal{X}_{\epsilon, \theta, \zeta, \eta} = \{(x,y) \in [0,1]^2: y \geq x+\theta, y \in [\eta, 1-\epsilon], x \leq 1- \zeta\}.
\end{small_equation}
As will be shown in Section \ref{sub:param}, given some $p \in P^k$ and $a \in p$, inequalities relating $x^k_p$ and $y_a$   can be easily used to deduce possible values of $(\epsilon, \theta, \zeta, \eta) \in [0,1]^4$. Note that these values will depend on $p$ and $a$ and one can solve some intermediate convex problems to get the best values. Before providing complete details in Section \ref{sub:param}, let us start for now by computing the convex envelope of $f$ on $\mathcal{X}_{\epsilon, \theta, \zeta, \eta}$.

%with $\epsilon, \zeta, \eta, \theta$ in $]0,1]^2 \times [0,1]^2$.

Let us first divide $\mathcal{X}_{\epsilon, \theta, \zeta, \eta}$ into $3$  polyhedral subsets (see Figure \ref{fig:domain} for illustration)
$\mathcal{X}_{\epsilon,\theta, \zeta, \eta} = \mathcal{R}_{\epsilon, \theta, \eta } \cup \mathcal{S}_{\epsilon, \theta, \zeta, \eta} \cup \mathcal{T}_{\epsilon, \theta,  \zeta}$, where  
{
\begin{small_align}
    \mathcal{R}_{\epsilon, \theta, \eta } = \bigg\{(x,y) \in [0,1]^2:& y \geq \eta, y \leq 1-\epsilon - \frac{1-\epsilon-\eta}{\eta-\theta} x\bigg\}  \nonumber \\
    \mathcal{S}_{\epsilon,\theta,  \zeta, \eta} = \bigg\{(x,y) \in [0,1]^2:& y \geq x+\theta, 
      y \geq 1-\epsilon - \frac{1-\epsilon-\eta}{\eta-\theta} x,  y \leq 1-\epsilon - \frac{\zeta-\epsilon-\theta}{1-\zeta} x\bigg\}  \nonumber \\    \mathcal{T}_{\epsilon,\theta,  \zeta} = \bigg\{(x,y) \in [0,1]^2:& y \leq 1-\epsilon, x \leq 1-\zeta, y \geq 1-\epsilon - \frac{\zeta-\epsilon - \theta}{1-\zeta} x  \bigg\}. \nonumber
\end{small_align}}
Notice that the partition makes sense (i.e., the $3$ subsets are well-defined) only if  
\begin{small_equation}
    0 \leq \theta \leq \eta \leq 1-\zeta + \theta \leq 1-\epsilon \leq 1 .
    \label{eq:condi}
\end{small_equation} 

We will compute the convex envelope of $f$ on each of the subsets $\mathcal{R}_{\epsilon, \theta, \eta }$, $\mathcal{S}_{\epsilon, \theta, \zeta, \eta}$ and $\mathcal{T}_{\epsilon, \zeta, \eta}$. Then we will show that they can be combined to obtain the convex envelope on $\mathcal{X}_{\epsilon, \theta, \zeta, \eta}$.

\begin{figure}[ht]
\centering
\includegraphics[page=2,trim={4.5cm 1cm 13cm 0.5cm},clip, width=.4\textwidth]{images_CDMCF2.pdf}
\caption{
    Representation of the domain $\mathcal{X}_{\epsilon, \theta, \zeta, \eta}$ of the function $f$, and its partition in $\mathcal{R}_{\epsilon, \theta, \eta } \cup \mathcal{S}_{\epsilon, \theta, \zeta, \eta} \cup \mathcal{T}_{\epsilon, \theta,  \zeta}$.}
    %\vspace{-1cm}
     \label{fig:domain}
\end{figure}

\subsection{Convex envelope of $f$ on $\mathcal{R}_{\epsilon, \theta, \eta}$}

Consider the function $r_\eta: \mathcal{X}_{\epsilon, \theta, \zeta, \eta} \rightarrow \mathbb R$ where $r_\eta(x,y)= \frac{x}{1-\eta}$.

\begin{proposition}
$r_\eta$ is a convex under-estimator of $f$ on  $\mathcal{X}_{\epsilon, \theta, \zeta, \eta}$ (i.e., $\forall (x,y) \in  \mathcal{X}_{\epsilon, \theta, \zeta, \eta}$, \\ $r_\eta(x,y) \leq   \check{f}_{\mathcal{X}_{\epsilon, \theta, \zeta, \eta}}(x,y) 
\leq  f(x,y)$).  When $f$ is restricted to $\mathcal{R}_{\epsilon, \theta,\eta}$, its convex envelope is given by $r_\eta$  (i.e., $\check{f}_{\mathcal{R}_{\epsilon, \theta, \eta}}  = r_\eta$).
\label{pro:eta}
\end{proposition}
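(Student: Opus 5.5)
The plan has two parts: first show that $r_\eta$ is a convex under-estimator of $f$ on all of $\mathcal{X}_{\epsilon, \theta, \zeta, \eta}$, then show that it is tight on $\mathcal{R}_{\epsilon, \theta, \eta}$.

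\textbf{Under-estimation on $\mathcal{X}_{\epsilon, \theta, \zeta, \eta}$.} The function $r_\eta$ is linear, hence convex. On $\mathcal{X}_{\epsilon, \theta, \zeta, \eta}$ we have $x \ge 0$ and $\eta \le y \le 1-\epsilon < 1$. Therefore $0 < 1-y \le 1-\eta$, which gives
\begin{equation*}
\frac{x}{1-y} \ge \frac{x}{1-\eta}.
\end{equation*}
If $\eta = 1$, the domain is empty or degenerate, and we dismiss this case using \eqref{eq:condi}. So $r_\eta \le f$ on $\mathcal{X}_{\epsilon, \theta, \zeta, \eta}$. Since the convex envelope is the largest convex under-estimator, we get $r_\eta \le \check f_{\mathcal{X}} \le f$. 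The same argument applies verbatim on $\mathcal{R}_{\epsilon, \theta, \eta} \subseteq \mathcal{X}_{\epsilon, \theta, \zeta, \eta}$ (by \eqref{eq:condi}, and since points of $\mathcal{R}$ satisfy $y \ge \eta$ and lie under the line through $(0,1-\epsilon)$ and $(\eta-\theta,\eta)$, so $y \le 1-\epsilon$). Hence $r_\eta \le \check f_{\mathcal{R}}$.

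\textbf{Tightness on $\mathcal{R}_{\epsilon, \theta, \eta}$.} For the reverse inequality $\check f_{\mathcal{R}} \le r_\eta$, I will use the convex-combination characterization of the envelope. The set $\mathcal{R}_{\epsilon, \theta, \eta}$ is the triangle with vertices $A=(0,\eta)$, $B=(0,1-\epsilon)$ and $C=(\eta-\theta,\eta)$. Each point $(x,y)\in\mathcal{R}$ can be written as a convex combination of points on which $f$ coincides with $r_\eta$. These are:
\begin{itemize}
\item the edge $x=0$, where $f=0=r_\eta$;
\item the bottom edge $y=\eta$, where $f(x,\eta)=x/(1-\eta)=r_\eta$.
\end{itemize}
Concretely, given $(x,y)\in\mathcal{R}$ with $x>0$, I pick the point $P_1=(0,y_1)$ on edge $AB$ and the point $P_2=(x_2,\eta)$ on edge $AC$ such that $(x,y)$ lies on segment $P_1P_2$. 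One convenient choice is $P_2=C$ and $P_1$ the intersection of line $C(x,y)$ with the $y$-axis; this intersection stays within $[\eta,1-\epsilon]$ because $\mathcal{R}$ is a triangle. Write $(x,y)=\lambda P_1+(1-\lambda)P_2$. Then $x=(1-\lambda)x_2$, and
\begin{equation*}
\lambda f(P_1)+(1-\lambda)f(P_2) = 0+(1-\lambda)\frac{x_2}{1-\eta} = \frac{x}{1-\eta} = r_\eta(x,y).
\end{equation*}
Hence $\check f_{\mathcal{R}}(x,y)\le r_\eta(x,y)$, and combining with the first part gives equality.

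The only delicate step is checking that the decomposition points actually lie in $\mathcal{R}$. This is elementary geometry of the triangle, using \eqref{eq:condi} to guarantee $\eta-\theta\ge 0$ and $\eta\le 1-\epsilon$. The degenerate case $\eta=\theta$ (where $\mathcal{R}$ reduces to a segment on $x=0$) is trivial, since there both functions vanish.
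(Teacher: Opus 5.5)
Your proof is correct and follows essentially the same route as the paper: the inequality $r_\eta \le f$ follows from linearity and $y \ge \eta$, and tightness on $\mathcal{R}_{\epsilon,\theta,\eta}$ comes from writing each point as a convex combination of a point on the edge $x=0$ and a point on the edge $y=\eta$, where $f$ and $r_\eta$ coincide. The only difference is cosmetic: the paper anchors the segment at the vertex $(0,1-\epsilon)$ and varies the point on the bottom edge, while you anchor at $(\eta-\theta,\eta)$ and vary the point on the $y$-axis; either choice works because $r_\eta$ is linear and agrees with $f$ on both edges.
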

\begin{proof}
$r_\eta$ is linear and thus convex. We also have 
 $r_\eta \leq f$ on $\mathcal{X}_{\epsilon, \theta, \zeta, \eta}$ since $y \geq \eta$. \\
So $r_\eta$ is a convex under-estimator of $f$. To show that it is the convex envelope of $f$ on $\mathcal{R}_{\epsilon, \theta, \eta}$, let us show that $\check{f}_{\mathcal{R}_{\epsilon, \theta, \eta}}(x,y)  \leq r_\eta(x,y),  \forall (x,y) \in \mathcal{R}_{\epsilon, \theta, \eta}$.
Observe that $(x,y) \in  \mathcal{R}_{\epsilon, \theta, \eta }$ can be written  as 
 $\alpha (x_0, \eta) + (1-\alpha)(0,1-\epsilon)$ with $\alpha = \frac{x}{x_0} \in [0,1]$ and $x_0 = x \frac{1 - \epsilon - \eta}{1 - \epsilon  - y}$. Note that  $(x_0, \eta)$ and  $(0,1-\epsilon)$  both belong to $\mathcal{R}_{\epsilon, \theta, \eta }$, and that $r_\eta(x,y)=\alpha f(x_0, \eta) + (1-\alpha) f(0, 1-\epsilon)$. By convexity of $\check{f}_{\mathcal{R}_{\epsilon, \theta, \eta}}$, we must have $r_\eta \geq \check{f}_{\mathcal{R}_{\epsilon, \theta, \eta}}$, and thus $r_\eta$ is the convex envelope of $f$ restricted to $R_{\epsilon, \theta, \eta}$. 
\end{proof}

\subsection{Convex envelope of $f$ on $\mathcal{S}_{\epsilon, \theta, \eta, \zeta}$} \label{sec:conv_env:s}

We define the function $s_{\epsilon, \theta}: \mathcal{X}_{\epsilon, \theta, \zeta, \eta} \rightarrow \mathbb R$ as $s_{\epsilon,\theta}(x,y)= \frac{x}{1-\theta - \frac{x}{1- \frac{y-x-\theta}{1-\epsilon-\theta}}}$  if $(x,y) \neq (0, 1-\epsilon)$, and $s_{\epsilon, \theta}(0, 1-\epsilon)=0$.

\begin{proposition}
$s_{\epsilon, \theta}$ is a convex under-estimator of $f$ on  $\mathcal{X}_{\epsilon, \theta, \zeta, \eta}$.  When $f$ is restricted to $\mathcal{S}_{\epsilon, \theta, \zeta, \eta}$, its convex envelope is given by $s_{\epsilon, \theta}$  (i.e., $\check{f}_{\mathcal{S}_{\epsilon, \theta, \zeta, \eta}}  = s_{\epsilon, \theta}$).
\label{theorem:f_theta}
\end{proposition}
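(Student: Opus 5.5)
The plan is to simplify the formula of $s_{\epsilon,\theta}$ and read off three facts: it is convex, it lies below $f$, and it agrees with $f$ on the "generating" part of the boundary of $\mathcal{S}_{\epsilon,\theta,\zeta,\eta}$.

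\textbf{Rewriting $s_{\epsilon,\theta}$.} Set $L = 1-\epsilon-\theta$, which is positive by \eqref{eq:condi} and $\epsilon<1$ when $\theta<1-\epsilon$. Define the affine function $u(x,y) = 1 - \frac{y-x-\theta}{L} = \frac{1-\epsilon-y+x}{L}$. On $\mathcal{X}_{\epsilon,\theta,\zeta,\eta}$ we have $y\le 1-\epsilon$, so $u\ge x/L\ge 0$, with $u=0$ only at $(0,1-\epsilon)$. Then
\[ s_{\epsilon,\theta}(x,y) = \frac{x}{1-\theta - x/u} = \frac{x\,u}{(1-\theta)u - x}. \]
The first step is to check that the denominator $(1-\theta)u-x$ is positive away from $(0,1-\epsilon)$. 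This follows from $u\ge x/L$ and $1-\theta > L$.

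\textbf{Convexity.} With $\alpha = 1-\theta$, a direct algebraic identity gives
\[ s_{\epsilon,\theta} = \frac{x}{\alpha} + \frac{x^2}{\alpha\,(\alpha u - x)}. \]
The first term is linear. The second is a quadratic-over-linear function, i.e. the perspective of $x\mapsto x^2$ composed with an affine map, with positive denominator. It is therefore convex, and it is also second-order-cone representable, which is what the paper wants later. Convexity extends to the point $(0,1-\epsilon)$: the map $(x,u)\mapsto xu/(\alpha u - x)$ is positively homogeneous of degree one, so $s_{\epsilon,\theta}\to 0$ there, and the convention $s_{\epsilon,\theta}(0,1-\epsilon)=0$ makes the function lower semicontinuous.

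\textbf{Under-estimation.} I write $s_{\epsilon,\theta}\le f$ as $(1-\theta) - x/u \ge 1-y$. Setting $a=y-\theta$, this is equivalent to $a(L-a+x)\ge xL$, which factorises as
\[ (a-x)(L-a)\ge 0. \]
Both factors are nonnegative on $\mathcal{X}_{\epsilon,\theta,\zeta,\eta}$, because $y\ge x+\theta$ and $y\le 1-\epsilon$. The degenerate cases $x=0$ and the apex $(0,1-\epsilon)$ are immediate. Hence $s_{\epsilon,\theta}$ is a convex under-estimator of $f$ on $\mathcal{X}_{\epsilon,\theta,\zeta,\eta}$, and in particular on $\mathcal{S}_{\epsilon,\theta,\zeta,\eta}$.

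\textbf{Envelope on $\mathcal{S}$.} This is the main geometric step. The two non-diagonal boundary lines of $\mathcal{S}_{\epsilon,\theta,\zeta,\eta}$ are $y = 1-\epsilon - \frac{1-\epsilon-\eta}{\eta-\theta}x$ and $y = 1-\epsilon-\frac{\zeta-\epsilon-\theta}{1-\zeta}x$. Both pass through $(0,1-\epsilon)$. So $\mathcal{S}_{\epsilon,\theta,\zeta,\eta}$ is a triangle with apex $(0,1-\epsilon)$ and base on the diagonal $y=x+\theta$.

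Every $(x,y)\in\mathcal{S}_{\epsilon,\theta,\zeta,\eta}$ therefore writes as $\lambda(x_1,x_1+\theta) + (1-\lambda)(0,1-\epsilon)$ with $(x_1,x_1+\theta)\in\mathcal{S}_{\epsilon,\theta,\zeta,\eta}$. Along such a segment, the affine function $u$ and the coordinate $x$ both vanish at the apex, so the ratio $x/u$ is constant. By degree-one homogeneity in $(x,u)$, the function $s_{\epsilon,\theta}$ is then linear along the segment, giving $s_{\epsilon,\theta}(x,y)=\lambda\, s_{\epsilon,\theta}(x_1,x_1+\theta)$.

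At the endpoints $s_{\epsilon,\theta}$ coincides with $f$. On the diagonal $u=1$, so $s_{\epsilon,\theta} = x_1/(1-\theta-x_1) = f(x_1,x_1+\theta)$. At the apex both functions equal $0$. Hence
\[ s_{\epsilon,\theta}(x,y) = \lambda f(x_1,x_1+\theta) + (1-\lambda) f(0,1-\epsilon) \ge \check f_{\mathcal{S}_{\epsilon,\theta,\zeta,\eta}}(x,y). \]
Combined with the previous step, this yields $\check f_{\mathcal{S}_{\epsilon,\theta,\zeta,\eta}} = s_{\epsilon,\theta}$, mirroring the argument of Proposition \ref{pro:eta}.

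The main obstacles are spotting the quadratic-over-linear decomposition and recognising that $\mathcal{S}_{\epsilon,\theta,\zeta,\eta}$ is a triangle with apex $(0,1-\epsilon)$. The remaining work is routine verification of signs and of the boundary cases.
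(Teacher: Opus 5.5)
Your proof is correct and follows essentially the same route as the paper. The steps match one for one. You show the denominator is positive. You prove convexity by a perspective-type argument: your decomposition $\frac{x}{1-\theta}+\frac{x^2}{(1-\theta)((1-\theta)u-x)}$ is the perspective of $f_\theta$ written out explicitly. You use the same factorisation $(y-x-\theta)(1-\epsilon-y)\ge 0$ for under-estimation. For the envelope on $\mathcal{S}_{\epsilon,\theta,\zeta,\eta}$, you take the same convex combination of the apex $(0,1-\epsilon)$ with a point of the diagonal $y=x+\theta$; your weight $\lambda=u(x,y)$ is the paper's $\alpha$.

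Your positivity argument via $u\ge x/L$ is also sounder than the paper's written chain. The paper's first step, $1-\theta-\frac{x}{u}\ge 1-\theta-x$, has the inequality reversed, because $u\le 1$ on $\mathcal{X}_{\epsilon,\theta,\zeta,\eta}$.
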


\begin{proof}

Let us first check that the denominator of  $s_{\epsilon, \theta}(x,y)$ is strictly positive. Observe that $1 - \theta  -\frac{x}{1- \frac{y-x-\theta}{1-\epsilon-\theta}} = 1 - \theta - x \frac{1-\epsilon-\theta}{1-\epsilon-y+x}$. $1-\epsilon-y+x$ is non-zero on $\mathcal{X}_{\epsilon, \theta, \zeta, \eta}$ if $(x,y) \neq (0, 1-\epsilon)$. Using $y-x \geq \theta$, we get $1 -\theta-\frac{x}{1- \frac{y-x-\theta}{1-\epsilon-\theta}} \geq 1-\theta-x  \geq 1-y \geq \epsilon > 0$. $s_{\epsilon, \theta}$ is then well-defined on $\mathcal{X}_{\epsilon, \theta, \zeta, \eta}$ and positive. It is also easy to prove the continuity of the function at $(0,1-\epsilon)$ since $\lim_{(x,y) \rightarrow (0,1-\epsilon)} s_{\epsilon, \theta}(x,y) = 0 = s_{\epsilon, \theta}(0, 1-\epsilon)$.

To prove the convexity of $s_{\epsilon, \theta}$ on $\mathcal{X}_{\epsilon, \theta, \zeta, \eta}$, one can for example use the fact that  the perspective function of a convex function is also convex.  
Consider the function $f_\theta$ defined on $[0,1[$ by $f_\theta (x)=f(x,x + \theta) = \frac{x}{1-x-\theta}$. $f_\theta$ is obviously convex. Therefore, its perspective function $p$  defined 
as $p_s(x,t)=t f_\theta(\nicefrac{x}{t})=\frac{x}{1-\theta-\nicefrac{x}{t}}$ is convex.
Since $s_{\epsilon, \theta}(x,y)=p_s(x, 1- \frac{y-x-\theta}{1-\epsilon-\theta})$ is the composition of a convex and an affine function, $s_{\epsilon,\theta}$ is convex.

To show that $s_{\epsilon, \theta}$ is an under-estimator of $f$ over $\mathcal{X}_{\epsilon, \theta, \zeta, \eta}$ for $x>0$, we should prove that, for $(x,y) \in \mathcal{X}_{\epsilon, \theta, \zeta, \eta}$,  $
\frac{s_{\epsilon, \theta}(x,y)}{f(x,y)} = \frac{1-y}{1- \theta-\frac{x}{1 -\frac{y-x-\theta}{1-\epsilon-\theta}}}  = \frac{(1-y)(1 - \epsilon + x - y)}{(1-\theta) (1 - \epsilon + x -y) - x(1-\epsilon-\theta)}$ is less than $1$. Considering the difference between the numerator and the denominator we get: 
\begin{small_equation*}
(1-y)(1 - \epsilon + x - y) - (1-\theta) (1 - \epsilon + x -y) + x(1-\epsilon-\theta) =  (\theta + x - y) (1 - \epsilon -y)     
\end{small_equation*}
which is obviously non-positive on the set $\mathcal{X}_{\epsilon, \theta, \zeta, \eta}$. For $x=0$, both functions are equal to $0$.

Finally, let us restrict $f$ to $\mathcal{S}_{\epsilon, \theta, \zeta, \eta}$. Since $s_{\epsilon, \theta}$ is a convex under-estimator of $f$, we already have $\check{f}_{\mathcal{S}_{\epsilon, \theta, \zeta, \eta}}  \geq s_{\epsilon, \theta}$. For the other direction, given any $(x,y) \in \mathcal{S}_{\epsilon, \theta, \zeta, \eta}$, let $\alpha=\frac{1-\epsilon + x- y }{1-\epsilon - \theta}$ and $x_0=y_0-\theta= \frac{x}{\alpha}$. Observe that  $(x_0,y_0)$ and $(0,1-\epsilon)$ are both in $\mathcal{S}_{\epsilon, \theta, \zeta, \eta}$,  and $(x,y)$ is a convex combination of $(x_0,y_0)$ and $(0,1-\epsilon)$: $(x,y)=\alpha (x_0,y_0) + (1-\alpha) (0,1-\epsilon)$.  Moreover, one can easily check that $s_{\epsilon, \theta}(x,y)= \alpha f(x_0,y_0) + (1-\alpha) f(0,1-\epsilon)$. By convexity of $\check{f}_{\mathcal{S}_{\epsilon, \theta, \zeta, \eta}}$, we deduce that $\check{f}_{\mathcal{S}_{\epsilon, \theta, \zeta, \eta}}  \leq s_{\epsilon, \theta}$.  
\end{proof}

The next proposition states that a constraint of form $s_{\epsilon, \theta}(x,y) \leq d$ can be expressed through second-order cone programming. 
\begin{proposition}
$s_{\epsilon, \theta}(x,y) \leq d$  can be expressed as the hyperbolic constraint $(z+d)( z- \frac{1}{1-\theta} x)  \geq  z^2$ and the linear constraint $z = 1-\frac{y-x - \theta}{1-\epsilon - \theta}$. 
\label{pro:socp_s}
\end{proposition}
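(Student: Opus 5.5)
The plan is to introduce the auxiliary variable $z = 1-\frac{y-x-\theta}{1-\epsilon-\theta}$ and rewrite $s_{\epsilon,\theta}$ in terms of $x$ and $z$. By the definition in Section~\ref{sec:conv_env:s},
\[
s_{\epsilon,\theta}(x,y)=\frac{x}{1-\theta-\frac{x}{z}}=\frac{xz}{(1-\theta)z-x},
\]
which is the perspective form already used in the proof of Proposition~\ref{theorem:f_theta}. That proof also gives the facts I need: on $\mathcal{X}_{\epsilon,\theta,\zeta,\eta}$ we have $z>0$ whenever $(x,y)\neq(0,1-\epsilon)$, since $1-\epsilon-y+x>0$. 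It also shows that the denominator $1-\theta-x/z$ is at least $\epsilon>0$, so $(1-\theta)z-x>0$, that is, $z-\frac{x}{1-\theta}>0$.

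Working with $z>0$, I would clear denominators. The constraint $s_{\epsilon,\theta}\le d$ is equivalent to
\[
xz\le d\bigl((1-\theta)z-x\bigr),
\]
and dividing by $1-\theta>0$ this reads $\frac{xz}{1-\theta}\le dz-\frac{dx}{1-\theta}$. Now expand $(z+d)\bigl(z-\frac{x}{1-\theta}\bigr)\ge z^2$. This gives
\[
z^2-\frac{xz}{1-\theta}+dz-\frac{dx}{1-\theta}\ge z^2,
\]
i.e. $dz-\frac{dx}{1-\theta}\ge \frac{xz}{1-\theta}$, which is exactly the same inequality. So the equivalence is pure algebra once $z$ is substituted.

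For this to count as a genuine hyperbolic (rotated second-order cone) constraint, both factors must be nonnegative. On the domain, $z-\frac{x}{1-\theta}>0$ by the bound above, and $z+d>0$ because $d\ge 0$ is implicit (since $s_{\epsilon,\theta}\ge 0$). The standard identity $uv\ge w^2$ with $u,v\ge0$, equivalently $\|(2w,u-v)\|\le u+v$, then gives the SOCP form.

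The only delicate point is the degenerate point $(x,y)=(0,1-\epsilon)$, where $z=0$ and $s_{\epsilon,\theta}$ was defined to be $0$ by continuity. I would check it separately: there $x=0$ and $z=0$, so the hyperbolic constraint reads $d\cdot 0\ge 0$, which always holds, consistent with $s_{\epsilon,\theta}=0\le d$. I would also remark that, as constraints in the lifted space, the linear equation fixing $z$ keeps us on the domain, so the sign conditions on the factors are automatic there. I expect no real obstacle beyond this careful bookkeeping of the sign conditions.
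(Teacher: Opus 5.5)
Your proposal is correct and follows essentially the paper's route: substitute $z$, then use the strict positivity of $z-\frac{x}{1-\theta}$ on the domain to clear the denominator. You cross-multiply and expand, whereas the paper adds $z$ to both sides to reach $\frac{z^2}{z-\frac{x}{1-\theta}}\le z+d$; this difference is cosmetic, and your separate treatment of $(0,1-\epsilon)$ and of the cone's sign conditions is a useful addition the paper omits.

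One caution on the positivity fact you import from the proof of Proposition~\ref{theorem:f_theta}: the step $1-\theta-\frac{x}{z}\ge 1-\theta-x$ there is reversed, because $z\le 1$ on the domain. The bound still holds directly, since $y\le 1-\epsilon$ gives $\frac{x}{z}=\frac{x(1-\epsilon-\theta)}{1-\epsilon-y+x}\le 1-\epsilon-\theta$, hence $1-\theta-\frac{x}{z}\ge\epsilon$.
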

\begin{proof}
Let $z = 1-\frac{y-x-\theta}{1-\epsilon-\theta}$. Then: 
\begin{small_equation}
    s_{\epsilon,\theta}(x,y)\leq d \quad \iff \quad \frac{x}{1 - \theta -\frac{x}{z}} \leq d \quad \iff \quad\frac{\frac{1}{1-\theta} x z}{ z-\frac{1}{1-\theta}x} + z \leq z + d  \quad \iff \quad \frac{ z^2 }{ z - \frac{1}{1-\theta} x} \leq  z + d.  
     \label{eq:socp_s}
\end{small_equation}
Using the fact that  $( z-\frac{1}{1-\theta} x )$ is strictly positive (from the  first part of the proof of Proposition \ref{theorem:f_theta}), the last inequality is equivalent to 
  $z^2 \leq (z + d) ( z- \frac{1}{1-\theta}x)$. 
\end{proof}
Note that Proposition \ref{pro:socp_s}  provides a second proof of the convexity of $s_{\epsilon, \theta}$ since its epigraph is shown to be convex. 

\subsection{Convex envelope of $f$ on $\mathcal{T}_{\epsilon, \theta, \zeta}$} \label{sec:conv_env:t}

Consider the function $t_{\epsilon, \zeta}: \mathcal{X}_{\epsilon, \theta, \zeta, \eta} \rightarrow \mathbb R$ where $t_{\epsilon, \zeta}(x,y)= \frac{x^2}{(1-\zeta) (1-\epsilon)+\epsilon x - (1-\zeta) y}$ if $(x,y) \neq (0, 1-\epsilon)$, and $t_{\epsilon, \zeta}(0, 1-\epsilon)=0$.

\begin{proposition}
$t_{\epsilon, \zeta}$ is a convex under-estimator of $f$ on  $\mathcal{X}_{\epsilon,\theta,\zeta, \eta}$.  When $f$ is restricted to $\mathcal{T}_{\epsilon, \theta, \zeta}$, its convex envelope is given by $t_{\epsilon, \zeta}$  (i.e., $\check{f}_{\mathcal{S}_{\epsilon, \theta, \zeta, \eta}}  =t_{\epsilon, \zeta}$).
\label{theorem:f_zeta}
\end{proposition}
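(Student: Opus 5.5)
We follow the same three-step template as Propositions \ref{pro:eta} and \ref{theorem:f_theta}: well-definedness, then convexity, then the under-estimator property, and finally an explicit convex-combination argument on $\mathcal{T}_{\epsilon,\theta,\zeta}$ giving the reverse inequality. (The statement writes $\check f_{\mathcal{S}}$, but the intended claim is clearly $\check f_{\mathcal{T}_{\epsilon,\theta,\zeta}}=t_{\epsilon,\zeta}$.)

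\emph{Well-definedness and convexity.} Write the denominator as the affine function $L(x,y)=(1-\zeta)(1-\epsilon-y)+\epsilon x$. On $\mathcal{X}_{\epsilon,\theta,\zeta,\eta}$ we have $y\le 1-\epsilon$ and $x\ge 0$, so $L\ge 0$. Moreover $L=0$ forces $x=0$ and $y=1-\epsilon$, since $\epsilon>0$ and $1-\zeta>0$. Hence $L>0$ away from $(0,1-\epsilon)$, and the quadratic-over-linear function $x^2/L$ is convex there. This is the standard convexity of $(u,v)\mapsto u^2/v$ on $v>0$, composed with an affine map. Equivalently, the epigraph $\{t\,L\ge x^2,\ L\ge 0\}$ is a rotated second-order cone, which includes the point $(0,1-\epsilon)$ with value $0$. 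Continuity at $(0,1-\epsilon)$ follows from $x^2/L\le x^2/(\epsilon x)=x/\epsilon\to 0$. So $t_{\epsilon,\zeta}$ is convex on the whole domain.

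\emph{Under-estimation.} For $x>0$, $t\le f$ is equivalent to $x(1-y)\le L$. We compute
\[
L-x(1-y)=(1-\epsilon-y)(1-\zeta-x)+x\bigl(\epsilon-(1-\epsilon-y)\bigr)-x\epsilon+x(1-\epsilon-y)\cdot 0 .
\]
This is cleaner done directly: $L-x+xy=(1-\zeta)(1-\epsilon-y)-x(1-\epsilon-y)=(1-\zeta-x)(1-\epsilon-y)$. Both factors are non-negative on $\mathcal{X}$ because $x\le 1-\zeta$ and $y\le 1-\epsilon$, so the difference is $\ge 0$. For $x=0$ both functions vanish. Therefore $t_{\epsilon,\zeta}\le f$, and by maximality of the envelope $t_{\epsilon,\zeta}\le\check f$ on any subset.

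\emph{Tightness on $\mathcal{T}$.} Mirroring the earlier proofs, we write each $(x,y)\in\mathcal{T}_{\epsilon,\theta,\zeta}$ as a convex combination of the apex $(0,1-\epsilon)$ and a point on the edge $x_0=1-\zeta$, namely $(x,y)=\alpha(1-\zeta,y_0)+(1-\alpha)(0,1-\epsilon)$ with $\alpha=x/(1-\zeta)$. This gives $1-\epsilon-y=\alpha(1-\epsilon-y_0)$, so $y_0=1-\epsilon-(1-\epsilon-y)(1-\zeta)/x$. Then
\[
\alpha f(1-\zeta,y_0)=\frac{x}{1-y_0}=\frac{x}{\epsilon+(1-\epsilon-y)(1-\zeta)/x}=\frac{x^2}{L}=t_{\epsilon,\zeta}(x,y),
\]
which gives $\check f_{\mathcal{T}}\le t_{\epsilon,\zeta}$ by convexity of the envelope.

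The main obstacle is checking that $(1-\zeta,y_0)$ lies in $\mathcal{T}_{\epsilon,\theta,\zeta}$, i.e.\ that $y_0\le 1-\epsilon$ and $y_0\ge 1-\epsilon-(\zeta-\epsilon-\theta)$. The first is immediate. The second is exactly the defining inequality $y\ge 1-\epsilon-\frac{\zeta-\epsilon-\theta}{1-\zeta}x$ of $\mathcal{T}$, rescaled by $\alpha$: the line through the apex and the corner $(1-\zeta,1-\zeta+\theta)$ is the boundary of $\mathcal{T}$. Condition \eqref{eq:condi} guarantees that this corner satisfies $y\ge x+\theta$ and $y\in[\eta,1-\epsilon]$. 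The degenerate point $x=0$, where the point is the apex itself, is handled separately, since both sides equal $0$ there.
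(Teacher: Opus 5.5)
Your proof is correct and follows the paper's argument step for step: positivity of the denominator away from the apex, the factorization $L-x(1-y)=(1-\zeta-x)(1-\epsilon-y)$, and the same convex combination of $(0,1-\epsilon)$ with $(1-\zeta,y_0)$. Your quadratic-over-linear convexity argument matches Remark~\ref{rem:socp_t}, and your explicit membership check for $(1-\zeta,y_0)$ and separate treatment of $x=0$ fill in details the paper leaves implicit. The only thing to fix is your first displayed identity in the under-estimation step, which is false as written (its right-hand side simplifies to $(1-\epsilon-y)(1-\zeta-2x)$); delete it and keep the correct direct computation that follows it.
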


\begin{proof}
The proof is similar to the proof of Proposition \ref{theorem:f_theta}.

First, the denominator in the fraction that defines $t_{\epsilon, \zeta}$ is strictly positive. Indeed, $(1-\zeta) (1-\epsilon) +\epsilon x - (1-\zeta) y = (1-\zeta)(1 - \epsilon  -y) + \epsilon x>0$ if $(x,y) \in \mathcal{X}_{\epsilon,\theta,\zeta, \eta} \setminus \{(0, 1-\epsilon)\}$. So $t_{\epsilon, \zeta}$ is well-defined on $\mathcal{X}_{\epsilon, \theta, \zeta, \eta}$. 
It is also easy to prove the continuity of the function at $(0,1-\epsilon)$ since $\lim_{(x,y) \rightarrow (0,1-\epsilon)} t_{\epsilon, \zeta}(x,y) = 0 = t_{\epsilon, \zeta}(0, 1-\epsilon)$.

Convexity of $t_{\epsilon, \zeta}$ can be proven by considering the convex function $f_{\zeta}$ defined on $[-\infty, 1-\epsilon]$ by $f_{\zeta}(y)=f(1-\zeta,y)=\frac{1-\zeta}{1-y}$. Because $f_\zeta$ is convex, its perspective function $p_t$ defined as $p_t(x,t)=t f_\zeta(\frac{x}{t})=\frac{t^2(1-\zeta)}{t-y}$ is convex.
Since $t_{\epsilon, \zeta}(x,y)=p_t(y-(1-\frac{x}{1-\zeta})(1-\epsilon), \frac{x}{1-\zeta})$ is the composition of a convex and an affine function, $t_{\epsilon, \zeta}$ is convex.

To show that $t_{\epsilon, \zeta}$ is an under-estimator of $f$ over $\mathcal{X}_{\epsilon, \theta, \zeta, \eta}$ for $x>0$, we should prove that, for $(x,y) \in \mathcal{X}_{\epsilon, \theta, \zeta, \eta}$,  $\frac{t_{\epsilon, \zeta}(x,y)}{f(x,y)} = \frac{x (1-y )}{(1-\zeta) (1-\epsilon)+\epsilon x - (1-\zeta) y}$ is less than $1$. Considering the difference between the numerator and the denominator, we get: 
\begin{small_equation*}
x - y x - (1-\zeta)(1-\epsilon) - \epsilon x + (1-\zeta) y = (1-\epsilon) (1-\zeta-x) + y (1-\zeta-x) = -(1-\epsilon - y)(1-\zeta-x)    
\end{small_equation*}
which is obviously nonpositive on $\mathcal{X}_{\epsilon, \theta \zeta, \eta}$ (for $x=0$, we have $f(x,y)=t_{\epsilon, \zeta}(x,y)=0$).

Finally, let us consider the restriction of  $f$  to $ \mathcal{T}_{\epsilon, \theta, \zeta}$. As $t_{\epsilon, \zeta}$ is a convex under-estimator of $f$, it satisfies $t_{\epsilon, \zeta}(x,y) \leq \check{f}_{\mathcal{T}_{\epsilon, \zeta}}(x,y)$. Observe that  each  $(x,y)$ of $\mathcal{T}_{\epsilon, \theta, \zeta}$ can be expressed as a convex combination of two members of  $\mathcal{T}_{\epsilon, \theta, \zeta}$:  $(x,y) = \alpha (1-\zeta, y_0) + (1-\alpha)(0,1-\epsilon)$, with $\alpha=\frac{x}{1-\zeta}$ and $y_0 = y\frac{1-\zeta}{x}-\frac{(1-\zeta)(1-\epsilon)}{x}+1-\epsilon$. Moreover, one can easily check that $t_{\epsilon, \zeta}(x,y) =  \alpha f(1-\zeta, y_0) + (1-\alpha) f(0, 1-\epsilon)$ proving, by convexity of $\check{f}_{\mathcal{T}_{\epsilon, \theta, \zeta}}$, that $t_{\epsilon, \zeta}(x,y) \geq \check{f}_{\mathcal{T}_{\epsilon, \theta, \zeta}}(x,y)$.   
\end{proof}

\begin{remark}
The constraint $t_{\epsilon, \zeta}(x,y) \leq d$ can be expressed through second-order cone programming since the  inequality $d  \times \left((1-\zeta) (1-\epsilon)+\epsilon x - (1-\zeta) y\right) \geq x^2$ is a convex hyperbolic constraint.   \label{rem:socp_t}
\end{remark}
Note that Remark \ref{rem:socp_t} also allows to prove convexity of $ t_{\epsilon, \zeta}$.  

\subsection{Convex envelope of $f$ on $\mathcal{X}_{\epsilon, \theta, \zeta, \eta}$}

\begin{theorem}
The convex envelope of $f$ on $\mathcal{X}_{\epsilon, \theta, \zeta, \eta}$ is given by: 
{%\normalsize
\begin{small_align}
    \check f_{\mathcal{X}_{\epsilon, \theta, \zeta, \eta}} (x,y)
    = & \max(r_{\eta}(x,y), s_{\epsilon, \theta}(x,y), t_{\epsilon, \zeta}(x,y)) \nonumber \\
   \check f_{\mathcal{X}_{\epsilon, \theta, \zeta, \eta}} (x,y)
    = & \max(\frac{x}{1-\eta}, \frac{x}{1-\theta - \frac{x}{1- \frac{y-x-\theta}{1-\epsilon-\theta}}}, \frac{x^2}{(1-\zeta) (1-\epsilon)+\epsilon x - (1-\zeta) y}).
\end{small_align}}
\label{theorem:cvx_env}
\end{theorem}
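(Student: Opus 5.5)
Let $m(x,y) = \max(r_\eta, s_{\epsilon,\theta}, t_{\epsilon,\zeta})(x,y)$. The plan is to show $m \le \check f_{\mathcal{X}_{\epsilon, \theta, \zeta, \eta}}$ and $m \ge \check f_{\mathcal{X}_{\epsilon, \theta, \zeta, \eta}}$ separately; both directions reduce to the three preceding propositions plus one comparison step.

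\emph{Lower bound.} By Propositions \ref{pro:eta}, \ref{theorem:f_theta} and \ref{theorem:f_zeta}, each of $r_\eta$, $s_{\epsilon,\theta}$ and $t_{\epsilon,\zeta}$ is a convex under-estimator of $f$ on the whole set $\mathcal{X}_{\epsilon, \theta, \zeta, \eta}$. Their pointwise maximum $m$ is therefore convex and satisfies $m \le f$ on $\mathcal{X}_{\epsilon, \theta, \zeta, \eta}$. Since $\check f$ is the largest convex under-estimator, $m \le \check f_{\mathcal{X}_{\epsilon, \theta, \zeta, \eta}}$.

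\emph{Upper bound.} Fix $(x,y)$ in one of the three pieces, say $\mathcal{R}_{\epsilon,\theta,\eta}$. Since $\mathcal{R}_{\epsilon,\theta,\eta} \subseteq \mathcal{X}_{\epsilon, \theta, \zeta, \eta}$, every convex combination of points of $\mathcal{R}_{\epsilon,\theta,\eta}$ is admissible in the infimum defining $\check f_{\mathcal{X}_{\epsilon, \theta, \zeta, \eta}}$. Hence
\[
\check f_{\mathcal{X}_{\epsilon, \theta, \zeta, \eta}}(x,y) \le \check f_{\mathcal{R}_{\epsilon,\theta,\eta}}(x,y) = r_\eta(x,y).
\]
In the same way, $\check f_{\mathcal{X}_{\epsilon, \theta, \zeta, \eta}} \le s_{\epsilon,\theta}$ on $\mathcal{S}_{\epsilon,\theta,\zeta,\eta}$ and $\check f_{\mathcal{X}_{\epsilon, \theta, \zeta, \eta}} \le t_{\epsilon,\zeta}$ on $\mathcal{T}_{\epsilon,\theta,\zeta}$, using the explicit two-point decompositions from those proofs. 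It therefore suffices to prove that on each piece the corresponding function is the largest of the three. That is, on $\mathcal{R}$ we need $r_\eta \ge s_{\epsilon,\theta}, t_{\epsilon,\zeta}$; on $\mathcal{S}$ we need $s_{\epsilon,\theta} \ge r_\eta, t_{\epsilon,\zeta}$; and on $\mathcal{T}$ we need $t_{\epsilon,\zeta} \ge r_\eta, s_{\epsilon,\theta}$. Then $\check f \le m$ everywhere, and $\check f = m$.

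\emph{Comparison step (main obstacle).} To handle this cleanly I will use the segment structure already present in the proofs. Each point of $\mathcal{X}\setminus\{(0,1-\epsilon)\}$ lies on a ray from the apex $(0,1-\epsilon)$. Along such a ray, every function $r,s,t$ is positively homogeneous in the displacement from the apex: $r$ is linear, $s$ is a perspective in the variable $z$, and $t$ is a perspective in $x/(1-\zeta)$. Moreover all three vanish at the apex. So on each ray every function is linear in $\alpha$, and comparing two of them reduces to comparing them at the point where the ray exits the polygon, i.e.\ on the boundary edges $y=\eta$, $y=x+\theta$ and $x=1-\zeta$. Each boundary edge belongs to exactly one piece, and on that edge the associated function coincides with $f$: $r=f$ on $y=\eta$, $s=f$ on $y=x+\theta$, and $t=f$ on $x=1-\zeta$. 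The other two functions are $\le f$ there by the under-estimator property. Hence at the exit point the associated function dominates the other two, and by homogeneity it dominates them along the entire ray, which is the same as on the whole piece, since each piece is exactly the union of rays from the apex to its boundary edge.

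The step I expect to need care is verifying this homogeneity claim for $s$ and $t$, namely that $s(\alpha x_0+(1-\alpha)\cdot 0,\ \alpha y_0+(1-\alpha)(1-\epsilon)) = \alpha\, s(x_0,y_0)$. For $s$, $z$ scales by $\alpha$ along such rays and $s = x z/((1-\theta)z - x)$ is $1$-homogeneous in $(x,z)$. For $t$, both the numerator $x^2$ and the denominator, which is affine and vanishes at the apex, scale correctly, giving overall degree $1$. The apex itself, where all three functions equal $0 = f$, is handled separately.
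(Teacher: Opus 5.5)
Your proof is correct and is essentially the paper's argument: the lower bound comes from the pointwise maximum of the three convex under-estimators, and the upper bound comes from $\check f_{\mathcal{X}_{\epsilon,\theta,\zeta,\eta}} \le \check f_{\mathcal{R}_{\epsilon,\theta,\eta}} = r_\eta$ on $\mathcal{R}_{\epsilon,\theta,\eta}$ and its analogues on $\mathcal{S}_{\epsilon,\theta,\zeta,\eta}$ and $\mathcal{T}_{\epsilon,\theta,\zeta}$. The one correction is that your ``comparison step'' is not needed: on $\mathcal{R}_{\epsilon,\theta,\eta}$ one trivially has $r_\eta \le \max(r_\eta, s_{\epsilon,\theta}, t_{\epsilon,\zeta})$, and likewise on the other pieces, so $\check f_{\mathcal{X}_{\epsilon,\theta,\zeta,\eta}} \le \max(r_\eta, s_{\epsilon,\theta}, t_{\epsilon,\zeta})$ follows at once; your ray-homogeneity argument is valid and does justify the equality $r_\eta = \max(r_\eta, s_{\epsilon,\theta}, t_{\epsilon,\zeta})$ on $\mathcal{R}_{\epsilon,\theta,\eta}$ that the paper writes without proof, but that equality is a by-product (it also follows a posteriori from the two bounds), not the main obstacle.
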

\begin{proof}
From Propositions \ref{pro:eta}, \ref{theorem:f_theta} and \ref{theorem:f_zeta}, $\max(r_{\eta}(x,y), s_{\epsilon, \theta}(x,y), t_{\epsilon, \zeta}(x,y))$ is a convex under-estimator of $f$ on $\mathcal{X}_{\epsilon, \theta, \zeta, \eta}$, showing that $\check f_{\mathcal{X}_{\epsilon, \theta, \zeta, \eta}} \geq \max(r_{\eta}(x,y), s_{\epsilon, \theta}(x,y), t_{\epsilon, \zeta}(x,y))$.

To show equality, first observe that $\mathcal{R}_{\epsilon, \theta, \eta } \subset \mathcal{X}_{\epsilon, \theta, \zeta, \eta}$ implies that $\check f_{\mathcal{X}_{\epsilon, \theta, \zeta, \eta}}(x,y) \leq \check f_{\mathcal{R}_{\epsilon, \theta, \eta}}(x,y)$ for any $(x,y) \in  \mathcal{R}_{\epsilon, \theta, \eta }$. 

Similarly, $\check f_{\mathcal{X}_{\epsilon, \theta, \zeta, \eta}}(x,y) \leq \check f_{\mathcal{S}_{\epsilon, \theta, \eta, \zeta}}(x,y)$ for any $(x,y) \in  \mathcal{S}_{\epsilon, \theta, \eta, \zeta}$, and 
$\check f_{\mathcal{X}_{\epsilon, \theta, \zeta, \eta}}(x,y) \leq \check f_{\mathcal{T}_{\epsilon, \theta, \zeta}}(x,y)$ for any $(x,y) \in  \mathcal{T}_{\epsilon, \theta, \zeta}$. 

Using that $\check f_{\mathcal{R}_{\epsilon, \theta, \eta}} =r_{\eta}$, $\check f_{\mathcal{S}_{\epsilon, \theta, \eta, \zeta}} = s_{\epsilon, \theta} $ and 
$ \check f_{\mathcal{T}_{\epsilon, \theta, \zeta}}=t_{\epsilon, \zeta} $, we deduce that: 
\begin{small_align*}
& \forall (x,y) \in \mathcal{R}_{\epsilon, \theta, \eta }, & \check f_{\mathcal{X}_{\epsilon, \theta, \zeta, \eta}}(x,y) \leq  r_\eta(x,y) = \max(r_{\eta}(x,y), s_{\epsilon, \theta}(x,y), t_{\epsilon, \zeta}(x,y)),  \\
&\forall (x,y) \in \mathcal{S}_{\epsilon, \theta \eta, \zeta}, &\check f_{\mathcal{X}_{\epsilon, \theta, \zeta, \eta}}(x,y) \leq  s_{\epsilon, \theta}(x,y) =\max(r_{\eta}(x,y), s_{\epsilon, \theta}(x,y), t_{\epsilon, \zeta}(x,y)), \\
&\forall (x,y) \in \mathcal{T}_{\epsilon, \theta, \zeta}, &\check f_{\mathcal{X}_{\epsilon, \theta, \zeta, \eta}}(x,y) \leq  t_{\epsilon, \zeta}(x,y) = \max(r_{\eta}(x,y), s_{\epsilon, \theta}(x,y), t_{\epsilon, \zeta}(x,y)).  
\end{small_align*}
Since $\mathcal{X}_{\epsilon, \theta, \zeta, \eta} = \mathcal{R}_{\epsilon, \theta, \eta } \cup \mathcal{S}_{\epsilon, \theta, \eta, \zeta} \cup \mathcal{T}_{\epsilon, \theta, \eta}$,
$
\check f_{\mathcal{X}_{\epsilon, \theta, \zeta, \eta}}(x,y) \leq  \max(r_{\eta}(x,y), s_{\epsilon, \theta}(x,y), t_{\epsilon, \zeta}(x,y))$   
for any $(x,y) \in \mathcal{X}_{\epsilon, \theta, \zeta, \eta}$.
\end{proof}

Figure~\ref{fig:conv_env_comparisons} illustrates the difference between the values of the function $f$ and those of its under-estimators, highlighting the tightness of the relaxation.

Observe that the relaxation is tighter when $x$ is large or when $y - x$ is small, corresponding respectively to high path usage and low arc usage by the other paths in the DCMCF problem.

\begin{figure}[ht]
    \includegraphics[trim={0cm 0cm 0cm 1.cm},clip, width=1\linewidth]{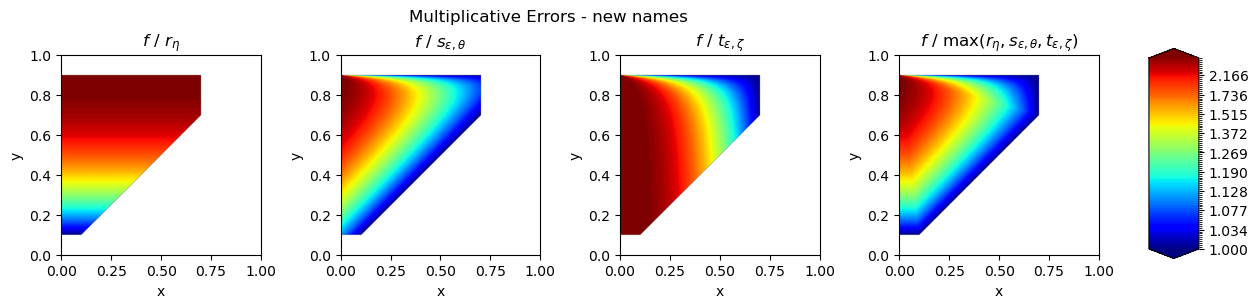}
    \caption{Values and multiplicative errors of $r_\eta$, $s_{\epsilon, \theta}$, $t_{\epsilon, \theta}$ and $\check f_{\mathcal{X}_{\epsilon, \theta, \zeta, \eta}}$ with $\epsilon=0.1$, $\zeta=0.3$, $\eta=0.1$, and $\theta=0$. Values are lower (in blue) when relaxations are closer to the actual value of $f$.}
    \label{fig:conv_env_comparisons}
\end{figure}

\begin{proposition}
A constraint of the type $\check f_{\mathcal{X}_{\epsilon, \theta, \zeta, \eta}}(x,y) \leq d, \forall (x,y) \in \mathcal{X}_{\epsilon, \theta, \zeta, \eta}$ can be formulated as a set of second-order cone programming constraints. \label{pro:socp}
\end{proposition}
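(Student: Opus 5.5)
By Theorem~\ref{theorem:cvx_env}, $\check f_{\mathcal{X}_{\epsilon, \theta, \zeta, \eta}}(x,y) = \max(r_{\eta}(x,y), s_{\epsilon, \theta}(x,y), t_{\epsilon, \zeta}(x,y))$. A maximum of functions is at most $d$ exactly when each function is at most $d$. The constraint therefore splits into three separate constraints,
\begin{small_equation*}
r_\eta(x,y) \leq d, \qquad s_{\epsilon,\theta}(x,y) \leq d, \qquad t_{\epsilon,\zeta}(x,y) \leq d,
\end{small_equation*}
all imposed on the same variables $(x,y)$, and I will write each one as second-order cone constraints. In the application $d$ may itself be a variable, for instance $b^k x^k_p d^k$ or an auxiliary epigraph variable for each arc term in \eqref{eq:constr_f}. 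I will therefore keep $d$ as a variable throughout and check that every constraint obtained is jointly convex in $(x,y,d)$.

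The first constraint is $\frac{x}{1-\eta} \leq d$. It is linear, and a linear constraint is a degenerate SOC constraint.

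For the second constraint I invoke Proposition~\ref{pro:socp_s}. After introducing the affine variable $z = 1-\frac{y-x-\theta}{1-\epsilon-\theta}$, it becomes the hyperbolic constraint $(z+d)(z-\frac{1}{1-\theta}x) \geq z^2$, together with $z+d \geq 0$ and $z - \frac{x}{1-\theta} \geq 0$. The two sign conditions hold automatically on $\mathcal{X}_{\epsilon, \theta, \zeta, \eta}$ (see the proof of Proposition~\ref{theorem:f_theta}), but I will still add them explicitly. I then apply the standard equivalence: for $u,v \geq 0$, $uv \geq w^2$ holds iff $\|(2w,\, u-v)\|_2 \leq u+v$. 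Taking $u=z+d$, $v=z-\frac{x}{1-\theta}$ and $w=z$, all three quantities are affine in $(x,y,d)$, so this is a genuine SOC constraint.

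For the third constraint I use Remark~\ref{rem:socp_t}. I set $u=d$ and $v=(1-\zeta)(1-\epsilon)+\epsilon x-(1-\zeta)y$, which is affine and nonnegative on the domain. The constraint $t_{\epsilon,\zeta}(x,y)\le d$ then reads $uv \geq x^2$ with $u,v\ge 0$, which is the rotated cone $\|(2x,\,u-v)\|_2 \leq u+v$.

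The main obstacle is the degenerate point $(0,1-\epsilon)$, where the functions are defined by continuity and the denominators vanish. The divisions used in Proposition~\ref{pro:socp_s} are then not justified. I must check that the hyperbolic forms give exactly the right constraint there. At this point $z=0$ and $x=0$. For $s$, the constraint becomes $0\ge 0$, consistent with $s_{\epsilon,\theta}(0,1-\epsilon)=0\le d$, where $d\ge 0$ is implied because $r_\eta\le d$ and $x\geq 0$. For $t$, we have $v=0$ and $x=0$, so again the constraint reduces to $0\ge0$. At all other points the denominators are strictly positive, so the stated equivalences hold. Together these checks give that the combined system of linear and SOC constraints is equivalent to $\check f_{\mathcal{X}_{\epsilon, \theta, \zeta, \eta}}(x,y)\le d$ on $\mathcal{X}_{\epsilon, \theta, \zeta, \eta}$.
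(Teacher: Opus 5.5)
Your proof is correct and is essentially the paper's argument. You split $\max(r_\eta,s_{\epsilon,\theta},t_{\epsilon,\zeta})\le d$ into three constraints, observe that the first is linear, and invoke Proposition~\ref{pro:socp_s} and Remark~\ref{rem:socp_t} for the other two; your explicit rotated-cone forms, treatment of $d$ as a variable, and check at $(0,1-\epsilon)$ are sound details the paper leaves implicit. One small caution: for the side condition $z-\frac{x}{1-\theta}\ge 0$, rely on the identity $(1-\theta)z-x=\frac{(1-\theta)(1-\epsilon-y)+\epsilon x}{1-\epsilon-\theta}\ge 0$ rather than on the chain in the proof of Proposition~\ref{theorem:f_theta}, whose step $1-\theta-\frac{x}{z}\ge 1-\theta-x$ goes the wrong way because $z\le 1$ on the domain, even though its conclusion is true.
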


\begin{proof}
    A constraint of type $\check f_{\mathcal{X}_{\epsilon, \theta, \zeta, \eta}}(x,y) = \max(r_{\eta}(x,y), s_{\epsilon, \theta}(x,y), t_{\epsilon, \zeta}(x,y))  \leq d$ can be expressed as a set of 3 constraints $r_{\eta}(x,y) \leq d$, $s_{\epsilon, \theta}(x,y) \leq d$ and $t_{\epsilon, \zeta}(x,y)) \leq d$. From Proposition \ref{pro:socp_s} and Remark \ref{rem:socp_t}, the last two constraints can be formulated as second-order cone constraints, and the first one is linear.
\end{proof}

\subsection{Relaxation of the DCMCF problem} \label{subsec:env_relaxation}

Constraints \eqref{eq:constr_f} provide a reformulation of the delay constraints \eqref{eq:dcmf:delay} of the DCMCF problem based on the function $f$. From this reformulation and the convex envelope of the function $f$, the delay constraints of the DCMCF problem can be relaxed as:
\begin{small_equation}
    \sum_{a \in p} \check f_{\mathcal{X}_{\frac{c_a-u_a}{c_a}, \frac{{l_a}^k_p}{c_a}, \frac{c_a - b^k u^k_p}{c_a}, \frac{l_a}{c_a}}} (\frac{b^k}{c_a} {x^k_p},\frac{y_a}{c_a}) \leq d^k b^k x^k_p \qquad \forall k \in K, p \in P^k \label{eq:constr_f_check}
\end{small_equation}
where $u_a$ (resp. $u^k_p$) is an upper bound of $y_a$ (resp. $x^k_p$), and $l_a$  is (resp. ${l_a}^k_p$) a lower bound on $y_a$ (resp. $y_a - b^k x^k_p$). Details about the computation of these parameters are deferred to Section \ref{sub:param}. 

Using the analytical expression of $\check f$, and adding a variable $d_a^{pk}$ to express the term $\frac{1}{b^k u^k_p} \check f_{\mathcal{X}_{\frac{c_a-u_a}{c_a}, \frac{{l_a}^k_p}{c_a}, \frac{c_a - b^k u^k_p}{c_a},  \frac{l_a}{c_a}}} (\frac{b^k}{c_a} {x^k_p},\allowbreak \frac{y_a}{c_a})$, we get the following relaxation of DCMCF: 
\begin{small_align}  \mbox{DCMCF}_\text{env}: & \quad \max \gamma &  \nonumber \\
& \quad \eqref{eq:dcmf:gamma}, \eqref{eq:dcmf:xa},&\\
    &\quad  \sum_{a \in p} d_a^{pk} - \frac{x^k_p}{u^k_p} d^k \leq 0   &\quad \forall k \in K, p \in P^k, \label{eq:dcmcf_env:delay}\\
    &\quad \frac{\frac{x^k_p}{u^k_p}}{c_a - l_a} - 
    d_a^{pk} \leq 0 &  \quad \forall k \in K, p \in P^k, a \in p, \label{eq:dcmcf_env:delay:r}\\
    &\quad  \frac{\frac{x^k_p}{u^k_p}}{c_a - {l_a}^k_p - \frac{ (u_a-{l_a}^k_p) }{u_a - (y_a - b^k x^k_p)}b^k x^k_p} - d_a^{pk} \leq 0 &  \quad \forall k \in K, p \in P^k, a \in p, \label{eq:dcmcf_env:delay:s}\\
    &\quad  \frac{(\frac{x^k_p}{u^k_p})^2}{c_a \frac{x^k_p}{u^k_p} -  y_a + (1-\frac{x^k_p}{u^k_p}) u_a} - d_a^{pk} \leq 0 &  \quad \forall k \in K, p \in P^k, a \in p, \label{eq:dcmcf_env:delay:t} \\
    &\qquad x_p^k \geq 0 &\quad \forall k \in K, p \in P^k,  \label{1eq:dcmcf_env:bound_x}\\
&\qquad  x^k_p \leq u^k_p &\quad \forall k \in K, p \in P^k, \label{2eq:dcmcf_env:bound_x} \\
 & \qquad  y_a \leq u_a  &\quad \forall k \in K, p \in P^k, a \in p, \label{1eq:dcmcf_env:bound_y} \\
& \qquad y_a \geq b^k x_p^k + {l_a}_p^k &\quad \forall k \in K, p \in P^k, a \in p, \label{2eq:dcmcf_env:bound_y}
 \end{small_align}
where Constraints \eqref{eq:dcmcf_env:delay} represent the relaxed delay constraint and Constraints \eqref{eq:dcmcf_env:delay:r}, \eqref{eq:dcmcf_env:delay:s} and \eqref{eq:dcmcf_env:delay:t} represent the respective contributions of functions $r_{\frac{l_a}{c_a}}$, $s_{\frac{c_a-u_a}{c_a}, \frac{{l_a}_p^k}{c_a}}$ and $t_{\frac{c_a-u_a}{c_a}, \frac{c_a-b^k u_p^k}{c_a} }$ to the envelope function. Constraints \eqref{1eq:dcmcf_env:bound_x} to \eqref{2eq:dcmcf_env:bound_y} describe, for $k \in K$, $p \in P^k$ and $a \in A$ the feasible domain for variables $x_p^k$ and $y_a$ on which the convex envelopes of the delay functions are built. 

From Proposition \ref{pro:socp}, one can observe that the $\mbox{DCMCF}_\text{env}$ problem can be formulated as a second-order cone program.

\subsection{Computation of a tight domain}\label{sub:param}

The $\mbox{DCMCF}_\text{env}$ relaxation  relies on the bounds $l_a$, $u_a$ (lower and upper bounds of $y_a$ for $a \in A$), $u^k_p$ (upper bound of $x_p^k$ for $k \in K$, $p \in P^k$) and ${l_a}_p^k$ (lower-bound of $y_a - b^k x^k_p$ for $k \in K$, $p \in P^k$ and $a \in p$).

As suggested in \parencite{ben2006mathematical}, explicit upper bounds for $y_a$ can be computed by exploiting the observation that the load on each arc of the graph is induced by paths whose delays must satisfy some constraints.
 Therefore, for any $a \in A$, there exists a $k \in K$ and $p \in P$ such that $a \in p$ and $\frac{1}{c_a-y_a} + \sum_{a' \in p, a' \neq a}\frac{1}{c_a} \leq d^k$ which allows to compute the following upper bound on $y_a$: $u_a =  \max_{p \ni a} c_a - \frac{1}{d^k - \sum_{a' \in p, a' \neq a}\frac{1}{c_a}}$. On the other hand, $l_a=0$ is a valid lower-bound of $y_a$. 
 
 For every $k \in K$ and $p \in P^k$, the load generated by path $p$, $b^k x^k_p$, is simply bounded by the minimum upper bound of the loads over all arcs used by $p$, which gives the following upper bound on $x_p^k$: $u^k_p = \min_{a \in p} \frac{u_a}{b^k}$.
Finally, ${l_a}_p^k=0$ is a valid lower-bound on the residual load on an arc $a \in A$ for a given commodity $k \in K$ and path $p \in P^k$.

\bigskip

It is also possible to compute tighter  bounds by considering a feasible solution of the DCMCF problem and solving, for every arc $a$, commodity $k$ and path $p$, a convex problem minimizing (resp. maximizing) the higher (resp. lower) bounds for the $y_a$, $x^k_p$, and $y_a - b^k x_p^k$ values under the constraints of the $\mbox{DCMCF}_\text{env}$ problem. 

For instance, given a feasible flow $\gamma_0$ and initial bounds for every variable, a tighter value of $u_{a'}$ for $a' \in A$ can be computed by solving the following convex problem:
\begin{small_align}
    &\max y_{a'} & \nonumber\\
    & \eqref{eq:dcmf:gamma}, \eqref{eq:dcmf:xa}, 
\eqref{eq:dcmcf_env:delay}, \eqref{eq:dcmcf_env:delay:r}, \eqref{eq:dcmcf_env:delay:s}, \eqref{eq:dcmcf_env:delay:t}, \eqref{1eq:dcmcf_env:bound_x},
\eqref{2eq:dcmcf_env:bound_x},
\eqref{1eq:dcmcf_env:bound_y}, \eqref{2eq:dcmcf_env:bound_y},
\nonumber& \nonumber \\
    &\gamma \geq \gamma_0. \nonumber & 
\end{small_align}
Similar problems can be considered to compute tighter values of $u^k_p$, $l_a$, ${l_a}^k_p$ for every $k \in K, p \in P^k$ and $a \in A$.

\section{Alternative formulations of the DCMCF problem}
\label{sec:alternative_relax}

\subsection{A Big-M formulation}
\label{sec:bigM}

The delay constraints~\eqref{eq:dcmf:delay} can be expressed as mixed-integer constraints by introducing binary variables $z_p^k$ indicating whether a path $p$ is used. Using upper bounds ${d_p^k}^{\max}$ on the delay of path $p$ when $z_p^k = x_p^k = 0$, which play the role of Big-$M$ constants, the DCMCF problem can be written as: 
\begin{small_align}
   \mbox{DCMCF}_\text{Big-M}:  
   & \qquad \max \gamma & \nonumber \\
    & \qquad  \eqref{eq:dcmf:gamma}, \eqref{eq:dcmf:xa},  \eqref{1eq:dcmcf_env:bound_x},
\eqref{1eq:dcmcf_env:bound_y}\nonumber  \\
    & \qquad  x^k_p \leq {u^k_p} z^k_p & \forall k \in K, p \in P^k  \label{eq:bigM:z_def}\\
    & \qquad  \sum_{a \in p} \frac{1}{c_a - y_a} \leq d^k z^k_p + {d^k_p}^{max} (1-z^k_p)   & \forall p \in P^k \label{eq:bigM:delay}\\ 
    & \qquad   z^k_p \in \{0,1\} & \forall k \in K, p \in P^k.
\end{small_align}
Note that the strength of the Big-M relaxation strongly depends on the tightness of the ${d_p^k}^{max}$ bounds. 
However, the $\mbox{DCMCF}_\text{Big-M}$ formulation enables a straightforward implementation of branch-and-bound approaches using commercial solvers. Tighter relaxations of the delay constraints such as the one proposed in Section \ref{subsec:env_relaxation} can also be included in the $\mbox{DCMCF}_\text{Big-M}$ problem to provide a more efficient relaxation of DCMF. 

\subsection{A disjunctive-programming-based relaxation}
 \label{sec:hijazi}
We describe a relaxation proposed in \parencite{hijazi2012} and we prove that $\mbox{DCMCF}_\text{env}$ provides tighter bounds than the disjunctive-programming-based relaxation.  

The delay constraint related to a path $p$ is relaxed as follows. The authors consider the convex envelope of ${\Gamma_0}^k_p \cup {\Gamma_1}^k_p$ where  
\begin{small_align*}
    {\Gamma_0}^k_p &= \left\{\left(z^k_p, (y_a)_{a \in p}\right): z^k_p = 0, l_a \leq y_a \leq u_a, \forall a \in p \right\}, 
    \mbox{ and} \\
 {\Gamma_1}^k_p &= \left\{\left(z^k_p, (y_a)_{a \in p}\right): z^k_p = 1, l_a \leq y_a \leq u_a, \forall a \in p, \sum_{a \in p} \frac{1}{c_a-y_a} \leq d^k \right\}.
\end{small_align*}
Then they derive the following relaxation of the DCMCF problem:
\begin{small_align}
  \mbox{DCMCF}_\text{red}
 & \quad \max \gamma  \nonumber \\
& \quad  \eqref{eq:dcmf:gamma}, \eqref{eq:dcmf:xa},  \eqref{1eq:dcmcf_env:bound_x},
\eqref{1eq:dcmcf_env:bound_y}, \eqref{eq:bigM:z_def}, \nonumber \\
    & \quad \sum_{a \in p} \frac{{z^k_p}^2}{c_a z^k_p - {y_a}^k_p} \leq z^k_p d^k \qquad & \forall k \in K, p \in P^k \label{eq:hijazi:delay}\\
    &\quad y_a - (1-z^k_p) u_a \leq {y_a}^k_p \leq y_a \qquad & \forall k \in K, p \in P^k, a \in p\label{eq:hijazi:y_a_k_p}\\
    &\quad z^k_p l_a \leq {y_a}^k_p \leq z^k_p u_a \qquad & \forall k \in K, p \in P^k, a \in p \label{eq:hijazi:y_a_k_p2}\\
    &\quad z^k_p \geq 0 &\forall k \in K, p \in P^k.\label{eq:hijazi:z_pos}
\end{small_align}
where new variables ${y_a}^k_p$ are introduced for every commodity $k \in K$, path $p \in P^k$ and arc $a \in p$.  Notice that this relaxation is the strongest one  among those proposed in \parencite{hijazi2012}.

\begin{proposition}
    $\mbox{DCMCF}_\text{env}$ dominates $DCMCF_{red}$. \label{prop:env_better_than_red}
\end{proposition}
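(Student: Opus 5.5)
To prove the proposition, I would take an arbitrary feasible solution $(\gamma,x,y,d)$ of $\mbox{DCMCF}_\text{env}$ and build from it a feasible solution of $\mbox{DCMCF}_\text{red}$ with the same $\gamma$. Both relaxations must use the same bounds $l_a,u_a,u^k_p$; I take ${l_a}^k_p=0$, as in the default choice of Section~\ref{sub:param}. Then the optimal value of $\mbox{DCMCF}_\text{env}$ is at most that of $\mbox{DCMCF}_\text{red}$.

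The construction keeps $\gamma$, $x^k_p$ and $y_a$, and sets $z^k_p = x^k_p/u^k_p \in [0,1]$. This makes \eqref{eq:bigM:z_def} and \eqref{eq:hijazi:z_pos} hold immediately, and \eqref{eq:dcmf:gamma}, \eqref{eq:dcmf:xa}, \eqref{1eq:dcmcf_env:bound_x}, \eqref{1eq:dcmcf_env:bound_y} are inherited. For the auxiliary variables I choose ${y_a}^k_p$ as large as allowed, namely ${y_a}^k_p = \min(y_a, z^k_p u_a)$, since this minimizes each term of \eqref{eq:hijazi:delay}.

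\begin{itemize}
\item \emph{Constraint \eqref{eq:hijazi:y_a_k_p}.} If $y_a \le z u_a$ it is trivial. Otherwise ${y_a}^k_p = z u_a \ge y_a-(1-z)u_a$ because $y_a\le u_a$.
\item \emph{Constraint \eqref{eq:hijazi:y_a_k_p2}.} We need $z l_a \le \min(y_a, z u_a)$. This holds when $y_a \ge l_a \ge z l_a$; with $l_a=0$ it is trivial.
\end{itemize}

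The key step is comparing each term of \eqref{eq:hijazi:delay} with $d_a^{pk}$. I aim to show
\[
\frac{z^2}{c_a z - \min(y_a,zu_a)} \le \max\Bigl(\frac{z}{c_a-l_a},\; \frac{z^2}{c_a z - y_a + (1-z)u_a}\Bigr),
\]
since the right-hand side is at most $d_a^{pk}$ by \eqref{eq:dcmcf_env:delay:r} and \eqref{eq:dcmcf_env:delay:t}.

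\begin{itemize}
\item \emph{Case $y_a \le z u_a$.} The denominator on the left is $c_a z - y_a$. The $t$-term has denominator $c_a z - y_a + (1-z)u_a \ge c_a z - y_a$, which goes the wrong way, so the $t$-term alone is not enough. I would therefore redo the choice and take ${y_a}^k_p = y_a-(1-z)u_a$, the smallest value allowed by \eqref{eq:hijazi:y_a_k_p}. Its left-hand side then equals exactly the $t$-term, provided $y_a-(1-z)u_a \ge z l_a$. When $y_a-(1-z)u_a < z l_a$, I set ${y_a}^k_p = z l_a$ instead; the term becomes $\frac{z^2}{c_a z - z l_a} = \frac{z}{c_a-l_a}$, which is exactly the $r$-term.
\item \emph{Resulting choice.} So the right definition is ${y_a}^k_p = \max(y_a-(1-z)u_a,\; z l_a)$. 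It is feasible for both \eqref{eq:hijazi:y_a_k_p} and \eqref{eq:hijazi:y_a_k_p2}, because $y_a \ge z l_a$ and $y_a-(1-z)u_a \le z u_a$. With it, each delay term of $\mbox{DCMCF}_\text{red}$ equals $\max$ of the $r$- and $t$-contributions, hence is at most $d_a^{pk}$.
\end{itemize}

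The translation of the $r$- and $t$-terms follows from substituting $\epsilon=(c_a-u_a)/c_a$, $\zeta=(c_a-b^ku^k_p)/c_a$ and $\eta=l_a/c_a$ into $t_{\epsilon,\zeta}$ and $r_\eta$ and rescaling by $1/(b^ku^k_p)$, as done in the derivation of $\mbox{DCMCF}_\text{env}$.

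Summing over $a\in p$ and using \eqref{eq:dcmcf_env:delay} gives $\sum_a \frac{z^2}{c_a z-{y_a}^k_p} \le \sum_a d_a^{pk} \le z d^k$, which is \eqref{eq:hijazi:delay}. The $s$-constraints are not even needed, which also suggests the domination can be strict.

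The main obstacle is the bookkeeping of scalings between the normalized envelope and the constraints as written, in particular confirming that \eqref{eq:dcmcf_env:delay:t} has denominator exactly $c_a z - y_a + (1-z)u_a$ with $z=x/u^k_p$. A second point is the degenerate case $z=0$, where both sides are $0$ by convention.
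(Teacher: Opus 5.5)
Your final construction, $z^k_p=x^k_p/u^k_p$ and ${y_a}^k_p=\max\bigl(y_a-(1-z^k_p)u_a,\; z^k_p l_a\bigr)$, is exactly the paper's proof, including the two-case check that each term of \eqref{eq:hijazi:delay} coincides with either the $t$-term or the $r$-term and the fact that the $s$-constraints are never used. Drop the initial choice $\min(y_a, z^k_p u_a)$ and its justification, because the reasoning is backwards: $\frac{(z^k_p)^2}{c_a z^k_p-{y_a}^k_p}$ increases with ${y_a}^k_p$, so you want the smallest admissible value, which is what you end up taking.
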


\begin{proof}
Starting from a feasible solution of $\mbox{DCMCF}_\text{env}$,  we build a feasible solution of $\mbox{DCMCF}_\text{red}$ with the same objective value. \\
Let then 
$
\left( (\tilde{x}_p^k)_{k \in K,\, p \in P^k},\; (\tilde{y}_a)_{a \in A},\; \tilde{\gamma} \right)$
be a feasible solution of $\mathrm{DCMCF}_{\text{env}}$, and let 
$
\Bigl( (\bar{x}_p^k)_{k \in K,\, p \in P^k},\; (\bar{z}_p^k)_{k \in K,\, p \in P^k},\; \allowbreak (\bar{y}_a)_{a \in A},\;
{({{}\bar{y}_a}^{k}_p)}_{k \in K, p \in P^k, a \in p},\;
\bar{\gamma} \Bigr)
$
denote the solution that we aim to construct.
We fix the variable values as follows: 
\begin{small_equation*}
\bar{x}_p^k = \tilde{x}_p^k, \quad \bar{y}_a = \tilde{y}_a, 
\quad \bar{z}_p^k = \frac{\bar{x}_p^k}{u^k_p}, \quad {{}\bar{y}_a}^k_p =  \max(\bar{y}_a - (1-\bar{z}^k_p) u_a, ~\bar{z}^k_p l_a), \quad\bar{\gamma} = \tilde{\gamma} .\end{small_equation*}
All constraints of $\mathrm{DCMCF}_{\text{red}}$, except~\eqref{eq:hijazi:delay}, are straightforwardly satisfied.
 As Constraints \eqref{eq:dcmcf_env:delay}, \eqref{eq:dcmcf_env:delay:r} and \eqref{eq:dcmcf_env:delay:t} are satisfied, we necessarily have:
\begin{small_equation*}
\sum_{a \in p} \max( \frac{\bar{z}^k_p}{c_a - l_a}, \frac{{{}\bar{z}^k_p}^2}{c_a \bar{z}^k_p -  \bar{y}_a + (1-\bar{z}^k_p) u_a}) - \bar{z}^k_p d^k \leq 0  \qquad \forall k \in K, p \in P^k.     
\end{small_equation*}
If $\bar{y}_a - (1-\bar{z}^k_p) u_a \geq  \bar{z}^k_p l_a$, then ${{}\bar{y}_a}_p^k=\bar{y}_a - (1-{\bar{z}}^k_p) u_a$, and $\max( \frac{\bar{z}^k_p}{c_a - l_a}, \frac{{{}\bar{z}^k_p}^2}{c_a \bar{z}^k_p -  \bar{y}_a + (1-\bar{z}^k_p) u_a}) = \frac{\bar{z}^k_p}{c_a \bar{z}^k_p - {{}\bar{y}_a}^k_p}$.
Conversely, if $\bar{y}_a - (1-\bar{z}^k_p) u_a \leq  \bar{z}^k_p l_a$, then ${{}\bar{y}_a}^k_p=\bar{z}^k_p l_a$, and 
\begin{small_equation*}
\max( \frac{\bar{z}^k_p}{c_a - l_a}, \frac{{{}\bar{z}^k_p}^2}{c_a \bar{z}^k_p -  y_a + (1-\bar{z}^k_p) u_a}) = \frac{\bar{z}^k_p}{c_a - l_a} = \frac{{{}\bar{z}^k_p}^2}{c_a \bar{z}^k_p - {{}\bar{y}_a}^k_p}.
\end{small_equation*}
This ensures that Constraints \eqref{eq:hijazi:delay} are satisfied in both cases, proving the feasibility of the constructed  solution of $\mbox{DCMCF}_\text{red}$ with $\bar{\gamma} = \tilde{\gamma}$.  
\end{proof}

\subsection{Additional valid constraints \label{sec:add_constraints}}

Two additional sets of valid constraints can be considered in DCMCF relaxations.   
If variables $z^k_p$ are considered, then we can  obviously write:
\begin{small_equation}
    \sum_{p \in P^k} z_p^k \geq 1 \qquad \qquad \forall k \in K. \label{eq:one_path_per_commo}
\end{small_equation}
While the delay Constraints \eqref{eq:dcmf:delay_time_xkp} are not convex, weighting them by the commodity size and aggregating them over all commodities and paths leads to a new convex constraint:
\begin{small_align}
    \sum_{k \in K} b^k \sum_{p \in P^k} \sum_{a \in p} \frac{x^k_p}{c_a-y_a} \leq  \sum_{k \in K} b^k  \sum_{p \in P^k} d^k  x^k_p \quad \Leftrightarrow \quad  
    \sum_{a \in A}\sum_{k \in K, p \in P^k | a \in p}  \frac{b^k x^k_p}{c_a-y_a} \leq  \sum_{k \in K} b^k~\gamma d^k \quad \Leftrightarrow \quad 
    \sum_{a \in A} \frac{y_a}{c_a-y_a} &\leq  \gamma \sum_{k \in K} d^k b^k. \label{eq:aggregated_delays}
\end{small_align}
Observe that  \eqref{eq:aggregated_delays} can be expressed through SOCP.  

These two constraints were tested numerically, but did not provide any significant impact on the problem's resolution. Therefore, the numerical experiments measuring their impact are not reported in Section \ref{sec:num_experiments}.

\section{Lower bound heuristics and performance guarantees}
\label{sec:low}

Having derived upper bounds through the proposed relaxations, we now turn to the question of lower bounds. We begin with a result that may be viewed as negative, and then introduce several heuristics. In particular, we show that an heuristic based on the relaxation $\mathrm{DCMCF}_{\text{env}}$ constitutes an approximation algorithm with provable performance guarantees.

\subsection{A negative results on heuristics} \label{sec:heuristic:negative:result}

This section identifies a general limitation of heuristics applied to problems involving conditional constraints.

We define the term \emph{convex restriction} as the counterpart to a \emph{convex relaxation}: the convex restriction of a domain is a convex subset of that domain. Similarly, the convex restriction of a constraint is a convex constraint that, if satisfied, guarantees the satisfaction of the original constraint.

\medskip

We demonstrate that building a convex restriction of a conditional constraint reduces to a binary discrete choice: either fully activating the constraint or forcing the trigger variable to zero.

\begin{theorem}\label{theorem:heur_neg_result}
Let $E \subseteq \mathbb{R}^+ \times \mathbb{R}^n$ be a set and $f$ be a continuous function $E \rightarrow \mathbb{R}$. Consider the conditional constraint
$
f(x,\mathbf{y}) \leq 0 \quad \text{if} \quad x>0
$
for variables $(x,\mathbf{y}) \in E$. Any convex restriction of this constraint on $E$ is either a restriction of the constraint $f(x,\mathbf{y}) \leq 0$ or a restriction of the constraint $x=0$.
\end{theorem}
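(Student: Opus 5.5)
Let $C \subseteq E$ be a convex set contained in the feasible set $F = \{(x,\mathbf{y}) \in E : x = 0 \text{ or } f(x,\mathbf{y}) \le 0\}$. We must show that either $C \subseteq \{f \le 0\}$ or $C \subseteq \{x = 0\}$. We argue by contradiction and assume neither inclusion holds.

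Then there is a point $P_1 = (x_1,\mathbf{y}_1) \in C$ with $x_1 > 0$, since otherwise $C \subseteq \{x=0\}$. There is also a point $P_0 = (x_0,\mathbf{y}_0) \in C$ with $f(P_0) > 0$, since otherwise $C \subseteq \{f \le 0\}$. Because $P_0 \in F$ and $f(P_0) > 0$, the conditional constraint forces $x_0 = 0$. In the same way, $P_1 \in F$ with $x_1 > 0$ forces $f(P_1) \le 0$.

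Next we look along the segment joining these points. For $\lambda \in [0,1]$ set $P_\lambda = (1-\lambda)P_0 + \lambda P_1$. By convexity, $P_\lambda \in C \subseteq E$ for every $\lambda$. Its first coordinate is $\lambda x_1$, which is strictly positive for every $\lambda \in (0,1]$. Since $P_\lambda \in F$, this gives $f(P_\lambda) \le 0$ for all $\lambda \in (0,1]$.

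The key step is passing to the limit $\lambda \to 0^+$. Continuity of $f$ on $E$, restricted to the segment, which lies in $E$, gives $f(P_0) = \lim_{\lambda \to 0^+} f(P_\lambda) \le 0$. This contradicts $f(P_0) > 0$, so one of the two inclusions must hold.

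The only delicate point is this limit. It is valid because the whole segment lies in $E$, so we never evaluate $f$ outside its domain, and because $f$ is continuous on $E$ with the induced topology, so continuity along the segment suffices. No openness of $E$ is needed.

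We should also record that the statement is meant in the "either/or" sense, not the "both" sense. If $C$ happens to lie in both sets, either conclusion holds, which is consistent with the statement. Finally, the argument applies verbatim to a convex restriction given by a convex constraint, since such a constraint defines a convex feasible subset of $E$.
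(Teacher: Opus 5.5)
Your proof is correct and is essentially the paper's argument, which goes through its Lemma on convex subsets of $\{x f(x,\mathbf{y}) \le 0\}$: take a point with $x>0$ and a point with $f>0$ (hence $x=0$), and move along the segment from the latter. The only difference is that the paper uses an explicit ball on which $f>0$ to exhibit a segment point with $x>0$ and $f>0$ lying outside the feasible set, whereas you pass to the limit $\lambda \to 0^+$, which is the same continuity step in sequential form.
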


\begin{proof}
The proof results directly from Lemma \ref{lemma:heur_neg_result} presented below.
\end{proof}

Theorem \ref{theorem:heur_neg_result} shows that there is no intermediate convex restriction for a conditional constraint and that any heuristic must make a distinct choice for each conditional constraint: it must either enforce the inequality $f(x,\mathbf{y}) \leq 0$ unconditionally or force the value of the trigger variable $x$ to zero.

Consequently, any heuristic approach using convex restrictions is simply a constraint-selection process. To build a feasible solution, the algorithm must divide the conditional constraints into an active set, for which the constraint $f(x,\mathbf{y}) \leq 0$ must be satisfied, and an inactive set, where the variables driving the conditions are forced to zero.

This result proves that searching for continuous convex restrictions for conditional constraints is futile. Instead, heuristic design must focus entirely on the combinatorial selection of the active constraints. 
Once this selection is made, the algorithm solves the resulting problem to obtain a feasible solution. If the original problem only involves convex (conditional) constraints and a convex objective function, this resulting problem is convex and computationally easy to solve.

\begin{lemma}\label{lemma:heur_neg_result}
Let $E \subseteq \mathbb{R}^+ \times \mathbb{R}^n$ be a set and $f: E \rightarrow \mathbb{R}$ be a continuous function.
Any convex subset of $\{(x, \mathbf{y}) \in E \mid x f(x, \mathbf{y}) \leq 0\}$ is either included in $\{(x, \mathbf{y}) \in E \mid f(x, \mathbf{y}) \leq 0\}$ or in $\{(x, \mathbf{y}) \in E \mid x = 0\}$.
\end{lemma}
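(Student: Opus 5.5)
The plan is a proof by contradiction that uses only convexity of the subset and continuity of $f$ along a segment. Let $C$ be a convex subset of $\{(x,\mathbf{y}) \in E \mid x f(x,\mathbf{y}) \leq 0\}$. Suppose $C$ is contained neither in $\{f \leq 0\}$ nor in $\{x = 0\}$. We will exhibit a point of $C$ where $x f(x,\mathbf{y}) > 0$, which is impossible.

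First, since $C \not\subseteq \{f \le 0\}$, there is a point $P=(x_1,\mathbf{y}_1)\in C$ with $f(P)>0$. Because $x_1 \ge 0$ and $x_1 f(P)\le 0$, this forces $x_1 = 0$. Second, since $C\not\subseteq\{x=0\}$, there is a point $Q=(x_2,\mathbf{y}_2)\in C$ with $x_2>0$.

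Next, consider the segment $M_t = (1-t)P + tQ$ for $t\in[0,1]$. By convexity of $C$, every $M_t$ lies in $C\subseteq E$. Its first coordinate is $t x_2$, which is strictly positive for every $t\in\,]0,1]$.

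The only point needing care is continuity, since $f$ is continuous on $E$ only and not on an open set. This causes no difficulty here: the whole segment lies in $E$, so $t\mapsto f(M_t)$ is continuous on $[0,1]$ as a composition of the affine map $t\mapsto M_t$ (valued in $E$) with $f$. Since $f(M_0)=f(P)>0$, there exists $t_0\in\,]0,1]$ with $f(M_{t_0})>0$. At that point $t_0x_2\, f(M_{t_0})>0$, contradicting $M_{t_0}\in C$.

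Hence $C$ is contained in $\{f\le0\}$ or in $\{x=0\}$. Theorem~\ref{theorem:heur_neg_result} then follows by applying this to the (convex) feasible set of any convex restriction, since the conditional constraint is exactly $x f(x,\mathbf{y})\le 0$ on $E$ with $x\ge0$.
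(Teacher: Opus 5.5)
Your proof is correct and is essentially the paper's argument: take a point of the convex set with $f>0$ (which forces $x=0$) and a point with $x>0$, then move a small step from the first toward the second to obtain a point with $x>0$ and $f>0$. The only difference is cosmetic: you use continuity of $t\mapsto f(M_t)$ on $[0,1]$, whereas the paper uses a ball of radius $r$ around the first point and an explicit step $\alpha$ (that step should be capped at $1$ when the two points are closer than $r/2$, and your formulation avoids the issue).
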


\begin{proof}
Let us define the following sets:
\begin{align*}
D &= \{(x, \mathbf{y}) \in E \mid x f(x, \mathbf{y}) \leq 0\} \\
D_{\text{on}} &= \{(x, \mathbf{y}) \in E \mid f(x, \mathbf{y}) \leq 0\} \\
D_{\text{off}} &= \{(x, \mathbf{y}) \in E \mid x = 0\}.
\end{align*}
Because $x \geq 0$ for all $(x, \mathbf{y}) \in E$, the condition $x f(x, \mathbf{y}) \leq 0$ is logically equivalent to $x = 0$ or $f(x, \mathbf{y}) \leq 0$. Therefore, $D = D_{\text{on}} \cup D_{\text{off}}$.

Assume there exists a convex set $D' \subseteq D$ that is neither included in $D_{\text{on}}$ nor in $D_{\text{off}}$. There exist two points $p_{\overline{\text{off}}} = (x_{\overline{\text{off}}}, \mathbf{y}_{\overline{\text{off}}}) \in D'  \setminus D_{\text{off}}$ (so $x_{\overline{\text{off}}} > 0$) and $p_{\overline{\text{on}}} = (x_{\overline{\text{on}}}, \mathbf{y}_{\overline{\text{on}}}) \in D' \setminus D_{\text{on}}$ (so $  f(x_{\overline{\text{on}}}, \mathbf{y}_{\overline{\text{on}}}) > 0$)

\medskip

Figure~\ref{man:fig:cond_constr:heur_neg_result} illustrates this proof for $E=[0,1]^2$ and $f(x,y)=x^2+y^2-\frac{1}{4}$. The domain $D$ is the union of three disjoint subsets: $D_{\text{on}} \cap D_{\text{off}}$ in red, $D \setminus D_{\text{off}}$ in orange, and $D \setminus D_{\text{on}}$ in green.  As we will see, a set containing both $p_{\overline{\text{on}}}$ and $p_{\overline{\text{off}}}$ is either non-convex or contains points outside $D$.

\begin{figure}[H]
\centering
\includegraphics[width=0.33\linewidth, page=9, clip, trim={7cm, 5.5cm, 15.cm, 2.5cm}]{cond_constraints.pdf}
\caption{Illustration of the proof of Lemma \ref{lemma:heur_neg_result} for $E=[0,1]^2$, and $f(x,y)=x^2+y^2-\frac{1}{4}$.}
\label{man:fig:cond_constr:heur_neg_result}
\end{figure}

\bigskip

Since $p_{\overline{\text{off}}} \in D'  \subseteq D$ and $x_{\overline{\text{off}}} > 0$, we must have $f(x_{\overline{\text{off}}}, \mathbf{y}_{\overline{\text{off}}}) \leq 0$. 
Similarly, since  $p_{\overline{\text{on}}} \in D$ and $f(x_{\overline{\text{on}}}, \mathbf{y}_{\overline{\text{on}}}) > 0$, we must have $x_{\overline{\text{on}}} = 0$.

Because $f$ is continuous on $E$ and $f(x_{\overline{\text{on}}}, \mathbf{y}_{\overline{\text{on}}}) > 0$, there exists a ball $\mathcal{B}$ centered at $p_{\overline{\text{on}}}$ with radius $r > 0$ such that $f(x, \mathbf{y}) > 0$ for all $(x,\mathbf{y}) \in E \cap \mathcal{B}$.

Consider the convex combination $p_\alpha = (x_\alpha, \mathbf{y}_\alpha) = (1 - \alpha) p_{\overline{\text{on}}} + \alpha p_{\overline{\text{off}}}$ with $\alpha = \frac{r}{2 \| p_{\overline{\text{off}}} - p_{\overline{\text{on}}} \|} \in ]0,1]$. 

\begin{itemize}
    \item $\|p_\alpha - p_{\overline{\text{on}}}\|=\alpha\|p_{\overline{\text{off}}}-p_{\overline{\text{on}}}\| < r$, so $p_\alpha \in \mathcal{B}$ and $f(p_\alpha) > 0$.
    \item $x_\alpha = (1 - \alpha) x_{\overline{\text{on}}} + \alpha x_{\overline{\text{off}}} = \alpha x_{\overline{\text{off}}} > 0$.
\end{itemize}

Therefore, $x_\alpha f(x_\alpha, \mathbf{y}_\alpha) > 0$ and $p_\alpha \notin D$ which implies that $p_\alpha \notin D'$ and $D'$ cannot be a convex set. This proves that there is no convex subset of $D$ that is neither a convex subset of $D_\text{on}$ nor $D_\text{off}$.
\end{proof}

\subsection{Active subset heuristics}

Following Theorem \ref{theorem:heur_neg_result}, the main principle of the heuristics we propose is to select a subset of paths $Q^k \subset P^k$ for each $k \in K$, and to replace the constraints \eqref{eq:dcmf:delay} by the constraints $\sum_{a \in p} \frac{1}{c_a - y_a} \leq d^k$ for each $p \in Q^k,$ and to set $x^k_p$ to $0$ for all paths in $P^k \setminus Q^k$. All that remains is to solve a convex problem:
\begin{small_align*}
    \mbox{DCMCF}_\text{subset} &\max \gamma&\\
    &\eqref{eq:dcmf:gamma}, \eqref{eq:dcmf:xa}, \eqref{eq:dcmf:ca}, \eqref{eq:dcmf:pos}&\\
    & \sum_{a \in p} \frac{1}{c_a - y_a} \leq d^k &\qquad \forall k \in K, p \in Q^k.
\end{small_align*}
$\mbox{DCMCF}_\text{subset}$
 is a simple convex optimization problem. The difficulty, of course, is guessing the right subset of paths to select.

\subsubsection{All paths heuristic}\label{sec:heur:all_paths}
Remember that we  assumed that  $\sum_{a \in p} \frac{1}{c_a} < d^k$ for each path $p \in P^k$. One can then simply solve $\mbox{DCMCF}_\text{subset}$ with $Q^k = P^k$ for every commodity $k \in K$ to obtain a feasible solution of the problem with a strictly positive objective value, that will be called $\mathcal{H}_{all~paths}$.

\subsubsection{Greedy heuristic}
The $\mathcal{H}_{all~paths}$ heuristic can be improved in a greedy fashion: the $\mbox{DCMCF}_\text{subset}$ problem is first solved with $Q^k=P^k$ for every $k \in K$, then the path with the smallest $x^k_p$ value is removed. This process is repeated as long as the objective function of the problem increases. This heuristic, called $\mathcal{H}_{greedy}$ is presented in Algorithm \ref{algo:heuristic:greedy}, where $\text{OPT}(\mbox{DCMCF}_{\{Q^k\}_{k \in K}})$ is the optimal value of the $\mbox{DCMCF}_\text{subset}$ problem on the subsets $Q^k$ of $P^k$ for $k \in K$.

\begin{algorithm}
\small
\begin{algorithmic}
    \State $\forall~k \in K$: $Q^k \gets P^k$
    \State $H=0$, $\qquad H^-=~$OPT($\mbox{DCMCF}_{\{Q^k\}_{k \in K}}$)
    \While {$H^-\geq H$}
        \State $(k^-, p^-)=\text{argmin}\{x_p^k~|~x^k_p > 0, \sum_{a \in p} \frac{1}{c_a-y_a}=d^k\}$
        \State $Q^{k^-}=Q^{k^-}\setminus \{p^-\}$
        \State $H = H^-$, $\qquad H^-=~$OPT($\mbox{DCMCF}_{\{Q^k\}_{k \in K}}$)
    \EndWhile
    \State \Return $H$
\end{algorithmic}
    \caption{Greedy Heuristic Algorithm}
    \label{algo:heuristic:greedy}
\end{algorithm}

\subsubsection{Heuristic based on the $\mbox{DCMCF}_\text{env}$ Relaxation}\label{sec:low_hpos}
It also makes sense to use the convex relaxation solution of Section \ref{sec:env} to select the subsets $Q^k$. For example, for a given scalar $S \in \mathbb R^+$, we can select all paths for which $x^k_p \geq \frac{\gamma}{|P^k|}$ in the convex relaxation's optimal solution: $Q^k_S = \{p \in P^k: x^k_p \geq \frac{\gamma}{S|P^k|} \}$.
Let us denote this heuristic by $\mathcal{H}_{threshold}$. If $S \geq 1$ then $|Q^k_S| \geq 1$ for every commodity $k \in K$, and the heuristic $\mathcal{H}_{threshold}$ provides a strictly positive lower-bound for the DCMCF problem.

Before analyzing the performance guarantees of $\mathcal{H}_{threshold}$   in Section \ref{sec:perf_guarantees}, we will present a lower bound of the optimal objective value of DCMCF that can be easily expressed and  polynomially encoded in the size of the problem instance. While this bound is generally rather loose, it will nonetheless be used in the analysis of $\mathcal{H}_{\text{threshold}}$ in the next section.
\begin{lemma}
There exists a feasible solution $\gamma$ of DCMCF such that: 
\begin{small_equation}
    \gamma \geq  \gamma_{low} = \min\limits_{k \in K, p \in P^k} \frac{\frac{1}{\sum_{a \in p} \frac{1}{c_a}} - \frac{1}{d^k}}  {\sum_{k' \in K} b^{k'}}.
    \label{eq:low5}
\end{small_equation}
\begin{proof}
Let $\gamma^{opt}$ be the optimal value of $\gamma$ when DCMCF is solved exactly. Then, there is at least one commodity $k$, and one path $p \in P^k$ such that $\sum_{a \in p} \frac{1}{c_a - y_a} = d^k$. 
Moreover, for any $a \in p$, starting from  $\frac{1}{c_a - y_a} \leq d^k$, we get that $\frac{y_a}{c_a - y_a} \leq y_a d^k$ which can be written as 
$\frac{c_a}{c_a - y_a} \leq 1 +  y_a d^k$. Dividing by $c_a$ leads to 
$\frac{1}{c_a - y_a} \leq \frac{1}{c_a} + y_a \frac{d^k}{c_a}$.   Therefore, $d^k = \sum_{a \in p} \frac{1}{c_a - y_a} \leq \sum_{a \in p} \frac{1}{c_a} + y_a \frac{d^k}{c_a}$.  
By bounding $y_a$ by $\gamma^{opt}(\sum_{k' \in K} b^{k'})$, we get that $d^k \leq \sum_{a \in p} \frac{1}{c_a} + d^k \gamma^{opt}(\sum_{k' \in K} b^{k'}) \sum_{a \in p} \frac{1}{c_a}$,  leading to inequality \eqref{eq:low5}.
\end{proof}
    
\end{lemma}

\subsection{Performance guarantees of $\mathcal{H}_{threshold}$}
\label{sec:perf_guarantees}

We show that $\mathcal{H}_{{threshold}}$, when applied with an appropriately chosen threshold, constitutes an approximation algorithm with provable performance guarantees. 
Because the complete proof is relatively long, we only present the main steps in this document: all intermediate computations can be found in the Appendix \ref{sec:proof:perf_guaranttee}.

\bigskip

\begin{proposition}
    Given a scalar $x_0 \in ]0, 1-\zeta]$ and $\lambda_{\epsilon, x_0}= \frac{4x_0(1-\epsilon)}{(1-\epsilon+x_0)^2}$, we have $f(\lambda x, \lambda y) \leq \lambda \check{f}_{\mathcal{X}_{\epsilon,\theta, \zeta, \eta}} (x,y)$ for any $(x,y) \in \mathcal{X}_{\epsilon,\theta, \zeta, \eta}$ such that $x \geq x_0$ and $\lambda \in [0, \lambda_{\epsilon, x_0}]$. \label{prop:approx:def_lambda}
\end{proposition}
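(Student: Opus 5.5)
The plan is to drop the domain parameters $\theta,\zeta,\eta$ by monotonicity of the convex envelope, and then compare $f(\lambda x,\lambda y)$ with the single piece $s_{\epsilon,0}$ of the envelope. The inequality to prove is $\frac{\lambda x}{1-\lambda y}\le \lambda\,\check f_{\mathcal{X}_{\epsilon,\theta,\zeta,\eta}}(x,y)$. For $\lambda=0$ it is trivial. For $\lambda>0$ and $x\ge x_0>0$ it reads $\frac{x}{1-\lambda y}\le \check f_{\mathcal{X}_{\epsilon,\theta,\zeta,\eta}}(x,y)$. Note that $\lambda_{\epsilon,x_0}\le 1$ by AM--GM, since $(1-\epsilon+x_0)^2\ge 4x_0(1-\epsilon)$. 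Hence $1-\lambda y\ge 1-y\ge\epsilon>0$, so the left-hand side is well defined.

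\textbf{Step 1: pass to the largest domain.} We have $\mathcal{X}_{\epsilon,\theta,\zeta,\eta}\subseteq\mathcal{X}_{\epsilon,0,0,0}$. The envelope over a larger set is a convex under-estimator of $f$ on the smaller set, so $\check f_{\mathcal{X}_{\epsilon,\theta,\zeta,\eta}}\ge \check f_{\mathcal{X}_{\epsilon,0,0,0}}$ there. By Proposition~\ref{theorem:f_theta} applied with $\theta=0$, this is in turn at least
\[
s_{\epsilon,0}(x,y)=\frac{x}{1-\frac{x(1-\epsilon)}{1-\epsilon-y+x}} .
\]
It therefore suffices to show $\frac{x}{1-\lambda y}\le s_{\epsilon,0}(x,y)$ for every $(x,y)$ with $x\le y\le 1-\epsilon$ and $x\ge x_0$. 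The point $(0,1-\epsilon)$ is excluded because $x>0$.

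\textbf{Step 2: reduce to a quadratic inequality in $y$.} Both denominators are positive (the positivity of the denominator of $s$ was shown in the proof of Proposition~\ref{theorem:f_theta}). So the target is equivalent to
\[
1-\frac{x(1-\epsilon)}{1-\epsilon-y+x}\le 1-\lambda y ,
\qquad\text{that is}\qquad
\lambda\, y\,(1-\epsilon+x-y)\le x(1-\epsilon).
\]
The left side is a concave quadratic in $y$, maximized at $y=\frac{1-\epsilon+x}{2}$, where it equals $\lambda\frac{(1-\epsilon+x)^2}{4}$. Hence the inequality holds for all $y$ as soon as $\lambda\le \lambda_{\epsilon,x}=\frac{4x(1-\epsilon)}{(1-\epsilon+x)^2}$.

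\textbf{Step 3: monotonicity in $x$.} This is the only delicate point: we must check that the threshold $\lambda_{\epsilon,x_0}$, computed at $x_0$, is good for every $x\ge x_0$. The map $g(x)=\frac{x}{(1-\epsilon+x)^2}$ has derivative proportional to $(1-\epsilon)-x$. It is therefore nondecreasing on $[0,1-\epsilon]$, and this interval contains every admissible $x$ because $x\le y\le 1-\epsilon$. Thus $x\ge x_0$ implies $\lambda\le\lambda_{\epsilon,x_0}\le\lambda_{\epsilon,x}$, which gives the claim by Step 2.

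I expect no real obstacle beyond guessing that the $s$-piece with $\theta=0$ is the right comparison. The form of $\lambda_{\epsilon,x_0}$, which is exactly the maximum of $y(1-\epsilon+x-y)$ over $y$, confirms this choice. The hypothesis $x_0\le 1-\zeta$ only guarantees that the region $x\ge x_0$ is nonempty.
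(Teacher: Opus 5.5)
Your proof is correct and follows essentially the paper's route: it bounds $\check f_{\mathcal{X}_{\epsilon,\theta,\zeta,\eta}}$ from below by the $s$-piece, rewrites the target as $\lambda y(1-\epsilon+x-y)\le x(1-\epsilon)$ (the paper's $\lambda\le q_{\epsilon,0}(x,y)$), and combines monotonicity in $x$ with minimization over $y$ at $y^*=\frac{1-\epsilon+x}{2}$. The differences are cosmetic: you remove $\theta$ by domain inclusion instead of by monotonicity of $q_{\epsilon,\theta}$ in $\theta$, you optimize in $y$ before using monotonicity in $x$, and you explicitly check $\lambda_{\epsilon,x_0}\le 1$ so that $f(\lambda x,\lambda y)$ is well defined, which the paper leaves implicit.
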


The full proof of Proposition \ref{prop:approx:def_lambda} can be found in Appendix \ref{proof:approx:def_lambda}.
The scalar $\lambda_{\epsilon, x_0}$ can be seen as a shrinking factor transforming a feasible solution of the $\text{DCMCF}_{\text{env}}$ into a feasible solution of the $\text{DCMCF}$ problem that satisfies the delay Constraints \eqref{eq:constr_f} : $\sum_{a \in p} f(\frac{b^k x^k_p}{c_a}, \frac{y_a}{c_a} ) \leq b^k x^k_p \times d^k  \qquad \forall k \in K, p \in P^k$.

\bigskip

Let us introduce additional notation that will be used throughout the remainder of this section: 
$|P| = \max_{k \in K} |P^k| > 1$, 
$u = \max_{a \in A} u_a$, 
and $b = \min_{k \in K} b^k$.

\begin{lemma}
    Given scalars $\gamma >0$ and $S>1$, for any $k \in K$, $p \in P^k$ and $a \in p$ such that $u_a \geq \frac{b^k \gamma}{ S |P^k|} $, we have: 
    \begin{small_equation*}
    {\lambda^{post}_{\gamma,S}} = \frac{4ub \gamma S|P|}{(uS|P|+b \gamma)^2} \leq  \lambda_{1-\frac{u_a}{c_a}, \frac{b^k \gamma}{c_a S |P^k|}}.    \end{small_equation*}        
\label{lemma:approx:lambda_post}
\end{lemma}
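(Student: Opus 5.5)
The plan is to compute $\lambda_{1-\frac{u_a}{c_a}, \frac{b^k \gamma}{c_a S |P^k|}}$ explicitly and to observe that both sides of the claimed inequality are values of a single one-variable function, $h(t)=\frac{4t}{(1+t)^2}$ for $t \geq 0$, evaluated at two different ratios.

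First I substitute $\epsilon = 1-\frac{u_a}{c_a}$ and $x_0 = \frac{b^k\gamma}{c_a S|P^k|}$ into the definition of $\lambda_{\epsilon,x_0}$ from Proposition \ref{prop:approx:def_lambda}. Set $B_k = \frac{b^k\gamma}{S|P^k|}$. Then $1-\epsilon = \frac{u_a}{c_a}$ and $1-\epsilon+x_0 = \frac{u_a+B_k}{c_a}$. The factors of $c_a$ cancel (two in the numerator, two in the denominator), so
\begin{small_equation*}
\lambda_{1-\frac{u_a}{c_a}, \frac{b^k \gamma}{c_a S |P^k|}} = \frac{4u_aB_k}{(u_a+B_k)^2} = h\Big(\frac{B_k}{u_a}\Big).
\end{small_equation*}
In the same way, dividing the numerator and denominator of $\lambda^{post}_{\gamma,S}$ by $(uS|P|)^2$ gives $\lambda^{post}_{\gamma,S} = h(t')$ with $t' = \frac{b\gamma}{uS|P|}$.

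Next I record the behaviour of $h$. It is nonnegative, satisfies $h(t)=h(1/t)$, and has $h'(t) = \frac{4(1-t)}{(1+t)^3}$, so $h$ is nondecreasing on $[0,1]$. Write $t = \frac{B_k}{u_a} = \frac{b^k\gamma}{u_aS|P^k|}$. The hypothesis $u_a \geq \frac{b^k\gamma}{S|P^k|}$ says exactly that $t \leq 1$. Moreover $b \leq b^k$, $|P| \geq |P^k|$ and $u \geq u_a$ by the definitions of $b$, $|P|$ and $u$, so $0 < t' \leq t \leq 1$. Monotonicity of $h$ on $[0,1]$ then gives $\lambda^{post}_{\gamma,S} = h(t') \leq h(t)$, which is the claim.

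The only step needing care is confirming that $t \leq 1$, i.e. that we stay on the increasing branch of $h$. This is precisely what the hypothesis $u_a \geq \frac{b^k\gamma}{S|P^k|}$ guarantees; without it, shrinking the ratio from $t$ to $t'$ could cross the maximum of $h$ at $t=1$ and the inequality could fail. The lemma only concerns the numerical value of $\lambda$, so whether $x_0 \leq 1-\zeta$ (needed later to apply Proposition \ref{prop:approx:def_lambda}) does not arise here.
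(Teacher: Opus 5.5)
Your proof is correct and uses the same approach as the paper: compute $\lambda_{1-\frac{u_a}{c_a}, \frac{b^k\gamma}{c_a S|P^k|}} = \frac{4b^k\gamma u_a S|P^k|}{(u_aS|P^k|+b^k\gamma)^2}$ explicitly, then argue by monotonicity on the increasing side of the peak. The paper uses three separate derivative steps, in $b^k$, then $|P^k|$, then $u_a$, each with its own range check. You instead note that both sides equal $h(t)=\frac{4t}{(1+t)^2}$, evaluated at $t'=\frac{b\gamma}{uS|P|}$ and at $t=\frac{b^k\gamma}{u_aS|P^k|}$ with $t'\le t\le 1$, so a single monotonicity check suffices and the presentation is cleaner.
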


Lemma \ref{lemma:approx:lambda_post} translates this generic shrinking factor into the specific parameters of the DCMCF problem to provide a uniform scaling factor $\lambda^{post}_{\gamma,S}$ for all paths exceeding a given threshold. The proof, found in Appendix \ref{proof:approx:lambda_post}, relies on the monotonicity of the shrinking factor with respect to the demand sizes, path counts, and arc load bounds.

\bigskip

\begin{proposition}
    Given a solution of the $\mbox{DCMCF}_\text{env}$ problem with objective value $\tilde \gamma$, there exists a feasible solution of  DCMCF  with objective value $\frac{b \tilde \gamma} {u (|P|-1)+b \tilde \gamma}  \tilde \gamma$. \label{prop:approx:h_pos_guarantee_posterior}
\end{proposition}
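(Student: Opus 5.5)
We will start from an optimal (or any) solution $(\tilde x, \tilde y, \tilde \gamma)$ of $\mbox{DCMCF}_\text{env}$ and apply the heuristic $\mathcal{H}_{threshold}$ with a carefully chosen $S$. The construction has three parts: discard the paths whose flow is below the threshold, shrink everything by the uniform factor $\lambda^{post}_{\tilde\gamma,S}$ of Lemma \ref{lemma:approx:lambda_post}, and then optimize over $S$.

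\textbf{Step 1 (thresholding).} For each $k$, we set $Q^k_S=\{p\in P^k:\tilde x^k_p\ge \frac{\tilde\gamma}{S|P^k|}\}$. Since $\sum_p \tilde x^k_p=\tilde\gamma$ and $S>1$, $Q^k_S$ is nonempty. At most $|P^k|-1$ paths are discarded, each carrying less than $\frac{\tilde\gamma}{S|P^k|}$. Hence the kept flow of every commodity is at least
\[
\tilde\gamma\Bigl(1-\tfrac{|P^k|-1}{S|P^k|}\Bigr)\ge \tilde\gamma\Bigl(1-\tfrac{|P|-1}{S|P|}\Bigr),
\]
because $(m-1)/m$ is increasing in $m$.

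\textbf{Step 2 (shrinking and feasibility).} Let $\lambda=\lambda^{post}_{\tilde\gamma,S}$. We define $x^k_p=\lambda\tilde x^k_p$ for $p\in Q^k_S$, $x^k_p=0$ otherwise, and $y_a=\lambda\tilde y_a$. Then \eqref{eq:dcmf:xa} holds, since the loads only decreased before scaling. For a kept path $p$ and $a\in p$, we will apply Proposition \ref{prop:approx:def_lambda} at the point $(\frac{b^k\tilde x^k_p}{c_a},\frac{\tilde y_a}{c_a})$ of the domain used in \eqref{eq:constr_f_check}. The parameters are $\epsilon=1-\frac{u_a}{c_a}$ and $x_0=\frac{b^k\tilde\gamma}{c_aS|P^k|}$. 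The hypotheses to check are:
\begin{itemize}
\item $x\ge x_0$, which is the threshold condition;
\item $x_0\le 1-\zeta=\frac{b^ku^k_p}{c_a}$, which follows from $x_0\le \frac{b^k\tilde x^k_p}{c_a}\le\frac{b^k u^k_p}{c_a}$;
\item $u_a\ge \frac{b^k\tilde\gamma}{S|P^k|}$, needed for Lemma \ref{lemma:approx:lambda_post}, which follows from $u_a\ge\tilde y_a\ge b^k\tilde x^k_p\ge \frac{b^k\tilde\gamma}{S|P^k|}$.
\end{itemize}
Lemma \ref{lemma:approx:lambda_post} then gives $\lambda\le\lambda_{\epsilon,x_0}$. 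Summing over $a\in p$ and using \eqref{eq:constr_f_check}, we get $\sum_{a\in p} f(\frac{b^kx^k_p}{c_a},\frac{y_a}{c_a})\le \lambda b^k\tilde x^k_p d^k=b^kx^k_pd^k$. Dividing by $x^k_p>0$ gives \eqref{eq:dcmf:delay}. Finally, each commodity's flow is reduced proportionally to the common minimum. This only lowers loads, and the delays are increasing in the loads, so the delay constraints of used paths stay satisfied. We obtain a DCMCF solution with value
\[
\lambda\tilde\gamma\Bigl(1-\tfrac{|P|-1}{S|P|}\Bigr).
\]

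\textbf{Step 3 (choice of $S$).} We write $t=uS|P|$, $A=u(|P|-1)$ and $c=b\tilde\gamma$. The value equals $\tilde\gamma\,\frac{4c(t-A)}{(t+c)^2}$. This is maximized at $t=c+2A$, i.e.\ $S=\frac{b\tilde\gamma+2u(|P|-1)}{u|P|}$. This $S$ satisfies $S>1$ because $|P|\ge2$. At this $t$, the expression simplifies to $\frac{4c(c+A)}{4(c+A)^2}=\frac{c}{c+A}$, which gives exactly the claimed value $\frac{b\tilde\gamma}{u(|P|-1)+b\tilde\gamma}\tilde\gamma$.

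The main obstacle is Step 2: making sure the point lies in the exact domain $\mathcal{X}$ on which $\check f$ in \eqref{eq:constr_f_check} is built. This requires that the bounds $u_a$, $u^k_p$ and the value of $\zeta$ are consistent with the hypotheses of Proposition \ref{prop:approx:def_lambda} and Lemma \ref{lemma:approx:lambda_post}. It also requires that the scaled loads $\lambda\tilde y_a$ remain valid arguments.
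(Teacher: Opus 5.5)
Your proof is correct and follows essentially the same route as the paper: threshold at $\tilde\gamma/(S|P^k|)$, shrink the kept flows by $\lambda^{post}_{\tilde\gamma,S}$ using Lemma~\ref{lemma:approx:lambda_post} and Proposition~\ref{prop:approx:def_lambda}, bound the kept flow by $\frac{S|P|-(|P|-1)}{S|P|}\tilde\gamma$, and pick the same $S^*=2-\frac{2}{|P|}+\frac{b\tilde\gamma}{u|P|}$. Your deviations are harmless and slightly tidier than the paper: setting $y_a=\lambda\tilde y_a$ is legitimate because \eqref{eq:dcmf:xa} is an inequality, and equalizing the commodity flows explicitly respects \eqref{eq:dcmf:gamma}. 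The domain concern you flag is resolved by the bound constraints of $\mbox{DCMCF}_\text{env}$, since Proposition~\ref{prop:approx:def_lambda} only requires the unscaled point to lie in $\mathcal{X}$.
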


Proposition \ref{prop:approx:h_pos_guarantee_posterior} establishes an ex-post performance guarantee by constructing a feasible DCMCF solution from the relaxation. As detailed in Appendix \ref{proof:approx:h_pos_guarantee_posterior}, the proof filters the relaxed solution by setting path usages below a threshold $S^*$ to zero, and scales the flow on the remaining paths by the uniform factor established in Lemma \ref{lemma:approx:lambda_post} to ensure the satisfaction of the delay constraints.

\bigskip

\begin{lemma}
    Given a $\text{DCMCF}$ instance  and a scalar $\alpha > 1$, it is possible to build in polynomial time a relaxation $\mbox{DCMCF}_\text{env}$ with upper bounds $u_a$ of variable $y_a$ for every $a \in A$ and a solution with objective value $\gamma^*$, such that $u_a \leq \alpha  (\sum_{k \in K} b^k) \gamma^*$. \label{lemma:approx:alpha} 
\end{lemma}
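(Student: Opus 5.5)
The idea is to tighten the arc-load upper bounds iteratively, using the bound-tightening convex programs of Section~\ref{sub:param}, seeded by the explicit lower bound $\gamma_{low}$ of \eqref{eq:low5}. Write $B=\sum_{k\in K} b^k$. In any solution of $\mbox{DCMCF}_\text{env}$ with value $\gamma$, constraints \eqref{eq:dcmf:gamma} and \eqref{eq:dcmf:xa} give $y_a \le B\gamma$ only if $y_a$ is set to the sum of the path flows, so we work with solutions where \eqref{eq:dcmf:xa} is tight. Such a solution exists, because decreasing $y_a$ toward the flow sum keeps every relaxed delay constraint satisfied: the functions $r,s,t$ are nondecreasing in $y$.

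First I would get a relaxation whose optimum is not much larger than the values we can certify. Start with the explicit bounds $u_a$ and $u^k_p$ of Section~\ref{sub:param}, which are polynomially encoded, and set $\gamma_0=\gamma_{low}$, which the Lemma shows is achievable. Then repeat the following round. Solve the convex program $\max y_{a'}$ subject to the $\mbox{DCMCF}_\text{env}$ constraints and $\gamma\ge\gamma_0$, for every arc $a'$, and replace $u_{a'}$ by the optimal value. Also recompute $u^k_p=\min_{a\in p} u_a/b^k$. Finally solve $\mbox{DCMCF}_\text{env}$ with the new bounds and obtain its value $\gamma^*$.

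The round is valid for two reasons. Since $\gamma_0$ is attained by a true DCMCF solution, which is also feasible for every relaxation built with valid bounds, each new bound is still valid and $\gamma^*\ge\gamma_0$. In addition, $u_{a'}$ is now at most the largest load compatible with $\gamma\ge\gamma_0$ in the relaxation.

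The key inequality to aim for is the following. If the maximal load $u_a$ satisfies $u_a > \alpha B\gamma^*$, then I would show $\gamma^*$ must be large compared to $\gamma_0$. The load $y_a\le B\gamma$ in any tight solution, so the solution attaining $u_a$ has $\gamma\ge u_a/B>\alpha\gamma^*$. This contradicts the optimality of $\gamma^*$ once we use the same bounds for both problems.

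The subtlety is that $u_a$ was computed under the old relaxation, before the bounds were tightened. So I would iterate the round with $\gamma_0\gets\gamma^*/\alpha$, or more simply alternate between tightening the bounds and re-solving. Each non-terminating round multiplies $\gamma^*$'s certified lower threshold, or shrinks $\max_a u_a$, by a factor of at least $\alpha$. Because $\gamma$ lies between $\gamma_{low}$ and $\max_a c_a/b$, the number of rounds is $O(\log_\alpha(\max_a c_a/(b\gamma_{low})))$, which is polynomial in the encoding size.

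The main obstacle is making this monotonicity argument rigorous. Tightening the bounds changes the relaxation, and hence $\gamma^*$ itself. The statement requires $u_a\le\alpha B\gamma^*$ for the final relaxation together with its own optimal value. I would handle this with the nested-relaxation property: smaller $u_a$ gives a tighter envelope, so $\gamma^*$ is nonincreasing across rounds. The stopping test is then applied only after the bounds have been recomputed against the current $\gamma^*$, which guarantees the inequality at termination. Polynomial time additionally relies on solving each SOCP to sufficient accuracy, which I would absorb into $\alpha$ by taking a slightly smaller effective ratio.
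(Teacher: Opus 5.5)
Your proposal has a genuine gap: the argument that the loop terminates within polynomially many rounds is missing, and the two variants you sketch do not supply it. Your attempted contradiction compares a bound $u_a$ produced by the \emph{old} relaxation with the optimum $\gamma^*$ of the \emph{new} one, and, as you notice, nothing contradicts.

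The first remedy, $\gamma_0\gets\gamma^*/\alpha$, is unsound. Nothing guarantees $\gamma^{opt}\ge\gamma^*/\alpha$, since the relaxation gap can exceed any fixed $\alpha>1$. Bounds computed under $\gamma\ge\gamma^*/\alpha$ may therefore cut off every optimal DCMCF solution, and the result is no longer a relaxation. Theorem~\ref{theorem:approx:h_pos_guarantee_prior} needs that property to compare with $\gamma^{opt}$. The potential you then invoke ("the threshold grows or $\max_a u_a$ shrinks by a factor $\alpha$ per round") is only asserted, and $\max_a u_a$ need not shrink by $\alpha$. Monotonicity of $\gamma^*$ alone gives no rate.

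The missing idea is to control the new bounds by the \emph{previous} optimum. Write $B=\sum_{k}b^k$. Every relaxation in the sequence is valid, so $\gamma_n^*\ge\gamma^{opt}$. An optimal DCMCF solution with tight loads has $y_a\le B\gamma^{opt}$. Hence $u_a^{n+1}=\min(u_a^n,B\gamma_n^*)$ is a legitimate update for every arc; this is exactly the paper's update. If the test then fails for some arc $a$, you get $\alpha B\gamma_{n+1}^*<u_a^{n+1}\le B\gamma_n^*$, i.e.\ $\gamma_{n+1}^*<\gamma_n^*/\alpha$. The paper writes the same chain, $\alpha B\gamma^*_{n+l}<u_a^{n+l}\le u_a^{n+1}\le B\gamma^*_n$, after a pigeonhole over $|A|+1$ rounds. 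Combining $\gamma_n^*\ge\gamma_{low}$ with $\gamma_0^*\le\sum_a c_a/B$ bounds the number of rounds by $O(\log_\alpha(\sum_a c_a/(B\gamma_{low})))$. Your range bound $\gamma\le\max_a c_a/b$ is wrong as stated, since a commodity may split over many parallel arcs; use $\sum_a c_a/B$.

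Your tightening program also has to change before even this works. In $\max y_{a'}$, constraint \eqref{eq:dcmf:xa} bounds $y_{a'}$ only from below. Since $r$, $s$, $t$ all vanish at $x=0$, the maximizer can route nothing through $a'$ and simply return the old $u_{a'}$. So the claim ``the solution attaining $u_a$ has $\gamma\ge u_a/B$'' fails. Your tightness argument applies when maximizing $\gamma$, not $y_{a'}$.

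You can fix this by maximizing the arc flow $\sum_{k\in K,\,p\in P^k:\,a'\in p}b^kx^k_p$ instead, or by imposing \eqref{eq:dcmf:xa} with equality, which keeps the program convex, while keeping $\gamma_0=\gamma_{low}$ fixed. Then each recomputed bound is at most $\min(u_{a'}^n,B\gamma_n^*)$. Your scheme becomes a tighter but costlier version of the paper's update, with $|A|$ extra cone programs per round, and the geometric-decrease argument above goes through unchanged. The paper avoids these auxiliary programs altogether by using $B\gamma_n^*$ directly as the new bound.
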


Lemma \ref{lemma:approx:alpha} ensures that sufficiently tight upper bounds $u_a$ on the arc loads can be computed. The proof, provided in Appendix \ref{proof:approx:alpha}, relies on an iterative process that repeatedly solves the relaxation until the bounding criterion is met. Combining these tight load bounds with the ex-post guarantee from Proposition \ref{prop:approx:h_pos_guarantee_posterior} allows us to establish the overall approximation ratio of the algorithm:

\begin{theorem}
    Given a scalar $\alpha > 1$, it is possible to build a $\frac{1}{\alpha\beta|K|(|P|-1)+1}$ approximation of the DCMCF problem in polynomial time, with $\beta=\frac{\max_{k \in K} b^k}{\min_{k\in K}b^k}$.
    \label{theorem:approx:h_pos_guarantee_prior}
\end{theorem}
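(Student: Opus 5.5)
The plan is to combine Lemma \ref{lemma:approx:alpha} with Proposition \ref{prop:approx:h_pos_guarantee_posterior}, and then compare the resulting feasible value with the optimum of DCMCF through the relaxation bound.

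First, given $\alpha>1$, invoke Lemma \ref{lemma:approx:alpha}. In polynomial time it yields a relaxation $\mbox{DCMCF}_\text{env}$ together with a solution of value $\gamma^*$ such that $u_a \le \alpha (\sum_{k\in K} b^k)\gamma^*$ for every arc $a$. Two facts about $\gamma^*$ are needed:
\begin{itemize}
\item $\gamma^*$ should be the optimal value of that relaxation (or at least $\gamma^*\ge\gamma^{opt}$). This holds because $\mbox{DCMCF}_\text{env}$ is a valid relaxation as long as the bounds $u_a$ are valid for optimal solutions.
\item Hence $\gamma^{opt}\le \gamma^*$.
\end{itemize}
I will check that the iterative tightening in Lemma \ref{lemma:approx:alpha} keeps the bounds valid for some optimal DCMCF solution. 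For instance, the bounds are tightened using the constraint $\gamma\ge\gamma_0$ with $\gamma_0$ a feasible value, which is the case described in Section \ref{sub:param}.

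Next, apply Proposition \ref{prop:approx:h_pos_guarantee_posterior} to this solution, which is the $\mathcal{H}_{threshold}$ construction. It produces a feasible DCMCF solution of value
\begin{small_equation*}
\gamma_H=\frac{b\gamma^*}{u(|P|-1)+b\gamma^*}\,\gamma^*,
\end{small_equation*}
where $u=\max_a u_a\le \alpha(\sum_k b^k)\gamma^*$. Since $\sum_k b^k\le |K|\max_k b^k=|K|\beta b$, we get $u\le \alpha\beta|K| b\gamma^*$. The ratio is increasing when $u$ decreases, so
\begin{small_equation*}
\frac{\gamma_H}{\gamma^*}\ \ge\ \frac{b\gamma^*}{\alpha\beta|K|b\gamma^*(|P|-1)+b\gamma^*}=\frac{1}{\alpha\beta|K|(|P|-1)+1}.
\end{small_equation*}
Combined with $\gamma^{opt}\le\gamma^*$, this gives $\gamma_H\ge \gamma^{opt}/(\alpha\beta|K|(|P|-1)+1)$, which is the claimed ratio.

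Finally, polynomiality. Lemma \ref{lemma:approx:alpha} is polynomial. The construction in Proposition \ref{prop:approx:h_pos_guarantee_posterior} consists of thresholding and uniform scaling, which is explicit. Alternatively, solving $\mbox{DCMCF}_\text{subset}$ on the retained paths is a convex SOCP, solvable to any accuracy in polynomial time, and its value is at least $\gamma_H$.

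The main obstacle I expect is the validity of $\gamma^{opt}\le\gamma^*$ after the bound-tightening procedure. Tightened bounds obtained under $\gamma\ge\gamma_0$ are valid only for solutions with value at least $\gamma_0$, and the optimum satisfies this. So I would argue that the optimal DCMCF solution remains feasible for each successive relaxation, and that $\gamma^*$ is the relaxation optimum. A secondary issue is the exact-solve precision of SOCPs. This costs only an arbitrarily small factor, which can be absorbed because $\alpha>1$ is strict; the encoding is handled via $\gamma_{low}$ from \eqref{eq:low5}.
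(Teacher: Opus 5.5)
Your proposal is correct and follows the paper's route. Lemma \ref{lemma:approx:alpha} supplies the relaxation with $u_a\le\alpha(\sum_{k\in K}b^k)\gamma^*$, Proposition \ref{prop:approx:h_pos_guarantee_posterior} supplies the feasible value, and the bound $u\le\alpha\beta|K|b\gamma^*$ together with monotonicity of $\frac{b\gamma^*}{u(|P|-1)+b\gamma^*}$ in $u$ gives the ratio. One small correction: the tightening in Lemma \ref{lemma:approx:alpha} is not the $\gamma\ge\gamma_0$ procedure of Section \ref{sub:param} but the update $u_a^{n+1}=\min(u_a^n,(\sum_{k\in K}b^k)\gamma_n^*)$, which remains valid for every feasible DCMCF solution because each has value at most the relaxation optimum $\gamma_n^*$, and this is what justifies the inequality $\gamma^{opt}\le\gamma^*$ that the paper uses implicitly.
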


The proof of Theorem \ref{theorem:approx:h_pos_guarantee_prior} is given in Appendix \ref{proof:approx:h_pos_guarantee_prior}, where we consolidate the previous bounds to extract the final approximation factor.

\section{Numerical experiments}\label{sec:num_experiments}

\FloatBarrier
In order to assess the relevance of the new relaxations presented in this work, we implemented them with SCIP 10.0 \parencite{2025scip} in C++. Tests were conducted with $16$ CPUs and 64 GB of RAM.

The reformulations of the DCMCF problem described above were implemented. All of them included the Big-M constraints 
\eqref{eq:bigM:z_def} and \eqref{eq:bigM:delay} introduced in Section \ref{sec:bigM}, and all of them (except the first one) include additional tightening constraints:
\begin{itemize}
    \item $\text{BIG-M}$ : the Big-M formulation introduced in Section \ref{sec:bigM}.
    \item $\text{CONVEX}$ : the Big-M formulation with Constraints \eqref{eq:dcmcf_env:delay} to \eqref{2eq:dcmcf_env:bound_y} from the $\mbox{DCMCF}_\text{env}$ relaxation.
    \item $\text{S}$ : the Big-M formulation, with Constraints \eqref{eq:dcmcf_env:delay}, \eqref{eq:dcmcf_env:delay:s}, and \eqref{1eq:dcmcf_env:bound_x} to \eqref{2eq:dcmcf_env:bound_y} from the $\mbox{DCMCF}_\text{env}$ relaxation, provides a relaxation based on the $s_{\epsilon, \theta}$ function introduced in Section \ref{sec:conv_env:s}.
    \item $\text{T}$ : the Big-M formulation, with Constraints \eqref{eq:dcmcf_env:delay}, and \eqref{eq:dcmcf_env:delay:t}  to
\eqref{2eq:dcmcf_env:bound_y} from the $\mbox{DCMCF}_\text{env}$ relaxation provides a relaxation based on the $t_{\epsilon, \zeta}$ function introduced in Section \ref{sec:conv_env:t} ; this formulation is comparable to the one presented in \parencite{hijazi2012}.
\end{itemize}

The heuristics and relaxations were tested on two sets of instances. The first set is from the Survivable Network Design Library (SND-Lib, \parencite{orlowski2010sndlib}) on which, for every link $(u,v) \in A$, a reverse link $(v,u)$ with capacity $c_{(u,v)}$ was added if it did not exist. Information on these instances is shown in Table \ref{table:bnb_sndlib}.
The second set includes \tmp{126} randomly generated instances (connected  Erdös–Rényi graphs, viewed as  symmetric directed graphs with symmetric arc costs: $c_{(u,v)}=c_{(v,u)}$). These randomly generated instances range from $25$ to $200$ commodities, $10$ to $50$ vertices and $30$ to $475$ arcs.

For each demand $k \in K$, a set $P^k$ of the $10$ shortest paths, weighted by $\frac{1}{c_a}$, was generated through Yen's algorithm \parencite{yen1971}.
For each instance, the same delay $d=1.1 \times \max_{k \in K, p \in P^k} \sum_{a \in p} \frac{1}{c_a}$ was used as the maximum delay for every commodity $k \in K$. This maximum delay ensures that every path could carry some traffic.

\subsection{Heuristics}

We first compare the lower bounds obtained by these $3$ heuristics $\mathcal{H}_{all~paths}$, $\mathcal{H}_{greedy}$ and $\mathcal{H}_{threshold}$ defined in Section \ref{sec:low}. The heuristic $\mathcal{H}_{threshold}$ uses the threshold parameter $S^*=2-\frac{2}{|P|}+\frac{b \tilde \gamma}{u|P|}$ defined in the proof of Proposition \ref{prop:approx:h_pos_guarantee_posterior}.

Figures \ref{fig:heuristics_gap_snd} and \ref{fig:heuristics_gap_rnd} show the gaps between these heuristic and the $\mbox{DCMCF}_\text{env}$ relaxation, which provides the tightest known relaxation to the delay-constraint problem, as shown in Section \ref{sec:env}. Gaps are defined as: $gap=\frac{\gamma^*_H}{\gamma^*_{env}}-1$, with $\gamma^*_H$ the objective value of the heuristic and $\gamma^*_{env}$ the optimal solution of the $\mbox{DCMCF}_\text{env}$ relaxation.
In Figure \ref{fig:heuristics_gap_rnd}, instances are grouped by the number $|K|$ of commodities and presented as a box plot (the box extends from the first quartile to the third quartile of the data, with a line at the median; the whiskers extend from the box to the farthest data point lying within 1.5x the inter-quartile range from the box).

By definition, the $\mathcal{H}_{greedy}$ (in green) heuristic dominates $\mathcal{H}_{all~paths}$ (in red).
While the $\mathcal{H}_{greedy}$ heuristic tends to provide better solutions on smaller instances, its performance seems to deteriorate on larger values of $|K|$ where the $\mathcal{H}_{threshold}$ heuristic appears to be more efficient (as shown in Figure \ref{fig:heuristics_gap_rnd}).

\begin{figure}
\begin{subfigure}[t]{0.5\linewidth}
    \centering
\caption{\small Heuristics gaps for the SND-Lib instances}    \includegraphics[trim={1.3cm 0.cm 2cm 1.0cm},clip,width=1\linewidth]{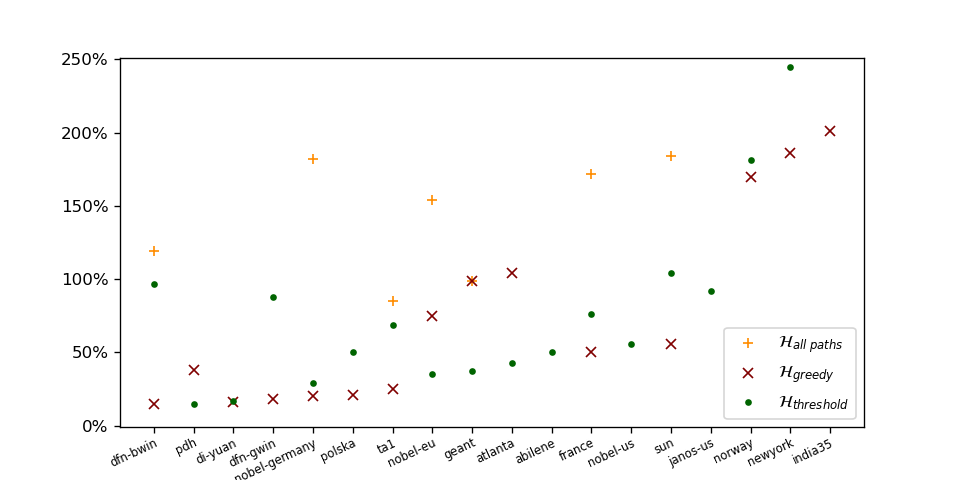}
\label{fig:heuristics_gap_snd}
\end{subfigure}
\hfill
\begin{subfigure}[t]{0.5\linewidth}
    \centering
    \caption{ Heuristics gaps for the random instances grouped by$|K|$}
\includegraphics[trim={1.3cm 0cm 2cm 1.cm},clip, width=1.\linewidth]{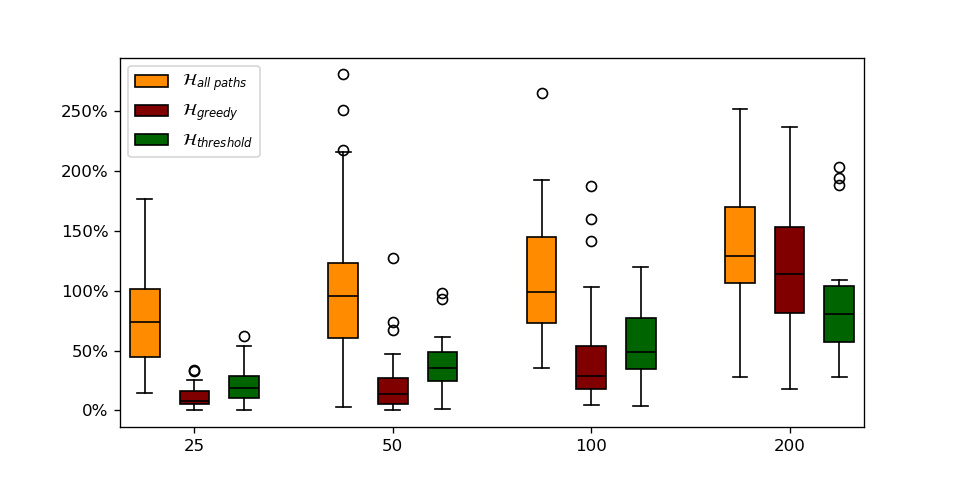}
    \label{fig:heuristics_gap_rnd}
\end{subfigure}
\caption{Gaps between the heuristics presented in Section \ref{sec:low} with respect to the CONVEX relaxation.}
\end{figure}

%\subsection

\subsection{DCMCF relaxations}

The choice of the lower and upper bounds $u_a$, $l_a$, $u_p^k$, and $l_p^k$ significantly affects the strength of the resulting relaxations.
As described in Section \ref{sub:param}, bounds  can be computed through a closed-formula or a convex optimization problem. 
In practice, the  bounds ${l_a}_p^k$ are set to $0$, while  $l_a, u_a$ and $u_p^k$ are computed by solving the corresponding convex-optimization problems.  Notice that even when we are considering the bounds related to the $\text{BIG-M}$ (resp. $\text{S}$ or $\text{T}$)  formulation, we use the strongest relaxation (i.e., $\mbox{DCMCF}_\text{env}$) to compute them. 

We compare the convex relaxations of the DCMCF problem associated with the formulations $\text{BIG-M}$, $\text{S}$, $\text{T}$ and  $\text{CONVEX}$ described above.
As $\text{S}$, $\text{T}$ and $\text{CONVEX}$ include the constraints of the $\text{BIG-M}$ relaxation, they necessarily provide tighter upper bounds for the problem. Similarly, all constraints from the $\text{S}$ and $\text{T}$ relaxations are included in the  $\text{CONVEX}$ relaxation, meaning that the upper bound of the latter is necessarily tighter than the upper bound of the former.

Figure \ref{fig:relaxations} provides comparisons of these upper bounds, showing the relative gains of $\text{S}$, $\text{T}$ and $\text{CONVEX}$ versus the $\text{BIG-M}$ relaxation. The relative-gain is defined as the difference in objective value between $\text{BIG-M}$ and the considered relaxation, divided by the difference between the objective value of the $\text{BIG-M}$ relaxation and the best of the $\mathcal{H}_{threshold}$, $\mathcal{H}_{greedy}$ or $\mathcal{H}_{all~paths}$ heuristics: $rel.~gain_{relax}=\frac{\gamma^*_{BIG-M} - \gamma^*_{relax}}{\gamma^*_{BIG-M}-\gamma^*_{best~\mathcal{H}}}$ where $\gamma^*_{relax}$ is the optimal value of the relaxation $relax$ and $\gamma^*_{best~\mathcal{H}}$ is the best lower-bound obtained by the heuristics described above. 
The name of the SND-Lib instances is provided in Figure \ref{fig:relaxations_snd} - but the name of the \tmp{$126$} random instances is not shown in Figure \ref{fig:relaxations_rnd}. In both figures, instances are ordered by value of the $\text{CONVEX}$ relative gain.

As proved in Section \ref{sec:env}, the $\text{CONVEX}$ reformulation indeed dominates both $\text{S}$ and $\text{T}$. 
The absence of domination between the relaxations $\text{S}$ and $\text{T}$ is consistent with their definition and the fact that functions $s_{\epsilon, \gamma}$ and $t_{\epsilon, \zeta}$ dominate each other in different subsets of the feasible region; however, on the randomly-generated instances tested, the $S$ relaxation provides tighter upper bounds than the $T$ relaxation on \tmp{$123$} of the \tmp{$126$} instances tested.

\begin{figure}    
\begin{subfigure}[t]{0.5\textwidth}
    \centering
    \caption{SND-Lib instances}
    \includegraphics[trim={1.2cm 0.cm 1.5cm 1cm},clip, width=1\linewidth]{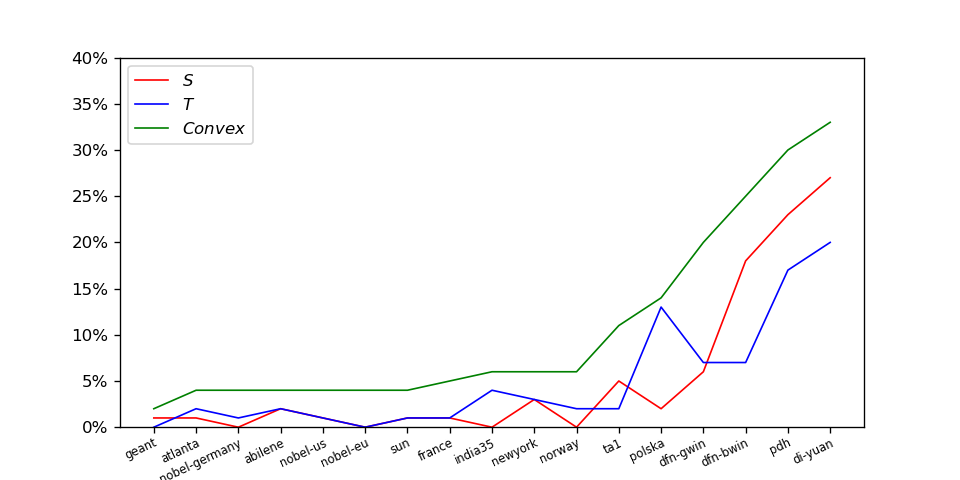}
    \label{fig:relaxations_snd}
\end{subfigure}
\hfill 
\begin{subfigure}[t]{0.5\textwidth}
    \centering
    \caption{Randomly-generated instances}
    \includegraphics[trim={1.2cm 0.cm 1.5cm 1cm},clip,width=1\linewidth]{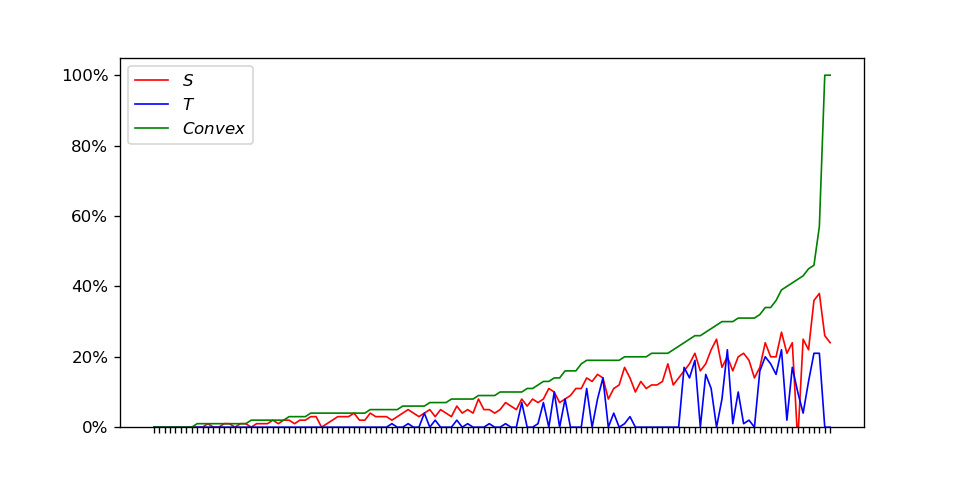}
    
    \label{fig:relaxations_rnd}
\end{subfigure}
\caption{Relative gains of the relaxations $\text{CONVEX}$, $\text{S}$ and $\text{T}$ versus the $\text{BIG-M}$ relaxation.}
\label{fig:relaxations}
\end{figure}

\subsection{DCMCF exact solving}

The DCMCF was solved using a branch-and-bound algorithm. The 4 relaxations presented above ($\text{BIG-M}$, $\text{S}$, $\text{T}$ and $\text{CONVEX}$) were implemented and compared. Tables \ref{table:bnb_sndlib} shows, for each  SND-lib instance, the respective performances of these formulations: either the solving time if an optimality gap of $0.1\%$ was reached within $1$ hour, or the optimality gap if it was above $0.1\%$ after this time limit. For the randomly generated instances, the proportion of random instances solved after a given solving time or, for instances not solved after one hour, the residual gap, are shown in Figure \ref{fig:perf_graphs_exact_rnd}.

The $\text{CONVEX}$ formulation provides a tighter relaxation of the problem than the lighter $\text{BIG-M}$ relaxation, at the expense of its solving time. These effects balance each other on instances of the SND-Lib, where the $\text{CONVEX}$ formulation outperformed the $\text{BIG-M}$ one (either in terms of gap after one hour or solving time if optimality is reached before that time limit)  in $11$ instances over $19$. On the randomly generated instances, however, the lighter $\text{BIG-M}$ relaxation proves more efficient on \tmp{$77$} of the \tmp{$126$} instances tested.

The $\text{S}$ and $\text{T}$ formulations provide good trade-offs between tightness and relaxation solving time, allowing them to outperform both the $\text{BIG-M}$ (on $12$ and $13$ instances of the SND-Lib and \tmp{$72$} and \tmp{$64$} randomly-generated instances, respectively). The dominance of the $\text{S}$ relaxation, illustrated in Figure \ref{fig:relaxations}, does not materialize when solving the exact DCMCF problem in a branch-and-bound scheme: the $\text{S}$ relaxation is more efficient than $\text{T}$ in $8$ of the $19$ SND-Lib instances and $72$ of the $126$ random instances.

\begin{table}[ht]
\scriptsize
\centering
\begin{tabular}{c|c c c|c c c c}
Instance	&	$|K|$	&	$|A|$	&	$|V|$	&	$\text{BIG-M}$		&	$\text{S}$		&	$\text{T}$		&	$\text{CONVEX}$		\\ \hline
pdh	&	$24$	&	$68$	&	$11$	&	$\boldsymbol	{(250)}$	&	$	{(960)}$	&	$	{(582)}$	&	$	{(614)}$	\\
di-yuan	&	$22$	&	$84$	&	$11$	&	$\boldsymbol	{(188)}$	&	$	{(462)}$	&	$	{0.6\%}$	&	$	{(603)}$	\\
polska	&	$66$	&	$36$	&	$12$	&	$	{15.9\%}$	&	$	{16.9\%}$	&	$\boldsymbol	{13.5\%}$	&	$	{16\%}$	\\
nobel-us	&	$91$	&	$42$	&	$14$	&	$	{29.6\%}$	&	$	{74.3\%}$	&	$	{58.5\%}$	&	$\boldsymbol	{21.3\%}$	\\
abilene	&	$132$	&	$30$	&	$12$	&	$	{35.8\%}$	&	$	{13.5\%}$	&	$\boldsymbol	{6.5\%}$	&	$	{29.4\%}$	\\
nobel-germany	&	$121$	&	$52$	&	$17$	&	$	{18.1\%}$	&	$	{16.5\%}$	&	$\boldsymbol	{16.3\%}$	&	$	{17.5\%}$	\\
dfn-bwin	&	$90$	&	$90$	&	$10$	&	$	{7.9\%}$	&	$\boldsymbol	{6.6\%}$	&	$	{9.8\%}$	&	$	{10.1\%}$	\\
atlanta	&	$210$	&	$22$	&	$15$	&	$	{55.4\%}$	&	$\boldsymbol	{34.5\%}$	&	$	{42.1\%}$	&	$	{59.2\%}$	\\
dfn-gwin	&	$110$	&	$47$	&	$11$	&	$	{17.6\%}$	&	$	{16.9\%}$	&	$	{17.5\%}$	&	$\boldsymbol	{16.6\%}$	\\
sun	&	$67$	&	$102$	&	$27$	&	$	{35.1\%}$	&	$	{43\%}$	&	$\boldsymbol	{35\%}$	&	$	{54.5\%}$	\\
newyork	&	$240$	&	$98$	&	$16$	&	$	{104\%}$	&	$	{104\%}$	&	$	{104\%}$	&	$\boldsymbol	{99.6\%}$	\\
france	&	$300$	&	$90$	&	$25$	&	$	{51\%}$	&	$	{52.4\%}$	&	$\boldsymbol	{49.5\%}$	&	$	{50\%}$	\\
nobel-eu	&	$378$	&	$82$	&	$28$	&	$	{113.8\%}$	&	$\boldsymbol	{49.3\%}$	&	$	{58.2\%}$	&	$	{50.9\%}$	\\
ta1	&	$396$	&	$102$	&	$24$	&	$	{27.4\%}$	&	$\boldsymbol	{26.3\%}$	&	$	{26.9\%}$	&	$	{26.8\%}$	\\
geant	&	$462$	&	$72$	&	$22$	&	$	{97.2\%}$	&	$	{13.9\%}$	&	$\boldsymbol	{9.4\%}$	&	$	{98.8\%}$	\\
norway	&	$702$	&	$102$	&	$27$	&	$	{180.8\%}$	&	$\boldsymbol	{174.1\%}$	&	$	{180.8\%}$	&	$	{180.8\%}$	\\
india35	&	$595$	&	$160$	&	$35$	&	$	{179.7\%}$	&	$\boldsymbol	{158.4\%}$	&	$	{174.8\%}$	&	$	{212.8\%}$	\\
germany50	&	$662$	&	$88$	&	$50$	&	$	{128.6\%}$	&	$	{128.6\%}$	&	$	{128\%}$	&	$\boldsymbol	{80\%}$	\\
cost266	&	$1332$	&	$114$	&	$37$	&	$	{68.7\%}$	&	$	{67.3\%}$	&	$\boldsymbol	{64.4\%}$	&	$	{68.6\%}$
\end{tabular}
\caption{Branch-and-bound results for the SND-Lib instances}
\label{table:bnb_sndlib}
\end{table}

\begin{figure}[ht]
\begin{subfigure}[t]{0.5\textwidth}
    \centering
    \caption{Performance Graph: solving time}
    \includegraphics[trim={1.3cm 0.5cm 1.5cm 1cm},clip, width=1\linewidth]{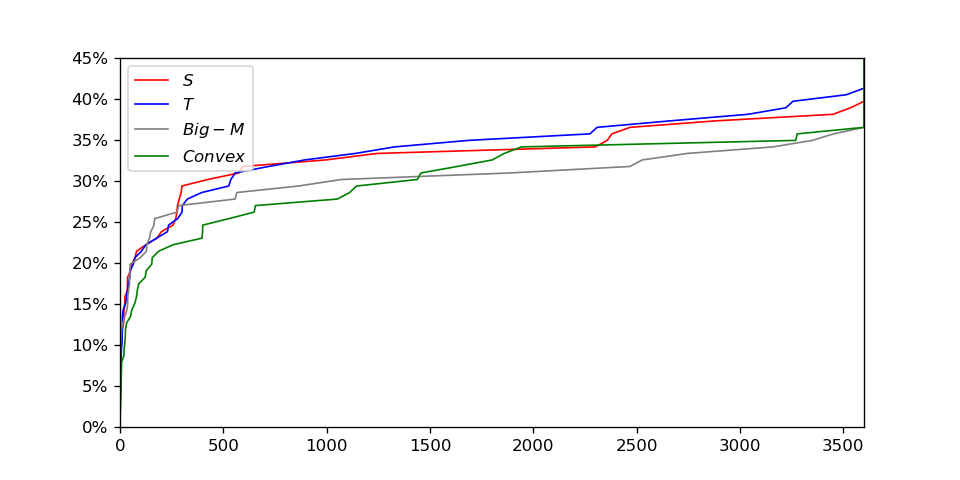}
    \label{fig:perf_dur_rnd}
\end{subfigure}
\hfill 
\begin{subfigure}[t]{0.5\textwidth}
    \centering
    \caption{Performance Graph: gap after 1h}
\includegraphics[trim={1.3cm 0.5cm 1.5cm 1cm},clip, width=1\linewidth]{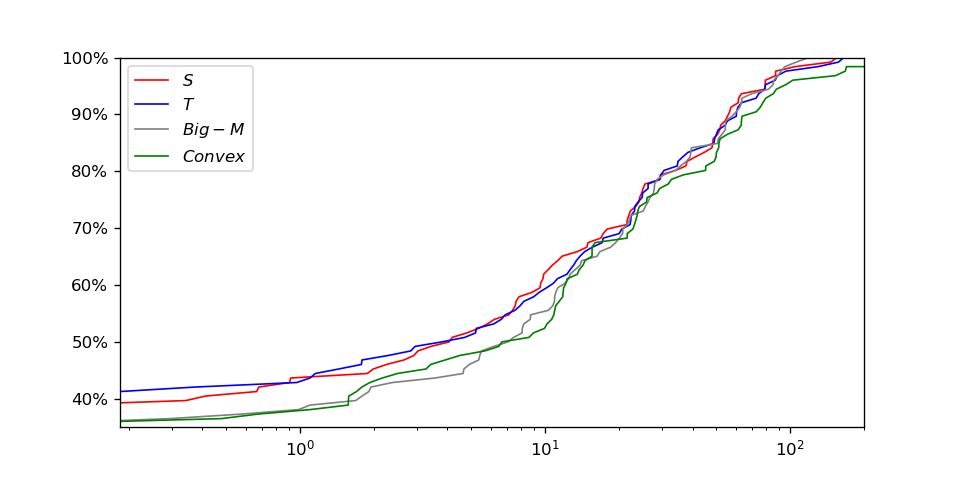}
    \label{fig:perf_gap_rnd}
\end{subfigure}
\caption{Performance graphs for the exact resolution of the DCMCF problem on the randomly generated instances. Left: proportion of instances solved in less than a solving time (for instances solved in less than $1$h); right: proportion of instances with a gap below a specified value after $1$h.}
\label{fig:perf_graphs_exact_rnd}
\end{figure}

\section{Conclusion}

The Delay-Constrained Multi-Commodity Flow (DCMCF) problem was known to be NP-hard. 
In this paper, we proved that the problem is strongly NP-hard, 
even in the case of three commodities that share the same source and destination, 
and where each commodity is allowed to use every possible path.\\
We also introduced a new relaxation of the conditional delay constraints and showed that it dominates the relaxations based on disjunctive programming. \\
Moreover, we described a non-trivial approximation algorithm based on the proposed relaxation and established a provable performance guarantee for it.\\
In the process of developing lower-bound heuristics, we proved that the use of certain convex over-estimators does not provide any advantage in the broader context of conditional constraints. In other words, employing such a convex over-estimator when dealing with a constraint of the form $x g(x,y) \leq 0$, with $x \geq 0$, is equivalent to either fixing $x$ to $0$ or imposing the constraint $g(x,y) \leq 0$.

While this study improves the global understanding of the delay-constrained multi-commodity flow problem, several questions still remain open. For instance, the exact complexity of the problem with a single commodity remains unknown; further improvements to the problem's relaxations could be investigated in order to develop tractable optimization methods for larger instances.
 Future work could also focus on developing approximation algorithms with improved performance guarantees or on refining the analysis of the algorithm presented in this paper.

\begin{appendices}    

\section{Proof of the DCMCF complexity}

\subsection{ Proof of Lemma \ref{lemma:complex:path_k2}}\label{proof:complex:path_k2}

\begin{proof}
The set of all possible paths from $s$ to $t$ is:
$$ \begin{aligned} &  \{s \rightarrow u_i \rightarrow v_j \rightarrow t\}_{i \in I, j \in J} \cup \{s \rightarrow u_i \rightarrow v_j \rightarrow v_j^1 \rightarrow \dots \rightarrow v_j^N \rightarrow t\}_{i \in I, j \in J} \\ &\cup \{s \rightarrow v_j \rightarrow v_j^1 \rightarrow \dots \rightarrow v_j^N \rightarrow t\}_{j \in J} \cup \{s \rightarrow u_i \rightarrow t\}_{i \in I} \cup \{s \rightarrow v_j \rightarrow t\}_{j \in J} \cup \{s \rightarrow v_j^N \rightarrow t\}_{j \in J}. \end{aligned} $$

We must prove that commodity $k_2$ cannot follow paths from the first three subsets of this union.

For $i \in I$ and $j \in J$, the minimum delay of path $s \rightarrow u_i \rightarrow v_j \rightarrow t$ (obtained if the load of all arcs is zero) is:
$$ d_{s \rightarrow u_i \rightarrow v_j \rightarrow t} = \frac{1}{m w_i + (m-1)W + 2} + \frac{1}{w_i + 2W} + \frac{1}{\frac{W}{m} + 2}. $$
Using the assumption that $w_i \leq \frac{W}{m}$ and $W = \frac{1}{4n^2}$, we can bound the delay of the middle arc $(u_i,v_j)$. We have:
$$ \frac{1}{w_i + 2W} \geq \frac{1}{\frac{W}{m} + 2W} = \frac{m}{W(1 + 2m)} = \frac{4 m n^2}{2m + 1}. $$
Because we assume $n \geq 6$ and $m \geq 2$, we have $n^2 \geq 36$ and $\frac{m}{2m+1} \geq \frac{2}{5}$. Therefore
$ \frac{1}{w_i + 2W} \geq 4 \times \frac{2}{5} \times 36 > 2$. 
Since the delay of this single arc strictly exceeds $2$, we have $d_{s \rightarrow u_i \rightarrow v_j \rightarrow t} > 2=b_2$, ruling out this path for commodity $k_2$.

\smallskip

Similarly, the minimum delay through a path $s \rightarrow u_i \rightarrow v_j \rightarrow v_j^1 \rightarrow \dots \rightarrow v_j^N \rightarrow t$ is strictly greater than $2$, since it contains the arc $(u_i,v_j)$, whose zero-load delay we just showed exceeds $2$. Notice that the minimum delay through the chain of arcs $v_j \rightarrow v_j^1 \rightarrow \dots \rightarrow v_j^N$ exceeds $N\frac{1}{W\frac{(n+1)(n-3)}{n}+1}=\frac{1}{W}\frac{1}{\frac{n^2-2n-3}{4n^3}+1} > \frac{1}{W}\frac{1}{2}$ and therefore the delay of a path $s \rightarrow u_i \rightarrow v_j \rightarrow v_j^1 \rightarrow \dots \rightarrow v_j^N \rightarrow t$ strictly exceeds $d_L=2+\frac{1}{2W}$, also preventing $k_2$ from using it.

\smallskip

Finally, for $j \in J$, paths $s \rightarrow v_j \rightarrow v_j^1 \rightarrow \dots \rightarrow v_j^N \rightarrow t$ have a minimum delay:
$$ d_{s \rightarrow v_j \rightarrow v_j^1 \rightarrow \dots \rightarrow v_j^N \rightarrow t} = \frac{1}{2} + (N-1) \frac{1}{W\frac{(n+1)(n-3)}{n} + 1} + \frac{1}{W\frac{(n+1)(n-3)}{n} + 2}. $$
Because $n \geq 6$ and $W = \frac{1}{4n^2}$, the capacity $W\frac{(n+1)(n-3)}{n} + 1$ is strictly less than $2$. As a result, the delay of each of the $N-1$ internal arcs strictly exceeds $\frac{1}{2}$. Since $N = 4n^2 \geq 144$, the total delay is vastly greater than $2$. Therefore, commodity $k_2$ cannot use this path.
\end{proof}

\subsection{Proof of Lemma \ref{lemma:complex:uses} }\label{proof:complex:uses}

\begin{proof}
From Lemma \ref{lemma:complex:path_k2}, the set of feasible paths for $k_2$ contains $n + 2m$ paths. Let us suppose that one of these paths is not used by $k_2$ (i.e., it carries a load of zero). Because the total size of $k_2$ is $b_2 = n + 2m$, the average load on the remaining paths must be strictly greater than $1 + \delta$, where $\delta = \frac{1}{n+2m} = \frac{3}{5n}$. Consequently, at least one path must carry a load strictly greater than $1 + \delta$. We show that the delay of such a path is strictly greater than $d_2 = 2$, which contradicts the feasibility of the flow.

\begin{itemize}
    \item \textbf{Case 1.} Assume that the load through a path $\{s \rightarrow u_i \rightarrow t\}_{i \in I}$ exceeds $1 + \delta$. The delay is:
    $$d_{s \rightarrow u_i \rightarrow t} \geq \frac{1}{c_{(s,u_i)} - (1+\delta)} + \frac{1}{c_{(u_i,t)} - (1+\delta)} = \frac{1}{m w_i + (m-1)W + 1 - \delta} + \frac{1}{1 - \delta}.$$
    Using $w_i \leq \frac{W}{m}$, we have $m w_i + (m-1)W \leq W + mW - W = mW=\frac{n/3}{(2n)^2}=\frac{1}{12n}$. Thus, because $\frac{1}{1-e}\geq 1+e$ for $e<1$:
    $$d_{s \rightarrow u_i \rightarrow t} \geq \frac{1}{1+\frac{1}{12n}-\frac{3}{5n}}+\frac{1}{1-\frac{3}{5n}}\geq (1 - \frac{1}{12n}+\frac{3}{5n}) + (1 + \frac{3}{5n}) =2+2\frac{3}{5n}-\frac{1}{12n}>2$$ ensuring that commodity $k_2$ cannot satisfy its delay if a load $1+\delta$ crosses path $s \rightarrow u_i \rightarrow t$.

    \item \textbf{Case 2.} Assume that the load through a path $\{s \rightarrow v_j \rightarrow t\}_{j \in J}$ exceeds $1 + \delta$. The delay is:
    $$d_{s \rightarrow v_j \rightarrow t} \geq \frac{1}{c_{(s,v_j)} - (1+\delta)} + \frac{1}{c_{(v_j,t)} - (1+\delta)} = \frac{1}{1 - \delta} + \frac{1}{\frac{W}{m} + 1 - \delta}.$$
    Using the same reasoning as above, we get: 
    $$d_{s \rightarrow v_j \rightarrow t} \geq \frac{1}{1-\frac{3}{5n}} + \frac{1}{1+\frac{3}{4n^3}-\frac{3}{5n}} \geq (1+\frac{3}{5n})+(1-\frac{3}{4n^3}+\frac{3}{5n})=1+2\frac{3}{5n}-\frac{3}{4n^3}>2$$ ensuring that commodity $k_2$ cannot satisfy its delay if a load $1+\delta$ crosses path $s \rightarrow v_j \rightarrow t$.

    \item \textbf{Case 3.} Assume that the load through a path $\{s \rightarrow v_j^N \rightarrow t\}_{j \in J}$ exceeds $1 + \delta$. The delay is:
    $$d_{s \rightarrow v_j^N \rightarrow t} \geq \frac{1}{c_{(s,v_j^N)} - (1+\delta)} + \frac{1}{c_{(v_j^N,t)} - (1+\delta)} = \frac{1}{1 - \delta} + \frac{1}{W \frac{(n+1)(n-3)}{n} + 1 - \delta}.$$
    Following the same reasoning as the other cases, we get:
    \begin{align*}
    d_{s \rightarrow v_j^N \rightarrow t} &\geq \frac{1}{1-\frac{3}{5n}}+\frac{1}{1+\frac{(n+1)(n-3)}{4n^3}-\frac{3}{5n}}\\&
    \geq (1+\frac{3}{5n})+(1-\frac{(n+1)(n-3)}{4n^3}+\frac{3}{5n})=2+2\frac{3}{5n}-\frac{(n+1)(n-3)}{4n^3}\\
    &>2.\end{align*}
\end{itemize}

Therefore, commodity $k_2$ cannot route more than $1 + \delta$ on any of these paths. Since the total load is $n+2m$, it must use each path in $\{s \rightarrow u_i \rightarrow t\}_{i \in I} \cup \{s \rightarrow v_j \rightarrow t\}_{j \in J} \cup \{s \rightarrow v_j^N \rightarrow t\}_{j \in J}$.
\end{proof}

\subsection{Proof of Lemma \ref{lemma:complex:ineq_ysu}} \label{proof:complex:ineq_ysu}

\begin{proof}
Let us consider the delay of arc $(u_i,t)$. Using the inequality $\frac{1}{1-e} \geq 1+e$ for $e < 1$ (with a strict inequality if $e \neq 0$ and equality if $e=0$), we get:
$$
d_{(u_i,t)} = \frac{1}{c_{(u_i,t)}-y_{(u_i,t)}}=\frac{1}{2-(1+\epsilon_i)}=\frac{1}{1-\epsilon_i}
\qquad \implies 
\begin{aligned}
&d_{(u_i,t)} > 1+\epsilon_i \quad && \text{if}\quad \epsilon_i \neq 0 \quad (\text{i.e., if } i \in Q_I)\\
&d_{(u_i,t)} = 1 \quad && \text{if}\quad \epsilon_i = 0 \quad (\text{i.e., if } i \in I \setminus Q_I).\\
\end{aligned}
$$

Using the same inequality on the delay of the arc $(s,u_i)$ we get:
\begin{align*}
d_{(s,u_i)} &= \frac{1}{c_{(s,u_i)}-y_{(s,u_i)}}=\frac{1}{mw_i+(m-1)W+2-y_{(s,u_i)}} \geq 1 + y_{(s,u_i)} - (mw_i+(m-1)W+1).
\end{align*}

Because commodity $k_2$ must use the path $s \rightarrow u_i \rightarrow t$, the total delay of this path is less than or equal to $d_2=2$. Therefore:
\begin{itemize}
    \item if $i \in Q_I$:
    $ d_2=2 \geq d_{(s,u_i)} + d_{(u_i,t)} > 1 + \epsilon_i + 1 + y_{(s,u_i)} - (m w_i + (m-1)W + 1)$
    which implies that: $$y_{(s,u_i)} < m w_i + (m-1)W + 1 - \epsilon_i \qquad \forall i \in Q_I.$$
    \item if $i \in I \setminus Q_I$, we have $d_{(u_i,t)}=1$ and therefore $d_{(s,u_i)} \leq 1$, which leads to: $$y_{(s,u_i)} \leq m w_i + (m-1)W + 1 \qquad \forall i \in I \setminus Q_I.$$
\end{itemize}
\end{proof}

\subsection{Proof of Lemma \ref{lemma:complex:ineq_yvt}} \label{proof:complex:ineq_yvt}

\begin{proof}
The reasoning is identical to the proof of Lemma \ref{lemma:complex:ineq_ysu}. We apply the inequality $\frac{1}{1-e} \geq 1+e$ (valid for $e < 1$, with a strict inequality when $e \neq 0$ and equality for $e=0$) to the delays of the arcs forming the 2-hop paths.

Consider the path $s \rightarrow v_j \rightarrow t$. The delay of the first arc is $d_{(s,v_j)} = \frac{1}{c_{(s,v_j)}-y_{(s,v_j)}}=\frac{1}{2 - (1+\epsilon_j)}$. This gives $d_{(s,v_j)} > 1 + \epsilon_j$ if $j \in Q_J$, and $d_{(s,v_j)} = 1$ otherwise. For the second arc, applying the same inequality gives:
$$d_{(v_j,t)} = \frac{1}{c_{(v_j,t)}-y_{(v_j,t)}}=\frac{1}{\frac{W}{m} + 2 - y_{(v_j,t)}} \geq 1 + y_{(v_j,t)} - \left(\frac{W}{m}+1\right).$$
Because commodity $k_2$ must respect the maximum delay constraint $d_2 = 2 \geq  d_{(s,v_j)} + d_{(v_j,t)}$ on this path, we get: 
\begin{itemize}
\item For $j \in Q_J$, the strict inequality on the first arc implies $2 > 1 + \epsilon_j + 1 + y_{(v_j,t)} - \left(\frac{W}{m}+1\right)$, which leads to $y_{(v_j,t)} < \frac{W}{m} + 1 - \epsilon_j$. 

\item For $j \in J \setminus Q_J$, the equality on the first arc implies $2 \geq 1 + 1 + y_{(v_j,t)} - \left(\frac{W}{m}+1\right)$, leading to $y_{(v_j,t)} \leq \frac{W}{m} + 1$.
\end{itemize}
\medskip

The same reasoning applies to the path $s \rightarrow v_j^N \rightarrow t$. The delay of the first arc $d_{(s,v_j^N)} = \frac{1}{c_{(s,v_j^N)}-y_{(s,v_j^N)}}=\frac{1}{2 - (1+\epsilon'_j)}$ is strictly greater than $1 + \epsilon'_j$ for $j \in Q'_J$ and equals $1$ if $j \in J \setminus Q'_J$. The delay of the second arc satisfies:
$$d_{(v_j^N,t)} = \frac{1}{c_{(v_j^N,t)}-y_{(v_j^N,t)}}= \frac{1}{W\frac{(n+1)(n-3)}{n} + 2 - y_{(v_j^N,t)}} \geq 1 + y_{(v_j^N,t)} - \left(W\frac{(n+1)(n-3)}{n}+1\right).$$
Enforcing the maximum delay of $2$ across the entire path guarantees that $y_{(v_j^N,t)} < W\frac{(n+1)(n-3)}{n} + 1 - \epsilon'_j$ for any $j \in Q'_J$, and less than or equal to $W\frac{(n+1)(n-3)}{n} + 1$ otherwise.
\end{proof}

\subsection{Proof of Lemma \ref{lemma:complex:load_k2}} \label{proof:complex:load_k2}

\begin{proof}
We will consider the sum of the loads of all arcs exiting $s$ and entering $t$ to show that successfully routing a throughput of $\gamma = 1$ requires $Q_I \cup Q_J \cup Q_J' = \emptyset$. 

From Lemmas \ref{lemma:complex:ineq_ysu} and \ref{lemma:complex:ineq_yvt}, and the definitions of the $\epsilon$ variables, we can bound the load on each arc of the 2-hop paths:
\begin{itemize}
    \item For a given $i \in I$, the upper bounds on the arc loads are $y_{(s,u_i)} \leq m w_i + (m-1)W + 1 - \epsilon_i$ and $y_{(u_i,t)} \leq 1 + \epsilon_i$. The first bound is strict if $i \in Q_I$. Summing over $i \in I$ gives:
    $$\sum_{i \in I} (y_{(s,u_i)} + y_{(u_i,t)}) \leq mW + n(m-1)W + 2n = b_L + b_H + 2n$$
    with this inequality being strict if $Q_I \neq \emptyset$.

    \item For a given $j \in J$, the upper bounds on the arc loads are $y_{(s,v_j)} \leq 1 + \epsilon_j$ and $y_{(v_j,t)} \leq \frac{W}{m} + 1 - \epsilon_j$. The second bound is strict if $j \in Q_J$. Summing over $j \in J$ gives:
    $$\sum_{j \in J} (y_{(s,v_j)} + y_{(v_j,t)}) \leq W + 2m = b_L + 2m$$
    with this inequality being strict if $Q_J \neq \emptyset$.

    \item For a given $j \in J$, the upper bounds on the arc loads are $y_{(s,v_j^N)} \leq 1 + \epsilon'_j$ and $y_{(v_j^N,t)} \leq W\frac{(n+1)(n-3)}{n} + 1 - \epsilon'_j$. The second bound is strict if $j \in Q'_J$. Summing over $j \in J$ gives:
    $$\sum_{j \in J} (y_{(s,v_j^N)} + y_{(v_j^N,t)}) \leq m W\frac{(n+1)(n-3)}{n} + 2m = b_H + 2m$$
    with this inequality being strict if $Q'_J \neq \emptyset$.
\end{itemize}

Because there are no arcs exiting $s$ or entering $t$ other than those analyzed above, we can sum them to obtain the global bound on the network's entry and exit flows:
$$ \sum_{a \in \delta^+(s)} y_a + \sum_{a \in \delta^-(t)} y_a \leq 2(b_L + b_H + b_2) $$
with a strict inequality if $Q_I \cup Q_J \cup Q_J' \neq \emptyset$.

To successfully route the complete demands for all three commodities, the total flow leaving $s$ and entering $t$ must be exactly $2(b_L + b_H + b_2)$. Therefore, we must have $Q_I \cup Q_J \cup Q_J' = \emptyset$, meaning all $\epsilon_i$, $\epsilon_j$, and $\epsilon'_j$ terms are exactly zero.

It also follows that all the upper bounds on the arc loads hold as strict equalities. We can finally deduce that $y_{(s,u_i)} = m w_i + (m-1)W + 1$, $y_{(v_j,t)} = \frac{W}{m}+1$, $y_{(v_j^N,t)} = W\frac{(n+1)(n-3)}{n}+1$, $y_{(u_i,t)} = 1$, $y_{(s,v_j)} = 1$, and $y_{(s,v_j^N)} = 1$. We also conclude that the delay on each of these arcs is exactly equal to 1.
\end{proof}

\subsection{Proof of Lemma \ref{lemma:complex:path_kL}} \label{proof:complex:path_kL}

\begin{proof}
We have observed that $k_L$ cannot cross arcs in $\{(u_i,t)\}_{i \in I}$, $\{(s,v_j)\}_{j \in J}$, or $\{(s,v_j^N)\}_{j \in J}$. 
Therefore, it cannot use paths in $\{s \rightarrow u_i \rightarrow t\}_{i \in I} \cup \{s \rightarrow v_j \rightarrow t\}_{j \in J} \cup \{s \rightarrow v_j^N \rightarrow t\}_{j \in J} \cup \{s \rightarrow  v_j \rightarrow v_j^1 \rightarrow \dots \rightarrow v_j^N \rightarrow t\}_{j \in J}$.
Moreover, as established in the proof of Lemma \ref{lemma:complex:path_k2}, the delay on paths involving the arc chains $v_j \rightarrow v_j^1 \rightarrow \dots \rightarrow v_j^N$ is strictly greater than $d_L = 2 + \frac{1}{2W}$, which exceeds the delay limit for $k_L$. 
    
The only remaining feasible paths for $k_L$ are therefore those in $\{s \rightarrow u_i \rightarrow v_j \rightarrow t\}_{i \in I, j \in J}$.
\end{proof}

\subsection{Proof of Lemma \ref{lemma:complex:load_kL}} \label{proof:complex:load_kL}

\begin{proof}
From Lemma \ref{lemma:complex:path_kL}, commodity $k_L$ enters vertex $t$ exclusively through the set of arcs $\{(v_j,t)\}_{j \in J}$.
Equation \eqref{eq:complex:load_vjt} establishes that the total load on each of these arcs is  $\frac{W}{m} + 1$. Because commodity $k_2$ routes exactly $1$ unit of flow on each arc $(v_j,t)$, the remaining flow on these arcs must belong to commodity $k_L$. Therefore, the load of $k_L$ on each arc $(v_j,t)$ is exactly $\frac{W}{m}$. Summing over all $m$ arcs confirms that the total volume routed matches the demand $b_L = W$.
\end{proof}

\subsection{Proof of Lemma \ref{lemma:complex:load_kH}} \label{proof:complex:load_kH}

\begin{proof}
Equation \eqref{eq:complex:load_vjNt} establishes that the total load on each arc $(v_j^N, t)$ is  $W\frac{(n+1)(n-3)}{n} + 1$. Knowing that $1$ unit of flow is already routed by commodity $k_2$ on each arc $(v_j^N, t)$, the remaining flow on these arcs must belong to commodity $k_H$. Thus, the load of $k_H$ on each arc $(v_j^N, t)$ is $W\frac{(n+1)(n-3)}{n}$. 

By the conservation of flow, because commodity $k_H$ only enters $t$ through these arcs and must cross the entire arc chain to reach them, this result extends to all arcs in the chain: the entry arcs $\{(v_j, v_j^1)\}_{j \in J}$, the internal arcs $\{(v_j^l, v_j^{l+1})\}_{j \in J, l \in [1, N-1]}$, and the exit arcs $\{(v_j^N, t)\}_{j \in J}$. Finally, substituting these loads back into the delay functions confirms that the delay on each of these arcs is exactly $1$.
\end{proof}

\subsection{Proof of Lemma \ref{lemma:complex:single_arc_kL}} \label{proof:complex:single_arc_kL}

\begin{proof}
From Lemma \ref{lemma:complex:path_k2}, commodity $k_2$ never crosses the arcs $(u_i, v_j)$. These arcs are therefore exclusively used by commodities $k_L$ and $k_H$. From Lemma \ref{lemma:complex:load_k2}, we know that the delays of the entry and exit arcs are $d_{(s,u_i)} = 1$ and $d_{(v_j,t)} = 1$.

Lemma \ref{lemma:complex:load_kH} establishes that the total delay of the sub-path $v_j \rightarrow v_j^1 \rightarrow \dots \rightarrow v_j^N \rightarrow t$ used by commodity $k_H$ is $N+1$. 
Because the maximum delay limit for commodity $k_H$ is $d_H = N + 2 + \frac{1}{W}$, any arc $(u_i,v_j)$ traversed by $k_H$ must satisfy:
$$ d_{(s,u_i)} + d_{(u_i,v_j)} + (N+1) \leq d_H \implies 1 + d_{(u_i,v_j)} + N + 1 \leq N + 2 + \frac{1}{W} \implies d_{(u_i,v_j)} \leq \frac{1}{W}.$$
Since $d_{(u_i,v_j)} = \frac{1}{c_{(u_i,v_j)} -y_{(u_i,v_j)} }  = \frac{1}{w_i + 2W - y_{(u_i,v_j)}}$, this delay constraint imposes the upper bound on the load $y_{(u_i,v_j)} \leq w_i + W$ for any arc $(u_i,v_j)$ traversed by $k_H$. 

Similarly, the maximum delay limit for commodity $k_L$ is $d_L = 2 + \frac{1}{2W}$. Any arc $(u_i,v_j)$ traversed by $k_L$ must satisfy:
$$ d_{(s,u_i)} + d_{(u_i,v_j)} + d_{(v_j,t)} \leq d_L \implies 1 + d_{(u_i,v_j)} + 1 \leq 2 + \frac{1}{2W} \implies d_{(u_i,v_j)} \leq \frac{1}{2W}.$$
This stricter delay constraint imposes the tighter upper bound $y_{(u_i,v_j)} \leq w_i$ for any arc $(u_i,v_j)$ traversed by $k_L$.

\medskip

By flow conservation, the total flow exiting node $u_i$ equals the total flow entering that node for any $i \in I$. Lemma \ref{lemma:complex:load_k2} shows that the load entering $u_i$ is $y_{(s, u_i)} = m w_i + (m-1)W + 1$, and the load exiting on arc $(u_i,t)$ is $y_{(u_i,t)} = 1$. Therefore, the total load distributed across the $m$ outgoing arcs $\{(u_i,v_j)\}_{j \in J}$ is: $$\sum_{j \in J}y_{(u_i,v_j)}=y_{(s, u_i)}-1=m w_i + (m-1)W$$

We deduce that at least $m-1$ arcs must carry a load strictly greater than $w_i$. If this were not the case, at least two arcs would carry a load of at most $w_i$, and the remaining $m-2$ arcs would carry a maximum of $w_i + W$. The total flow would then be bounded by $2 w_i + (m-2)(w_i + W) = m w_i + (m-2)W$, which is strictly less than the flow $m w_i + (m-1)W$ exiting node $u_i$. 

Because commodity $k_L$ requires $y_{(u_i,v_j)} \leq w_i$, it can cross at most one arc $(u_i,v_j)$ for any given $i \in I$, meaning it can route a maximum volume of $w_i$ from that object node. Given that the total demand for commodity $k_L$ is $b_L = W = \sum_{i \in I} w_i$, it must maximize its allowed flow across all object nodes $\{u_i\}_{i \in I}$. Thus, $k_L$ must use exactly one arc $(u_i,v_j)$ per object $i \in I$ and route exactly $w_i$ units of flow across it. Consequently, the remaining $m-1$ arcs exiting each $u_i$ must carry exactly $w_i + W$ of commodity $k_H$ and exhibit a delay of exactly $\frac{1}{W}$.
\end{proof}

\subsection{Proof of Lemma \ref{lemma:complex:three_arcs_kL}} \label{proof:complex:three_arcs_kL}

\begin{proof}
Using the loads $y_{(s,v_j)} = 1$, $y_{(v_j,t)} = \frac{W}{m} + 1$, and $y_{(v_j,v_j^1)} = W\frac{(n+1)(n-3)}{n}$ for any $j \in J$ established in Lemmas \ref{lemma:complex:load_k2} and \ref{lemma:complex:load_kH} we have, by flow conservation at node $v_j$:
\begin{align*}
y_{(s,v_j)} + \sum_{i \in I} y_{(u_i,v_j)} &= y_{(v_j,t)} + y_{(v_j,v_j^1)}\\
1+\sum_{i \in I} y_{(u_i,v_j)} &= \left(\frac{W}{m} + 1\right) + W\frac{(n+1)(n-3)}{n} \\
\sum_{i \in I} y_{(u_i,v_j)} &= \frac{W}{m} + W\frac{(n+1)(n-3)}{n} = W \left( \frac{3}{n} + \frac{n^2 - 2n - 3}{n} \right) = W \left( \frac{n^2 - 2n}{n} \right)= W(n-2) \\
\sum_{i \in I} y_{(u_i,v_j)} 
&=  \sum_{i \in I} w_i + W(n-3).
\end{align*}

The only commodities crossing arcs $(u_i,v_j)$ are $k_L$ and $k_H$. Let $l$ be the number of arcs entering $v_j$ that carry commodity $k_L$; from Lemma \ref{lemma:complex:single_arc_kL},  commodity $k_H$ is carried by the remaining $n-l$ arcs. 
For every $i \in I$, the load of arc $(u_i,v_j)$ is exactly $w_i$ if it is crossed by commodity $k_L$, and exactly $w_i+W$ if it is crossed by commodity $k_H$. 

The total load on the $n$ arcs $(u_i,v_j)$ can be expressed as:
$$\sum_{i \in I} y_{(u_i,v_j)} = \sum_{\text{arcs carrying } k_L} w_i + \sum_{\text{arcs carrying } k_H} (w_i + W) = \sum_{i \in I} w_i + (n-l)W.$$

Hence, we have
$\sum_{i \in I} w_i + W(n-3) = \sum_{i \in I} w_i + W(n-l)$,
which implies $l=3$. Therefore, commodity $k_L$ is carried by exactly $3$ arcs in $\{(u_i,v_j)\}_{i \in I}$ for every $j \in J$.
\end{proof}

\subsection{Proof of Proposition \ref{prop:complex:flow_implies_part}} \label{proof:complex:flow_implies_part}

\begin{proof}
Lemma \ref{lemma:complex:single_arc_kL} establishes that for every $i \in I$, commodity $k_L$ is routed through exactly one arc $(u_i,v_j)$. For each $j \in J$, let $T_j \subset I$ denote the subset of objects $i$ for which commodity $k_L$ uses the arc $(u_i,v_j)$. Lemma \ref{lemma:complex:three_arcs_kL} proves that each subset $T_j$ contains exactly $3$ elements. Because every object $i \in I$ is assigned to exactly one subset, the collection $\{T_j\}_{j \in J}$ forms a partition of $I$ into triplets.

By flow conservation, the volume of commodity $k_L$ entering node $v_j$ must equal the volume exiting it. From Lemma \ref{lemma:complex:single_arc_kL}, the load of $k_L$ entering $v_j$ from the object nodes is $\sum_{i \in T_j} w_i$. From Lemma \ref{lemma:complex:load_kL}, the load of $k_L$ exiting $v_j$ on arc $(v_j, t)$ is $\frac{W}{m}$. Therefore, we have:
$ \sum_{i \in T_j} w_i = \frac{W}{m} \qquad \forall j \in J. $
We can conclude that the partition $\{T_j\}_{j \in J}$ is a valid solution for the \textsc{3-Partition} problem defined by the multiset $\{w_i\}_{i \in I}$.
\end{proof}

\section{Proof of the perfomance guarantee of $\mathcal{H}_{{threshold}}$}\label{sec:proof:perf_guaranttee}

\subsection{Proof of Proposition \ref{prop:approx:def_lambda}} \label{proof:approx:def_lambda}

\begin{proof}
Since $\check{f}_{\mathcal{X}_{\epsilon, \theta, \zeta, \eta}} (x,y) \geq s_{\epsilon, \theta}(x,y)$,  if $f(\lambda x, \lambda y) \leq \lambda s_{\epsilon, \theta}(x,y)$ holds, then  $f(\lambda x, \lambda y) \leq \lambda \check{f}_{\mathcal{X}_{\epsilon, \theta, \zeta, \eta}}(x,y)$ is also valid.

If $\lambda = 0$, the inequality in the proposition holds trivially. Let us then assume that  $\lambda > 0$. Inequality $f(\lambda x, \lambda y) \leq \lambda s_{\epsilon, \theta}(x,y)$ is then equivalent to: 
\begin{small_equation}
    \lambda \leq \frac{1-\frac{x}{s_{\epsilon,\theta}(x,y)}}{y}=\frac{1-\frac{x}{\frac{x}{1-\theta-\frac{x}{1-\frac{y-x-\theta}{1-\epsilon-\theta}}}}}{y}=\frac{(\theta+x)(1-\epsilon)-\theta y}{y(1-\epsilon+x)-y^2} = q_{\epsilon,\theta}(x,y). \label{eq:quality_q}
\end{small_equation}
For a given $y \in ]\eta,1-\epsilon]$, the function $q_{\epsilon,\theta}$ defined above is increasing in $x \in [0, \min(y-\theta,1-\zeta)]$ since its derivative is given by $\frac{(1 -\epsilon -y)(1-\epsilon - \theta)}{y(1 - \epsilon + x -y)^2}$. Similarly, $q_{\epsilon,\theta}$ is  increasing with respect to $\theta$ for any $\theta$ satisfying \eqref{eq:condi}.
Therefore, given an $x_0 \in [0, 1-\zeta]$, $f(\lambda x, \lambda y) \leq \lambda \check{f}(x,y)$ holds for any $x\geq x_0$ and any $\lambda \leq q_{\epsilon,0}(x_0, y)$.

Moreover, $q_{\epsilon,0}(x_0, y) = \frac{x_0 (1-\epsilon)}{y(1-\epsilon+x_0)-y^2}$ attains its lower bound when $y^*=\frac{1-\epsilon+x_0}{2}$, with a value of $\frac{4x_0(1-\epsilon)}{(1-\epsilon+x_0)^2}$. 
Hence $\lambda_{\epsilon, x_0}=q_{\epsilon,0}(x_0, y^*)=\frac{4x_0(1-\epsilon)}{(1-\epsilon+x_0)^2}$ satisfies \eqref{eq:quality_q}. 
\end{proof}

\subsection{Proof of Lemma \ref{lemma:approx:lambda_post}} \label{proof:approx:lambda_post}

\begin{proof}
First, observe that $\lambda_{1-\frac{u_a}{c_a}, \frac{b^k  \gamma}{c_a S |P^k|}} = 4 \frac{\frac{b^k \gamma}{c_a S |P^k|}\frac{u_a}{c_a}}{(\frac{u_a}{c_a}+\frac{b^k  \gamma}{c_a S |P^k|})^2}=\frac{4b^k \gamma~ u_a~S|P^k|}{(u_aS|P^k|+b^k \gamma)^2}$.
The derivative of this expression with respect to $b^k$, $\frac{\partial }{\partial b^k}\left(\lambda_{1-\frac{u_a}{c_a}, \frac{b^k  \gamma}{c_a S |P^k|}}\right) = \frac{4\gamma u_a S |P^k| (u_a S |P^k| - b^k \gamma)}{(u_a S |P^k| + b^k \gamma)^3}$, is positive for $b^k \leq \frac{u_a S |P^k|}{\gamma}$. Therefore, if $u_a \geq \frac{b^k \gamma}{S|P^k|}$,  $\lambda_{1-\frac{u_a}{c_a}, \frac{b^k  \gamma}{c_a S |P^k|}}$ increases with $b^k$ and $\lambda_{1-\frac{u_a}{c_a}, \frac{b^k  \gamma}{c_a S |P^k|}} \geq \frac{4b \gamma~ u_a~S|P^k|}{(u S|P^k|+b \gamma)^2}$ with $b = \min_{k \in K} b^k$. 

Similarly, the derivative of $\frac{4b \gamma~ u_a~S|P^k|}{(u_aS|P^k|+b \gamma)^2}$ with respect to $|P^k|$ is negative if $|P^k| \geq \frac{b\gamma}{S u_a}$. Notice that, under the assumption $u_a \ge \frac{b^k \gamma}{S |P^k|}$, it follows that $u_a \ge \frac{b \gamma}{S |P^k|}$.
 Therefore,   $\frac{4b \gamma~ u_a~S|P^k|}{(u_aS|P^k|+b \gamma)^2}$  decreases with $|P^k|$ and $\frac{4b \gamma~ u_a~S|P^k|}{(u_aS|P^k|+b \gamma)^2} \geq \frac{4b \gamma~ u_a~S|P|}{(u_aS|P|+b \gamma)^2}$ with  $|P|=\max_{k \in K} |P^k|$. 
 
Finally, the derivative of $\frac{4b \gamma~ u_a~S|P|}{(u_aS|P|+b \gamma)^2}$  with respect to $u_a$ is negative if $u_a \geq \frac{b\gamma}{S |P|}$. Hence, if $u_a \geq \frac{b^k \gamma}{ S |P^k|}$, $\frac{4b \gamma~ u_a~S|P|}{(u_aS|P|+b \gamma)^2}$ decreases with $u_a$ and $\frac{4b \gamma~ u_a~S|P|}{(u_aS|P|+b \gamma)^2} \geq \frac{4b \gamma~ u~S|P|}{(uS|P|+b \gamma)^2}$ with $u=\max_{a \in A} u_a$, and we get
$   \lambda_{1-\frac{u_a}{c_a}, \frac{b^k  \gamma}{c_a S |P^k|}} \geq \frac{4ub  \gamma S|P|}{(uS|P|+b  \gamma)^2 }$.
\end{proof}

\subsection{Proof of Proposition \ref{prop:approx:h_pos_guarantee_posterior}} \label{proof:approx:h_pos_guarantee_posterior}

\begin{proof}
Let us consider a scalar $S>1$ and a solution of $\mbox{DCMCF}_\text{env}$ ; variable values of this solution are denoted by a tilde ( $\tilde{\cdot}$ ). 
We define, for every commodity $k \in K$, the set of paths  $Q^k_S = \{p \in P^k: \tilde{x}^k_p \geq \frac{\tilde{\gamma}}{S|P^k|} \}$.

Let us construct a feasible solution ($\bar{x}^k_p, \bar{y}_a, \bar{\gamma})$ of DCMCF  as follows:   
\begin{small_align*}
    \bar{x}^k_p &=
    \begin{cases}
        \lambda^{post}_{\tilde{\gamma}, S} \tilde{x}^k_p & \text{if} \quad p \in Q^k_S\\
        0 & \text{if} \quad p \in P^k \setminus Q_S^k
    \end{cases} && \forall k \in K, p \in P^k \\
    \bar{y}_a &= \sum_{k \in K, p \ni a} b^k \bar{x}^k_p && \forall a \in A \\
    \bar{\gamma} &= \min_{k \in K} \sum_{p \in P^k} \bar{x}^k_p.
\end{small_align*}
If $p \in Q^k_S$, then $\bar x^k_p = {\lambda^{post}_{\tilde \gamma, S}} \tilde x^k_p$, and  $\bar y_a \leq {\lambda^{post}_{\tilde \gamma, S}} \tilde y_a$ for $a \in A$. 
Observe also that, for such a path $p$,  $\tilde{x}^k_p \geq \frac{\tilde{\gamma}}{S|P^k|}$ implies that  $u_a \geq \tilde{y}_a \geq b^k \tilde{x}^k_p \geq \frac{b^k \tilde{\gamma}}{S |P^k|}$ for any $a \in p$. One can then 
apply  Lemma \ref{lemma:approx:lambda_post} and Proposition \ref{prop:approx:def_lambda}  for every $k \in K$ and $p \in Q^k_S$:
\begin{small_equation*}
\sum_{a \in p} \frac{b^k \bar x^k_p}{c_a -  \bar y_a} \leq \sum_{a \in p}\frac{b^k {\lambda^{post}_{\tilde \gamma, S}} \tilde x^k_p}{c_a - {\lambda^{post}_{\tilde \gamma, S}} \tilde y_a} \leq {\lambda^{post}_{\tilde \gamma, S}} ~ \sum_{a \in p}\check f_{\mathcal{X}_{\frac{c_a-u_a}{c_a}, \frac{{l_a}^k_p}{c_a}, \frac{c_a - b^k u^k_p}{c_a}, \frac{l_a}{c_a}}}(\frac{b^k \tilde x^k_p}{c_a},\frac{\tilde y_a}{c_a})\leq {\lambda^{post}_{\tilde \gamma, S}} b^k d^k \tilde x^k_p = b^k d^k \bar{x}^k_p
\end{small_equation*}
proving that constraints \eqref{eq:dcmf:delay_time_xkp} are satisfied for all $k \in K$ and $p \in Q^k_S$. 

If $p \in P^k \setminus Q^k_S$ then $\bar x^k_p = 0$ and the constraint is trivially satisfied. 

It is straightforward to check that all other constraints of the DCMCF problem are satisfied, so the constructed solution  is  feasible. 

Since $S>1$, $Q^k_S$ contains at least one path.  Consequently,  the following lower bound holds for any $k \in K$:  
$
\sum_{p \in Q^k_S} \tilde{x}^k_p \geq \frac{S|P^k| - (|P^k|-1)}{S |P^k|} \tilde \gamma$, implying that    
\begin{small_align*}
    \bar \gamma &= \min_{k \in K} \sum_{p \in P^k} \bar x^k_p = \min_{k \in K} \sum_{p \in Q^k_S} {\lambda^{post}_{\tilde \gamma, S}}\tilde x^k_p \geq  \min_{k \in K} {\lambda^{post}_{\tilde \gamma, S}} \frac{S|P^k| - (|P^k|-1)}{S |P^k|}\tilde \gamma
\end{small_align*}
Observe that $\frac{\partial }{\partial |P^k|}\left(\frac{S|P^k| - (|P^k|-1)}{S |P^k|}\right) \leq 0$ on $\mathbb{R}^+$ ; therefore $\frac{S|P^k| - (|P^k|-1)}{S |P^k|} \geq \frac{S|P| - (|P|-1)}{S |P|}$ and $\bar \gamma \geq {\lambda^{post}_{\tilde \gamma, S}} \frac{S|P| - (|P|-1)}{S |P|}\tilde \gamma=\frac{4b\tilde \gamma~ u~S|P|}{(uS|P|+b\tilde \gamma)^2}\frac{S|P| - (|P|-1)}{S |P|}\tilde \gamma $.

Finally, let  $S^*=2-\frac{2}{|P|}+\frac{b \tilde \gamma}{u|P|}> 1$ be  the value of $S$  maximizing the lower bound of $\bar \gamma$. One can then deduce that:

\begin{small_align*}
\bar \gamma &\geq \frac{4b\tilde \gamma~ u~S^*|P|}{(uS^*|P|+b\tilde \gamma)^2} \frac{S^*|P| - (|P|-1)}{S^* |P|}\tilde \gamma= \frac{4b\tilde \gamma~ u (S^*|P| - (|P|-1))}{(uS^*|P|+b\tilde \gamma)^2} \tilde \gamma =   \frac{4b\tilde \gamma~ u \left((2-\frac{2}{|P|}+\frac{b \tilde\gamma}{u|P|})|P| - (|P|-1)\right)}{\left(u (2-\frac{2}{|P|}+\frac{b \tilde\gamma}{u|P|})|P|+b\tilde \gamma\right)^2} \tilde \gamma  = \frac{b \tilde \gamma} {u (|P|-1)+b\tilde \gamma}  \tilde \gamma.
\end{small_align*}
This shows that, given the solution of the $\mbox{DCMCF}_\text{env}$ problem with objective value $\tilde \gamma$, there exists a solution of  DCMCF  with objective value $\bar \gamma \geq \frac{b \tilde \gamma} {u (|P|-1)+b \tilde\gamma} \tilde \gamma$. 
\end{proof}

\subsection{Proof of Lemma \ref{lemma:approx:alpha}}\label{proof:approx:alpha}

\begin{proof}
Let us consider an instance of  DCMCF and a scalar $\alpha > 1$. We denote by $u_a^0$ the upper bound of variable $y_a$ for $a \in A$. Such bounds can be derived analytically  as described in Section \ref{sub:param}. Upper bounds of variables $x_p^k$ can similarly be computed for every $k \in K$ and $p \in P^k$, and lower bounds $l_a$ and ${l_a}_p^k$ can be set to zero.

We can now iteratively generate a sequence of relaxations $\{{DCMCF}_{env}^{~n}\}_{n \in \mathbb{N}}$ as follows. 
Given the relaxation ${DCMCF}_{env}^{~n}$, with upper bounds $u_a^{n}$  for each arc $a \in A$ and optimal objective value $\gamma_n^*$, either one of the two following conditions is met:
\begin{enumerate}
    \item \emph{Direct Success:} The relaxation immediately satisfies the desired property: for every arc $a \in A$, the optimal objective value of  ${DCMCF}_{env}^{~n}$ and the upper bound of every arc $a$ satisfy $u_a^{n} \leq \alpha  (\sum_{k \in K} b^k) \gamma^*_n$.
    \item \emph{Bounds Improvement:} The relaxation does not satisfy the desired property, but bounds can be improved: a new problem ${DCMCF}_{env}^{~n+1}$ with, for every arc $a \in A$, a new upper bound $u_a^{n+1}=\min \left( u_a^{n},  (\sum_{k \in K} b^k) \gamma^*_n \right)$ is created and solved. 
\end{enumerate}
The new problem ${DCMCF}_{env}^{~n+1}$ constructed in the latter case remains a relaxation of DCMCF, since $(\sum_{k \in K} b^k) \gamma_n^*$ is a valid upper bound on the load of each arc $a \in A$.

\medskip

If this process is run for $1+|A|$ iterations and the condition is still not satisfied (i.e., no direct success), then there exists at least one arc $a$ for which the upper bound has been updated twice. In other words, we have $u_a^n > \alpha (\sum_{k \in K} b^k) \gamma_n^*$ and $u_a^{n+l} > \alpha (\sum_{k \in K} b^k) \gamma_{n+l}^*$ for some $a \in A$, and some $l \leq 1+ |A|$. Hence, the following inequalities hold: $\alpha (\sum_{k \in K} b^k) \gamma_{n+l}^* < u_a^{n+l} \leq u_a^{n+1} \leq (\sum_{k \in K} b^k) \gamma_n^*$, implying that $\frac{\gamma_{n+l}^*}{\gamma_{n}^*} \leq \frac{1}{\alpha}$.  
Since the $\gamma_{n}^*$ values cannot fall below the lower bound   $\gamma_{low}$ defined in \eqref{eq:low5},  
 the iterative process must terminate and generate a relaxation of $\mbox{DCMCF}_\text{env}$ that satisfies the required conditions within a maximum of
$
(|A|+1) \left\lceil \log_{\alpha} \left( \frac{\gamma_0^*}{\gamma_{low}} \right) \right\rceil 
$ iterations. Since $\gamma_0^*$ is less than the trivial upper bound $\frac{\sum_{a \in A} c_a}{\sum_{k \in K} b^k}$, the number of iterations is polynomially bounded in the size of the instance, ending the proof.
\end{proof}

\subsection{Proof of Theorem \ref{theorem:approx:h_pos_guarantee_prior}} \label{proof:approx:h_pos_guarantee_prior}

\begin{proof}

Given a DCMCF instance and any constant $\alpha > 1$, Lemma~\ref{lemma:approx:alpha} ensures that one can construct a relaxation $\mathrm{DCMCF}_{\text{env}}$ with optimal value $\gamma^*$ such that, for every $a \in A$, 
$u_a \leq \alpha (\sum_{k \in K} b^k) \gamma^*$.
  
Proposition \ref{prop:approx:h_pos_guarantee_posterior}  shows that there exists a solution of  DCMCF with objective value $\bar \gamma^* \geq \frac{b \gamma^*}{u (|P|-1)+b\gamma^*} \gamma^*$.
Since $\frac{b \gamma^*}{u (|P|-1)+b\gamma^*} \gamma^*$ decreases with $u=\max_{a \in A} u_a \leq \alpha (\sum_{k \in K} b^k) \gamma^* \leq \alpha |K| \gamma^* \max_{k \in K}b^k$, we have: 
\begin{small_equation*}
    \bar \gamma^*  \geq \frac{b \gamma^*}{u (|P|-1)+b\gamma^*} \gamma^* \geq \frac{b \gamma^*}{\alpha |K| \gamma^*  (|P|-1)  \max_{k \in K} b^k+b\gamma^*} \gamma^* \geq \frac{1}{\alpha \frac{\max_{k \in K} b^k}{b} |K|(|P|-1)+1} \gamma^*=\frac{1}{\alpha \beta |K|(|P|-1)+1} \gamma^*
\end{small_equation*}
ending the  proof.
\end{proof}

\end{appendices}

\begin{singlespace}
\small
\bibliographystyle{abbrvnat}
\bibliography{biblio}

@article{yen1971,
  title={Finding the k shortest loopless paths in a network},
  author={Yen, Jin Y},
  journal={management Science},
  volume={17},
  number={11},
  pages={712--716},
  year={1971},
  publisher={Informs}
}

@article{garey1975complexity,
  title={Complexity results for multiprocessor scheduling under resource constraints},
  author={Garey, Michael R and Johnson, David S.},
  journal={SIAM journal on Computing},
  volume={4},
  number={4},
  pages={397--411},
  year={1975},
  publisher={SIAM}
}

@book{kleinrock1975theory,
  title={Theory, volume 1, queueing systems},
  author={Kleinrock, Leonard},
  year={1975},
  publisher={Wiley-interscience}
}

@article{mccormick1976computability,
  title={Computability of global solutions to factorable nonconvex programs: Part I—Convex underestimating problems},
  author={McCormick, Garth P},
  journal={Mathematical programming},
  volume={10},
  number={1},
  pages={147--175},
  year={1976},
  publisher={Springer}
}

@book{ahuja1988network,
  title={Network flows},
  author={Ahuja, Ravindra K and Magnanti, Thomas L and Orlin, James B},
  year={1988},
  publisher={Cambridge, Mass.: Alfred P. Sloan School of Management, Massachusetts}
}

@article{sherali1990explicit,
  title={An explicit characterization of the convex envelope of a bivariate bilinear function over special polytopes},
  author={Sherali, Hanif D and Alameddine, Amine},
  journal={Annals of Operations Research},
  volume={25},
  number={1},
  pages={197--209},
  year={1990},
  publisher={Springer}
}

@article{ouorou2000survey,
  title={A survey of algorithms for convex multicommodity flow problems},
  author={Ouorou, Adamou and Mahey, Philippe and Vial, J-Ph},
  journal={Management science},
  volume={46},
  number={1},
  pages={126--147},
  year={2000},
  publisher={INFORMS}
}

@inproceedings{yen2001,
  title={Near-optimal delay constrained routing in virtual circuit networks},
  author={Yen, Hong-Hsu and Lin, FY-S},
  booktitle={Proceedings IEEE INFOCOM 2001. Conference on Computer Communications. Twentieth Annual Joint Conference of the IEEE Computer and Communications Society (Cat. No. 01CH37213)},
  volume={2},
  pages={750--756},
  year={2001},
  organization={IEEE}
}

@inproceedings{beker2003,
  title={Off-line MPLS layout design and reconfiguration: Reducing complexity under dynamic traffic conditions},
  author={Beker, Sergio and Kofman, Daniel and Puech, Nicolas},
  booktitle={International Network Optimization Conference (INOC)},
  pages={61--66},
  year={2003}
}

@article{ben2006mathematical,
  title={Mathematical models of the delay constrained routing problem},
  author={Ben-Ameur, Walid and Ouorou, Adam},
  journal={Algorithmic Operations Research},
  volume={1},
  number={2},
  pages={94--103},
  year={2006},
  publisher={The Electronic Text Centre at the University of New Brunswick Libraries}
}

@techreport{duhamel2007augmented,
  title={An augmented lagrangean approach for the QoS constrained routing problem},
  author={Duhamel, Christophe and Mahul, Antoine},
  institution={LIMOS, UMR6158-CNRS},
  year={2007}
}

@article{wang2008,
  title={An overview of routing optimization for internet traffic engineering},
  author={Wang, Ning and Ho, Kin Hon and Pavlou, George and Howarth, Michael},
  journal={IEEE Communications Surveys \& Tutorials},
  volume={10},
  number={1},
  pages={36--56},
  year={2008},
  publisher={IEEE}
}

@article{truffot2010k-splittable,
  title={k-Splittable delay constrained routing problem: A branch-and-price approach},
  author={Truffot, J{\'e}r{\^o}me and Duhamel, Christophe and Mahey, Philippe},
  journal={Networks},
  volume={55},
  number={1},
  pages={33--45},
  year={2010},
  publisher={Wiley Online Library}
}

@article{hijazi2010outer,
  title={An outer-inner approximation for separable MINLPs},
  author={Hijazi, Hassan and Bonami, Pierre and Ouorou, Adam},
  journal={LIF},
  year={2010}
}

@article{orlowski2010sndlib,
  title={SNDlib 1.0—Survivable network design library},
  author={Orlowski, Sebastian and Wessaly, Roland and Pioro, Michal and Tomaszewski, Artur},
  journal={Networks: An International Journal},
  volume={55},
  number={3},
  pages={276--286},
  year={2010},
  publisher={Wiley Online Library}
}

@article{hijazi2012,
  title={Mixed-integer nonlinear programs featuring “on/off” constraints},
 journal={Computational Optimization and Applications},
  volume={52},
  number={2},
  pages={537--558},
  year={2012},
  publisher={Springer},
author={Hijazi, Hassan and Bonami, Pierre and Cornu{\'e}jols, G{\'e}rard and Ouorou, Adam},
}

@article{hijazi2013robust,
  title={Robust delay-constrained routing in telecommunications},
  author={Hijazi, Hassan and Bonami, Pierre and Ouorou, Adam},
  journal={Annals of Operations Research},
  volume={206},
  number={1},
  pages={163--181},
  year={2013},
  publisher={Springer}
}

@article{fortz2017,
  title={Models for the piecewise linear unsplittable multicommodity flow problems},
  author={Fortz, Bernard and Gouveia, Lu{\'\i}s and Joyce-Moniz, Martim},
  journal={European Journal of Operational Research},
  volume={261},
  number={1},
  pages={30--42},
  year={2017},
  publisher={Elsevier}
}

@article{bonami2017maximum,
  title={Maximum flow under proportional delay constraint},
  author={Bonami, Pierre and Mazauric, Dorian and Vax{\`e}s, Yann},
  journal={Theoretical Computer Science},
  volume={689},
  pages={58--66},
  year={2017},
  publisher={Elsevier}
}

@article{locatelli2018convex,
  title={Convex envelopes of bivariate functions through the solution of KKT systems},
  author={Locatelli, Marco},
  journal={Journal of global optimization},
  volume={72},
  number={2},
  pages={277--303},
  year={2018},
  publisher={Springer}
}

@article{papadimitriou2024augmented,
  title={An augmented Lagrangian method for nonconvex composite optimization problems with nonlinear constraints},
  author={Papadimitriou, Dimitri and V{\~u}, Bằng C{\^o}ng},
  journal={Optimization and Engineering},
  volume={25},
  number={4},
  pages={1921--1990},
  year={2024},
  publisher={Springer}
}

@inproceedings{dong2020tina,
  title={TINA: A fair inter-datacenter transmission mechanism with deadline guarantee},
  author={Dong, Xiaodong and Li, Wenxin and Zhou, Xiaobo and Li, Keqiu and Qi, Heng},
  booktitle={IEEE INFOCOM 2020-IEEE Conference on Computer Communications},
  pages={2017--2025},
  year={2020},
  organization={IEEE}
}

@article{beraudsudreau2026multi,
  title={On the Multi-Commodity Flow With Convex Objective Function: Column-Generation Approaches},
  author={Beraud-Sudreau, Guillaume and L{\'e}tocart, Lucas and Magnouche, Youcef and Martin, S{\'e}bastien},
  journal={Networks},
  year={2026},
  publisher={Wiley Online Library}
}

@TECHREPORT{B01,
  author = {D. Bienstock},
  title = {Potential function methods for approximatively solving linear programs: theory and practice},
  institution = {CORE Lecture Series Monograph},
  year = {2001}
}

@ARTICLE{BR02,
  author = {D. Bienstock and O. Raskina},
  title = {Asymptotic analysis of the flow deviation method for the maximum concurrent flow problem},
  journal = {Mathematical Programming, Ser. B},
  year = {2002},
  volume = {91},
  pages = {479--492}
}

@ARTICLE{LR99,
  author = {T. Leighton and S. Rao},
  title = {Multicommodity max-flow min-cut theorems and their use in designing approximation algorithms},
  journal = {JACM},
  year = {1999},
  volume = {46},
  pages = {215--245}
}

@TECHREPORT{MS86,
  author = {D.W. Matula and F. Shahrokhi},
  title = {The maximum concurrent flow problem and sparsest cuts},
  institution = {Southern Methodist Univ. Dallas},
  year = {1986}
}

@INCOLLECTION{S97,
  author = {D.B. Shmoys},
  title = {Cut Problems and Their Application to Divide-and-Conquer},
  booktitle = {Approximation Algorithms for NP-Hard Problems},
  publisher = {PWS Publishing Company},
  editor = {D.S. Hochbaum},
  year = {1997},
  pages = {192--235}
}

@ARTICLE{KAR08,
  author = {G. Karakostas},
  title = {Faster approximation schemes for fractional multicommodity flow problems},
  journal = {ACM Transactions on Algorithms},
  year = {2008},
  volume = {4},
  number={1},
  pages                = {1-17}
}

@article{GargK07,
  author    = {N. Garg and
               J. K{\"o}nemann},
  title     = {Faster and Simpler Algorithms for Multicommodity Flow and
               Other Fractional Packing Problems},
  journal   = {SIAM J. Comput.},
  volume    = {37},
  number    = {2},
  year      = {2007},
  pages     = {630-652}
}

@article{Fleischer00,
  author    = {L. Fleischer},
  title     = {Approximating Fractional Multicommodity Flow Independent
               of the Number of Commodities},
  journal   = {SIAM J. Discrete Math.},
  volume    = {13},
  number    = {4},
  year      = {2000},
  pages     = {505-520},
 }

@article{Bauguion,
author = {Bauguion, Pierre-Olivier and Ben-Ameur, Walid and Gourdin, Eric},
title = {Efficient algorithms for the maximum concurrent flow problem},
journal = {Networks},
volume = {65},
number = {1},
pages = {56-67},
doi = {https://doi.org/10.1002/net.21572},
url = {https://onlinelibrary.wiley.com/doi/abs/10.1002/net.21572},
eprint = {https://onlinelibrary.wiley.com/doi/pdf/10.1002/net.21572},
year = {2015}
}

@book{Pioro,
  author       = {Michal Pi{\'{o}}ro and
                  Deepankar Medhi},
  title        = {Routing, flow, and capacity design in communication and computer networks},
  publisher    = {Morgan Kaufmann},
  year         = {2004},
  isbn         = {978-0-12-557189-0},
  bibsource    = {dblp computer science bibliography, https://dblp.org}
}

@article{Loca16,
	Author = {Locatelli, Marco},
	Da = {2016/12/01},
	Journal = {Journal of Global Optimization},
	Number = {4},
	Pages = {629--668},
	Title = {Polyhedral subdivisions and functional forms for the convex envelopes of bilinear, fractional and other bivariate functions over general polytopes},
	Ty = {JOUR},
	Volume = {66},
	Year = {2016}}

@article{2025scip,
  title={The SCIP Optimization Suite 10.0},
  author={Hojny, Christopher and Besan{\c{c}}on, Mathieu and Bestuzheva, Ksenia and Borst, Sander and Dion{\'\i}sio, Jo{\~a}o and Ehls, Johannes and Eifler, Leon and Ghannam, Mohammed and Gleixner, Ambros and G{\"o}{\ss}, Adrian and others},
  journal={arXiv preprint arXiv:2511.18580},
  year={2025}
}
%\printbibliography
\end{singlespace}
\end{document}